\documentclass[draftcls,a4paper,onecolumn,12pt]{IEEEtran}
\usepackage{graphicx}
\usepackage{amsmath}
\usepackage{amssymb}
\usepackage[colorlinks]{hyperref}
\usepackage{theorem}

\newif\ifambproofs
\newif\ifdraft
\newif\iffigures

\drafttrue
\figurestrue
\ambproofsfalse

\graphicspath{{./}{/home/jungp/TeX/Figures/}}
\theoremstyle{plain}
\newtheorem{mytheorem}{Theorem}
\newtheorem{mycorollary}[mytheorem]{Corollary}
\newtheorem{mylemma}[mytheorem]{Lemma}
\theoremstyle{definition}
\newtheorem{mydefinition}[mytheorem]{Definition}
\newtheorem{myremark}{Remark}
\newenvironment{myproof}{{\it Proof:$\quad$}}{$\quad$\QED \\}
\newenvironment{myproblem}{\\[1em]{\bf Problem: }\it}{\\[1em]}

\usepackage{graphics,graphicx,color}

\catcode`\@=11
\@ifpackageloaded{beamer}{
}{
}
\catcode`\@=12

\newcommand{\OpClass}{\text{OP}}

\newcommand{\twistconv}{\,\natural\,}

\newcommand{\DotReal}[2]{\langle#1,#2\rangle}

\newcommand{\asy}{\zeta}

\newcommand{\sympl}{\eta}

\newcommand{\Trace}[1]{\,\mathbf{Tr}{#1}}
\newcommand{\setZ}{{\mathbb{Z}}}

\newcommand{\schattenclass}{\mathcal{T}}
\newcommand{\settraceclass}{\schattenclass_1}
\newcommand{\without}{\setminus}

\newcommand{\defeq}{\overset{\text{def}}{=}}

\newcommand{\HH}{{\mathcal{H}}}
\newcommand{\BH}{\boldsymbol{\HH}}


\newcommand{\BHSpread}{\boldsymbol{\Sigma}}
\newcommand{\BHWeyl}{\boldsymbol{L}}
\newcommand{\BHScat}{\boldsymbol{C}}



\newcommand{\Leb}[1]{\mathcal{L}_{#1}}
\newcommand{\Ltwo}{\Leb{2}}

\newcommand{\Amb}{{\mathbf{A}}}

\newcommand{\Real}[1]{{\text{Re}}\{#1\}}

\newcommand{\Shift}{{\boldsymbol{S}}}

\newcommand{\Schwarz}{{\mathcal{S}}}

\newcommand{\Sinc}{\text{sinc}}



\newcommand{\Fourier}{\mathcal{F}}
\newcommand{\sFourier}{\mathcal{F}_s}
\newcommand{\nFourier}{\mathbf{F}}

\newcommand{\support}[1]{\text{\rm supp}\,(#1)}

\newcommand{\esup}{\text{\rm ess sup\,\,}}
\newcommand{\einf}{\text{\rm ess inf\,\,}}

\newcommand{\Reals}{\mathbb{R}}
\newcommand{\RealsPlus}{\mathbb{R}_{+}}
\newcommand{\Complexes}{\mathbb{C}}

\newcommand{\taumax}{{\tau_d}}



\renewcommand{\DotReal}[2]{#1\cdot #2}
\newcommand{\erf}{\text{erf}}

\begin{document}
\title{
  On the Approximate Eigenstructure of Time--Varying Channels 
}
\author{Peter Jung\\
  Fraunhofer German-Sino Lab for Mobile Communications (MCI) and the Heinrich-Hertz Institute\\[.1em]
  \small{jung@hhi.fraunhofer.de}}
\maketitle
\begin{abstract}
   In this article we consider the approximate description of 
   doubly--dispersive channels by its symbol. We focus on channel operators   
   with compactly supported spreading, which are widely used to represent
   fast fading multipath communication channels. The concept of approximate 
   eigenstructure is introduced, which measures the accuracy $E_p$ of the
   approximation of the channel operation as a pure multiplication in a given $\Leb{p}$--norm.   
   Two variants of such an approximate Weyl symbol calculus are studied, which have
   important applications in several models for time--varying mobile channels. Typically, such channels have
   random spreading functions (inverse Weyl transform) defined on a common support $U$ of
   finite non--zero size such
   that approximate eigenstructure has to be measured with respect to certain norms of the
   spreading process. We derive several explicit relations to the size $|U|$ of the support.
   We show that the characterization of
   the ratio of $E_p$ to some $\Leb{q}$--norm of the spreading function is related 
   to weighted norms of ambiguity and Wigner functions.
   We present the connection to localization operators and give new bounds on the ability of localization of 
   ambiguity functions and Wigner functions in $U$. 
   Our analysis generalizes and improves recent results
   for the case $p=2$ and $q=1$. 
\end{abstract}

\begin{keywords}
   Doubly--dispersive channels, time--varying channels, Weyl calculus, Wigner function, ambiguity function
\end{keywords}

%
\section{Introduction}
Optimal signaling through linear time--varying (LTV) channels is a challenging
task for future communication systems. For a particular realization 
of the time--varying channel operator the transmitter and receiver
design, which avoids interference
is related to ''eigen--signaling''. Eigen--signaling simplifies much of the
information theoretic treatment of communication in dispersive channels.
However, it is well known that
for an ensemble of channels, which are dispersive in time and frequency
such a joint diagonalization can not be achieved because the eigen--decompositions can 
differ from one to another channel realization.
Several approaches like for example the ''basis expansion
model'' (BEM) \cite{giannakis:bem} and the canonical channel representation \cite{sayeed:jointdiversity}
are proposed to describe eigen--signaling in some approximate sense.
Then a necessary prerequisite is the characterization
of remaining approximation errors.


A typical scenario commonly encountered 
in wireless communication, is signaling through a random  time-varying and frequency selective (doubly--dispersive)
channel, which in general is represented by a pseudo-differential operator $\BH$. The abstract random
channel operating on an input signal $s:\Reals\rightarrow\Complexes$ can be expressed (at least in the weak sense)
in the form of a random kernel, symbol or spreading function.
The signal $r:\Reals\rightarrow\Complexes$
at the time instant $t$ at the output of the time--varying channel is then:
\begin{equation*}
   r(t)=(\BH s)(t)
\end{equation*}
It is a common assumption that knowledge of $\BH$ at the receiver can be obtained 
up to certain accuracy by channel estimation, which will
allow for coherent detection. However, channel knowledge at the transmitter  
simplifies equalization and detection complexity at the receiver and can increase the
link performance.
It can be used to perform a diagonalizing operation (i.e. eigen--signaling)
and allocation of resources in this domain (e.g. power allocation).
We shall call the first part of this description from now on as the eigenstructure of $\BH$.
Signaling through classes of channels having common eigenstructure could be,
in principle, interference--free  and would allow for simple information recovering algorithms
based on the received signal $r(t)$. However, for $\BH$ being random, random eigenstructure has to be 
expected in general such that the design of the transmitter and the receiver
has to be performed jointly for ensembles of channels having different eigenstructures. Nevertheless, 
interference then can not be avoided in the communication chain. For such interference scenarios it
is important to have bounds on the distortion of a particular selected 
signaling scheme. Refer for example to \cite{durisi:wssus:capacity} 
for a recent application in information theory.

Initial results in this field can be found 
in the literature on pseudo-differential operators 
\cite{kohn:pdo:algebra,folland:harmonics:phasespace} where the overall operator
was split up into a main part to be studied and a ''small'' operator to be controlled.
More recent results with direct application to time--varying channels
were obtained by Kozek \cite{kozek:thesis,kozek:eigenstructure}
and Matz \cite{matz:thesis} which resemble the notion of underspread
channels. They investigated the approximate symbol calculus of pseudo-differential
operators in this context and derived bounds for the $\Leb{2}$--norm
of the distortion which follow from the approximate product rule 
in terms of Weyl symbols. We will present more details on this approach 
in Section \ref{subsec:approxeigen:symbolcalculus}.
Controlling this approximation
intimately scales with the ''size'' of the spreading of the contributing 
channel operators. For operators with compactly supported spreading
such a scale is $|U|$ -- the size of the spreading support $U$.
Interestingly this approximation behavior breaks 
down in their framework at a certain critical size. Channels 
below this critical size are called according to their terminology underspread and
otherwise overspread.
However, we found that previous bounds can be improved and generalized
in several directions by considering the problem of approximate eigenstructure
from another perspective, namely investigating directly the $\Leb{p}$--norm $E_p$ 
of the error $\BH s-\lambda r$
for well known choices of $\lambda$. We shall focus on the case where
$\lambda$ is the symbol of the operator $\BH$ and on the important
case where $\lambda$ is the orthogonal distortion which can be understood
as the $\Leb{2}$--minimizer.
We believe that extensions to $p\neq2$ are important when further statistical
properties of the spreading process of the random channel operator are at hand\footnote{%
  We provide further motivation and arguments in Remark \ref{rem:approxeigen:statisticalmodel} at the end of the paper.}.
Our approach will also show the connection to well known fidelity and
localization criteria related to pulse design 
\cite{kozek:thesis,jung:wssuspulseshaping,jung:isit06}. In particular, 
the latter is also related to the notion of localization operators \cite{daubechiez:tflocalization:geophase}.
The underspread property of doubly--dispersive channels occurs also in the context
of channel measurement and identification \cite{kailath:ltvmeasurement}. 
In addition refer to the following recent articles \cite{kozek:identification:bandlimited,Pfander08:opid} for
rigorous treatments of channel identification
based on Gabor (Weyl--Heisenberg) frame theory. The authors connect
the critical time--frequency sampling density immanent in this theory to
the stability of the channel measurement. A relation between these different
notions of underspreadness has to be expected but is beyond  
the scope of this paper.

The paper is organized as follows: In Section \ref{sec:timefrequencyanalysis} we shall give an 
introduction into the 
basics from time--frequency analysis including the Weyl correspondence and
the spreading representation of doubly--dispersive channels. In Section \ref{sec:mainresults} of the paper we shall
consider the problem of approximate eigenstructure for operators with
spreading functions, which are supported on a common set $U$ in the time--frequency
plane having non--zero and finite
Lebesgue measure $|U|$. We present the approach for $E_2$  followed
by a summary of the main results of our analysis on $E_p$. 
The detailed analysis for $E_p$ will be presented in Section \ref{sec:genanalysis}. Finally,
Section  \ref{sec:numerical} contains a numerical verification of our results.

\subsection{Notation and Some Definitions}
We present certain notation and definitions that shall be used through the paper.
For $1\leq p<\infty$ and functions $f:\Reals^n\rightarrow\Complexes$  
the functionals $\lVert f\rVert_p:=\left(\int|f(t)|^pdt\right)^{1/p}$
are then usual notion of $p$--norms
($dt$ is the Lebesgue measure on $\Reals^n$). 
Furthermore for $p=\infty$ is
$\lVert f\rVert_\infty:=\esup |f(t)|$.
If $\lVert f\rVert_p$ is finite $f$ is said to be in $\Leb{p}(\Reals^n)$. 
The inner product $\langle\cdot,\cdot\rangle$
on the Hilbert space $\Leb{2}(\Reals^n)$ is given as
$\langle x,y\rangle:=\int_{\Reals^n} \bar{x}(t)y(t)dt$ where $\bar{x}(t)$ denotes
complex conjugate of $x(t)$.
A particular dense subset of $\Leb{p}(\Reals^n)$ is 
the class of Schwartz functions $\Schwarz(\Reals^n)$ (infinite
differentiable rapidly decreasing functions). 
The notation $p'$ denotes always the dual index of $p$, i.e.
$1/p+1/p'=1$ with $p'=\infty$ if $p=1$ (and the reverse).

%
\section{Time--Frequency Analysis}
\label{sec:timefrequencyanalysis}
%

\subsection{Phase Space Displacements and Ambiguity Functions }
Several physical properties of time--varying channels (like delay and Doppler spread)  are in 
general related
to a time--frequency view on operators $\BH$. 
Time-frequency representations itself are important tools in signal analysis, physics and 
many other scientific areas. Among them are the 
Woodward cross ambiguity function \cite{woodward:probinfradar}
and the Wigner distribution.
Ambiguity functions can be understood as 
inner product representations of time--frequency shift operators.
More generally, a displacement (or shift) operator for
functions $f:\Reals^n\rightarrow\Complexes$ can be
defined as:
\begin{equation}
   (\Shift_{\mu} f)(x):=e^{i2\pi \DotReal{\mu_2}{x}}f(x-\mu_1)
   \label{eq:weyl:shift:shiftoperator}
\end{equation}
where $\mu=(\mu_1,\mu_2)\in\Reals^{2n}$ and $\mu_1,\mu_2\in\Reals^n$. In general
$\Reals^{2n}$ is called phase space.
Later on
we shall focus on $n=1$, where we have that the functions $f$ are signals in time and $\mu$ is a displacement in
time and frequency. Then the phase space is also called \emph{time--frequency plane} and 
the operators $\Shift_\mu$ are \emph{time--frequency shift operators}.
There is an ambiguity as to which displacement should be performed first where
\eqref{eq:weyl:shift:shiftoperator} corresponds to the separation 
$\Shift_\mu=\Shift_{(0,\mu_2)}\Shift_{(\mu_1,0)}$.
However, it is well known that
a generalized view can be achieved by considering so--called $\alpha$-generalized displacements: 
\begin{equation}
   \Shift_{\mu}^{(\alpha)}:=
   \Shift_{(0,\mu_2(\frac{1}{2}+\alpha))}
   \Shift_{(\mu_1,0)}
   \Shift_{(0,\mu_2(\frac{1}{2}-\alpha))}
   =e^{-i2\pi(1/2-\alpha)\asy(\mu,\mu)}\Shift_\mu
   \label{eq:weyl:shift:alphageneralized}
\end{equation}
where $\asy(\mu,\nu)=\DotReal{\mu_1}{\nu_2}$ (inner product on $\Reals^n$)
and then set $\Shift_\mu=\Shift_\mu^{(1/2)}$. Usually
$\alpha$ is called \emph{polarization}.
The operators in \eqref{eq:weyl:shift:alphageneralized} act isometrically on all $\Leb{p}(\Reals^n)$, 
hence are unitary on $\Leb{2}(\Reals^n)$. 
Furthermore, they establish\footnote{up to unitary equivalence}
unitary representations (Schr\"odinger representation) of the Weyl--Heisenberg
group on $\Ltwo(\Reals^n)$ (see for example \cite{folland:harmonics:phasespace}). 
In physics it is common to choose the most symmetric case $\alpha=0$ 
and the operators are usually  called  Weyl operators or
Glauber displacement operators. 
If we define the symplectic form as
$\sympl(\mu,\nu):=\asy(\mu,\nu)-\asy(\nu,\mu)$,
we have the following well known \emph{Weyl commutation relation}:
\begin{equation}
   \begin{split} 
      \Shift_{\mu}^{(\alpha)}\Shift_{\nu}^{(\beta)} =
      e^{-i2\pi \sympl(\mu,\nu)}\Shift_{\nu}^{(\beta)}\Shift_{\mu}^{(\alpha)}
   \end{split} 
   \label{eq:weyl:shift:weylcommrelation}
\end{equation}
for arbitrary  polarizations $\alpha$ and $\beta$.
In this way a generalized (cross) ambiguity function can be defined as:
\begin{equation}
   \Amb^{(\alpha)}_{g\gamma}(\mu)\defeq\langle g,\Shift_\mu^{(\alpha)}\gamma\rangle
   =\int_{\Reals^n}\bar{g}(x+(\frac{1}{2}-\alpha)\mu_1)\gamma(x-(\frac{1}{2}+\alpha)\mu_1)
   e^{i2\pi\DotReal{\mu_2}{x}}dx
   \label{eq:tfanalysis:crossamb}
\end{equation}
The function $\Amb^{(1/2)}_{g\gamma}$ is also known as the 
\emph{Short--time Fourier transform} (sometimes also windowed Fourier transform or
Fourier--Wigner transform) of $g$ 
with respect to a window $\gamma$. This function is continuous for 
$g\in\Schwarz(\Reals^n)$ and $\gamma\in\Schwarz'(\Reals^n)$ (the dual of
$\Schwarz(\Reals^n)$, i.e. the tempered distributions).
Well known relations of these functions, which follow
directly from definition \eqref{eq:tfanalysis:crossamb} are:
\begin{equation}
   |\Amb^{(\alpha)}_{g\gamma}(\mu)|=|\langle g,\Shift_\mu^{(\alpha)}\gamma\rangle|\leq
   \lVert g\rVert_2\lVert\gamma\rVert_2=\lVert\Amb_{g\gamma}^{(\alpha)}\rVert_2
   \label{eq:tfanalysis:crossamb:properties}
\end{equation}
where the right hand side (rhs) is sometimes also called the radar uncertainty principle. For particular 
weight functions $m:\Reals^{2n}\rightarrow\RealsPlus$ the weighted $p$--norms
$\lVert\Amb_{g\gamma}^{(\alpha)}m\rVert_p$ are also called the modulation norms
$\lVert\gamma\rVert_{M_{m}^{p,p}}$ of $\gamma$ with respect to Schwartz 
function $g\in\Schwarz(\Reals^n)$ ($M_m^{p,p}$ is then corresponding modulation space \cite{feichtinger:modspaces}).
Let the symplectic Fourier transform $\sFourier F$ of a function 
$F:\Reals^{2n}\rightarrow\Complexes$ be defined as: 
\begin{equation}
   (\sFourier F)(\mu)=\int_{\Reals^{2n}} e^{-i2\pi\sympl(\nu,\mu)} F(\nu)d\nu
   \label{eq:tfanalysis:symplectic:fourier}
\end{equation}
The symplectic Fourier transform of the (cross) ambiguity function $\sFourier\Amb_{g\gamma}^{(\alpha)}$ 
is called the (cross) Wigner distribution of $g$ and $\gamma$ in polarization $\alpha$.

\subsection{Weyl Correspondence and Spreading Representation}

The operational meaning of pseudo-differential operators can 
be stated  with a (distributional) kernel, coordinate-based in the form of 
infinite matrices or in some algebraic manner (see for example
\cite[Chapter 14]{grochenig:gaborbook}). The kernel based 
description is usually written in a form like:
\begin{equation}
   (\BH \gamma)(t)=\int_{\Reals^n} h(t,t') \gamma(t')dt'
\end{equation}
with a kernel $h:\Reals^{2n}\rightarrow\Complexes$ (for two Schwartz functions
$\gamma,g\in\Schwarz(\Reals^n)$ the kernel $h$ exists even as a tempered distribution, i.e.
Schwartz kernel theorem states $h\in\Schwarz'(\Reals^{2n})$ with
$\langle g,\BH\gamma\rangle=\langle h,\bar{g}\otimes\gamma\rangle$,
see for example \cite[Thm. 14.3.4]{grochenig:gaborbook}).
However, the abstract description of $\BH$ as superpositions of time--frequency shifts is 
important and quite close to the physical modeling of time--varying channels.
We will adopt this time-frequency framework to describe the channel operators.
Let us denote with  $\schattenclass_\infty$ the set of compact operators, i.e. for 
$X\in\schattenclass_\infty$ holds $X=\sum_k s_k\langle x_k,\cdot\rangle y_k$ with
singular values $\{s_k\}$ and two orthonormal bases (singular functions) $\{x_k\}$ and $\{y_k\}$. 
For $p<\infty$ the  $p$th Schatten class is the set of operators
$\schattenclass_p:=\{X\,\,|\,\,\lVert X\rVert^p_p:=\Trace{((X^*X)^{p/2})}=\sum_k |s_k|^p<\infty\}$
where $\Trace(\cdot)$ is the usual meaning of the trace (e.g. evaluated 
in a particular basis).
Then $\schattenclass_p$ for $1\leq p<\infty$ are Banach spaces and 
$\schattenclass_1\subset\schattenclass_p\subset\schattenclass_\infty$ 
(see for example \cite{reed:simon:fourier}). 
The sets $\settraceclass$ and $\schattenclass_2$ 
are called trace class and Hilbert--Schmidt operators. Hilbert--Schmidt operators
form itself a Hilbert space with inner product
$\langle Y,X\rangle_{\schattenclass_2}:=\Trace(Y^*X)$.
For $X\in\settraceclass$ it holds by
properties of the trace that
$|\langle Y,X\rangle_{\schattenclass_2}|\leq\lVert X\rVert_1\lVert Y\rVert$, where
$\lVert\cdot\rVert$ denotes the operator norm.
Hence for $Y=\Shift_\mu^{(\alpha)}$ given by \eqref{eq:weyl:shift:alphageneralized} one can 
define analogously to the ordinary Fourier transform \cite{daubechies:integraltransform,grossmann:wignerweyliso}
a mapping $\settraceclass\rightarrow\Leb{2}(\Reals^{2n})$ via:
\begin{equation}
   \BHSpread_{X}^{(\alpha)}(\mu)\defeq
   \langle\Shift_\mu^{(\alpha)},X\rangle_{\schattenclass_2}   
   \label{eq:tfanalysis:noncomm:fourier}
\end{equation}
In essence, the kernel $h$ of the channel operator $\BH$ is given as
the (inverse) Fourier transform in the $\mu_2$ variable (see for example
\cite[Chapter 14]{grochenig:gaborbook}).
Note that $\BHSpread_{X}^{(\alpha)}(0)=\Trace(X)$ and 
$|\BHSpread_{X}^{(\alpha)}(\mu)|\leq\lVert X\rVert_1$ 
(because $\lVert\Shift_\mu^{(\alpha)}\rVert=1$). The function
$\BHSpread_{X}^{(\alpha)}$ is sometimes
called the ''non--commutative'' Fourier transform \cite{holevo:propquantum}, 
characteristic function, inverse Weyl transform \cite{weyl:theoryofgroups:quantum} 
or $\alpha$--generalized \emph{spreading function} of $X$ \cite{kozek:thesis,kozek:generalweyl}.
From \eqref{eq:weyl:shift:alphageneralized} it follows that
$\BHSpread_{X}^{(\alpha)}=e^{i2\pi(1/2-\alpha)\asy(\mu,\mu)}\BHSpread_{X}^{(1/2)}$.
\begin{mylemma}[Spreading Representation]
   Let $X\in\schattenclass_2$. Then there it holds:
   \begin{equation}
      X
      =\int_{\Reals^{2n}} \langle\Shift_\mu^{(\alpha)},X\rangle_{\schattenclass_2}\Shift_{\mu}^{(\alpha)} d\mu
      \label{eq:tfanalysis:linop:spreading}
   \end{equation}
   where the integral is meant in the weak sense\footnote{
     For $\lVert\BHSpread_{X}^{(\alpha)}\rVert_1<\infty$
     \eqref{eq:tfanalysis:linop:spreading} is a Bochner integral.
     Weak interpretation of \eqref{eq:tfanalysis:linop:spreading} as $\langle g,X\gamma\rangle$ 
     extents the meaning of this integral
     to tempered distributions
     \cite[Chapter 2]{folland:harmonics:phasespace} or 
     \cite[Chapter 14.3]{grochenig:gaborbook}.}. 
   \label{lemma:tfanalysis:linop:spreading}
\end{mylemma}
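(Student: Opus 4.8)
The plan is to read the claimed identity as the inversion (resolution of the identity) formula for the non--commutative Fourier transform $X\mapsto\BHSpread_X^{(\alpha)}$, and to derive it from the fact that this transform is an isometry of $\schattenclass_2$ into $\Leb{2}(\Reals^{2n})$. First I would remove the polarization from the problem: since $\Shift_\mu^{(\alpha)}=e^{-i2\pi(1/2-\alpha)\asy(\mu,\mu)}\Shift_\mu^{(1/2)}$ and, correspondingly, $\BHSpread_X^{(\alpha)}(\mu)=e^{i2\pi(1/2-\alpha)\asy(\mu,\mu)}\BHSpread_X^{(1/2)}(\mu)$, the two unimodular factors cancel in the product $\BHSpread_X^{(\alpha)}(\mu)\Shift_\mu^{(\alpha)}$. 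Hence the integrand in \eqref{eq:tfanalysis:linop:spreading} does not depend on $\alpha$, and it suffices to prove the statement for the symmetric case $\alpha=1/2$.

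The heart of the matter is the \emph{Parseval identity} $\langle\BHSpread_Y^{(1/2)},\BHSpread_X^{(1/2)}\rangle_{\Leb{2}}=\langle Y,X\rangle_{\schattenclass_2}$ for all $X,Y\in\schattenclass_2$. I would first check it on the dense subspace of finite--rank operators with Schwartz factors. For a rank--one operator $X=\langle g,\cdot\rangle\gamma$ with $g,\gamma\in\Schwarz(\Reals^n)$, a direct evaluation of the trace gives
\begin{equation*}
  \BHSpread_X^{(1/2)}(\mu)=\Trace{((\Shift_\mu^{(1/2)})^*X)}=\langle\Shift_\mu^{(1/2)}g,\gamma\rangle=\overline{\Amb_{\gamma g}^{(1/2)}(\mu)},
\end{equation*}
so the spreading function of a rank--one operator is the conjugate of a cross--ambiguity function. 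The norm identity contained in \eqref{eq:tfanalysis:crossamb:properties} then yields $\lVert\BHSpread_X^{(1/2)}\rVert_2=\lVert\gamma\rVert_2\lVert g\rVert_2=\lVert X\rVert_{\schattenclass_2}$, and polarizing Moyal's orthogonality relation for ambiguity functions in each of its two arguments upgrades this to the full sesquilinear identity on rank--one operators. Bilinearity extends it to all finite--rank operators, and continuity together with density extends it to all of $\schattenclass_2$.

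Granting the Parseval identity, the inversion formula follows by pairing weakly with an arbitrary $Y\in\schattenclass_2$. Using $\langle Y,\Shift_\mu^{(1/2)}\rangle_{\schattenclass_2}=\overline{\BHSpread_Y^{(1/2)}(\mu)}$ and the fact that both $\BHSpread_X^{(1/2)}$ and $\BHSpread_Y^{(1/2)}$ lie in $\Leb{2}(\Reals^{2n})$, the Cauchy--Schwarz inequality places the integrand in $\Lone(\Reals^{2n})$, so the weak integral is well defined, and
\begin{equation*}
  \langle Y,\int_{\Reals^{2n}}\BHSpread_X^{(1/2)}(\mu)\Shift_\mu^{(1/2)}d\mu\rangle_{\schattenclass_2}
  =\int_{\Reals^{2n}}\overline{\BHSpread_Y^{(1/2)}(\mu)}\,\BHSpread_X^{(1/2)}(\mu)\,d\mu
  =\langle\BHSpread_Y^{(1/2)},\BHSpread_X^{(1/2)}\rangle_{\Leb{2}}
  =\langle Y,X\rangle_{\schattenclass_2}.
\end{equation*}
Since $Y$ is arbitrary, the weak integral equals $X$; note that surjectivity of the transform is not needed here.

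The main obstacle is the Parseval/Moyal identity itself, as everything else is formal Hilbert--space bookkeeping. One may simply cite the orthogonality relations of the irreducible, square--integrable Schr\"odinger representation \cite{folland:harmonics:phasespace,grochenig:gaborbook}; alternatively it can be proved by hand from the integral \eqref{eq:tfanalysis:crossamb} by carrying out the $\mu_2$--integration first (where $\Amb_{\gamma g}^{(1/2)}$ is an ordinary Fourier transform, so Plancherel applies) and then a change of variables in the remaining integrals, which produces exactly the product of two $\Leb{2}(\Reals^n)$ inner products. A secondary point deserving care is the mode of convergence: as the footnote to the statement records, \eqref{eq:tfanalysis:linop:spreading} is in general only weakly convergent and becomes a genuine Bochner integral precisely when $\lVert\BHSpread_X^{(\alpha)}\rVert_1<\infty$.
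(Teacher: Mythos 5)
Your argument is correct. Note that the paper itself does not prove this lemma: it cites Holevo and Segal for a complete proof and only remarks that the extension from $\schattenclass_1$ to $\schattenclass_2$ follows from continuity of $X\mapsto\BHSpread_X^{(\alpha)}$ and density. Your route --- cancel the polarization phases, establish the Parseval/Moyal identity on rank--one operators with Schwartz factors via \eqref{eq:tfanalysis:crossamb:properties} and double polarization, extend by density, and then invert by weak pairing against an arbitrary $Y\in\schattenclass_2$ --- is the standard textbook derivation and is sound. Two small points worth making explicit. First, for $X\in\schattenclass_2$ the trace $\Trace{((\Shift_\mu^{(\alpha)})^*X)}$ is not defined directly (the product is only Hilbert--Schmidt), so $\BHSpread_X^{(\alpha)}$ itself is the continuous extension from $\schattenclass_1$; your density step therefore does double duty, defining the transform and proving Parseval simultaneously, which is fine but should be said. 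Second, the paper's footnote interprets the weak integral through matrix elements $\langle g,X\gamma\rangle$, i.e.\ through pairing with rank--one operators $Y=\langle\gamma,\cdot\rangle g$; since these span a dense subspace of $\schattenclass_2$ and $\langle Y,X\rangle_{\schattenclass_2}=\langle g,X\gamma\rangle$ for such $Y$, your formulation (pairing with all of $\schattenclass_2$) is equivalent to the paper's, so no gap arises there.
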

The extension  to the Hilbert--Schmidt operators $\schattenclass_2$ is
due to continuity of the mapping in \eqref{eq:tfanalysis:noncomm:fourier} and density of 
$\schattenclass_1$ in $\schattenclass_2$.
A complete proof of this lemma can be found in many books on Weyl calculus 
(for example matched to our notation in \cite[Chapter V]{holevo:propquantum} and \cite{Segal1963}).
Furthermore, the following 
important shift--property: 
\begin{equation}
   \BHSpread^{(\alpha)}_{\Shift_\mu(\beta)X\Shift_\mu^*(\beta)}(\nu)=
   e^{-i2\pi\sympl(\mu,\nu)}\BHSpread^{(\alpha)}_{X}(\nu)
   \label{eq:tfanalysis:noncomm:shiftproperty}
\end{equation}
can be verified easily using \eqref{eq:weyl:shift:alphageneralized} and \eqref{eq:weyl:shift:weylcommrelation}.
The composition of the symplectic Fourier transform
$\sFourier$ as defined in \eqref{eq:tfanalysis:symplectic:fourier} 
with the mapping in \eqref{eq:tfanalysis:noncomm:fourier}
establishes the so called \emph{Weyl correspondence} \cite{weyl:theoryofgroups:quantum}
in a particular polarization $\alpha$ (for this generalized approach in signal
processing see also \cite{kozek:generalweyl}). 
The function $\BHWeyl^{(\alpha)}_{X}=\sFourier\BHSpread^{(\alpha)}_X$ is called (generalized) \emph{Weyl symbol} of $X$.
The original Weyl symbol is $\BHWeyl^{(0)}_{X}$. The cases $\alpha=\tfrac{1}{2}$ and
$\alpha=-\tfrac{1}{2}$ are also known as Kohn--Nirenberg symbol (or Zadeh's 
time--varying transfer function) and Bello's 
frequency--dependent modulation function \cite{bello:wssus}. 
\if0
The operator $X$ can also be expressed in a kernel representation having a kernel 
$k_{X}(x,y)$ defined as:
\begin{equation}
   k_{X}(x,y)=(\Fourier_2 \nFourier^{(\alpha)}X)(y-x,p_\alpha(x,y))
   \label{eq:tfanalysis:kernel}
\end{equation}
with a particular polarization map $p_\alpha(x,y):=(\frac{1}{2}-\alpha)y+(\frac{1}{2}+\alpha)x$ and
$\Fourier_2$ denotes the Fourier transform in the second argument. 
\fi
The Parseval identities are:
\begin{equation}
   \langle X,Y\rangle_{\schattenclass_2}=
   \langle \BHSpread^{(\alpha)}_X,\BHSpread^{(\alpha)}_Y\rangle=
   \langle \BHWeyl_X^{(\alpha)},\BHWeyl_Y^{(\alpha)}\rangle
   \label{eq:tfanalysis:weyl:parceval}
\end{equation}
for $X,Y\in\schattenclass_2$.
For a rank--one operator $X=\langle\gamma,\cdot\rangle g$ it follows that
$\BHSpread^{(\alpha)}_X=\bar{\Amb}_{g\gamma}^{(\alpha)}$
such that \eqref{eq:tfanalysis:weyl:parceval} reads in this case as:
$\langle g,Y\gamma\rangle=\langle\bar{\Amb}_{g\gamma}^{(\alpha)},\BHSpread^{(\alpha)}_Y\rangle$.

%
\section{Problem Statement and Main Results}
\label{sec:mainresults}
%
In this section we will establish a concept, which we have called 
the ''approximate eigenstructure''. The latter are sets of signals
and coefficients which fulfill a particular property
of singular values and functions up to certain approximation error $E_p$
measured in $p$--norm.
Part \ref{subsec:approxeigen} motivates this concept for a single channel operator.
In part \ref{subsec:approxeigen:compacteigen} of this section we will then extent
this framework to a
time--frequency formulation for ''random''
time--varying channels with a common support of the spreading functions. 
We consider on how approximate eigenstructure behavior scales with the respect
to the particular spreading functions, which is the main problem of this paper.
Recent results in this direction are for $p=2$ and based on estimates on the approximate
product rule of Weyl symbols. We will give a general formulation of this approach 
and an overview over the known results for $E_2$ in part \ref{subsec:approxeigen:symbolcalculus} of this section. 
After that we present in \ref{subsec:approxeigen:directapproach}
a new (direct) approach for upperbounding $E_p$ yielding for our setup also
improved and more general estimates for $p=2$.
This part contains a summary of the main results, where the more 
detailed analysis is in Section \ref{sec:genanalysis}.

\subsection{The Approximate Eigenstructure}
\label{subsec:approxeigen}
It is a common approach to describe a given channel operator $\BH$ 
on a superposition of signals where $\BH$ act rather simple. 
As already mentioned, compact operators on a Hilbert space can be formally represented as
$\BH=\sum_{k=1}^\infty s_k\langle x_k,\cdot\rangle y_k$ 
with the singular values $\{s_k\}$ and singular functions
$\{x_k\}$ and $\{y_k\}$. 
Transmitting an information bearing complex data symbol $c$ for example in the form
of the signal $s=c\cdot x_k$ through $\BH$ we known that with proper channel measurement 
(obtaining $s_k$) the information can be coherently ''recovered'' from the estimate
$\langle y_k,\BH s\rangle=s_k\cdot c$. 
The crucial point here is that the transmitting device has to know and implement $\{x_k\}$ before.
However, in practical implementation $\{x_k\}$ is required to be fixed and structured to some sense
(for example in the form  of filterbanks).
But in general, also the singular functions depend explicitly on the operator 
$\BH$, i.e. they vary from one realization to another. They can be very unstructured and it is 
difficult to relate properties of $\BH$ in such representations to physical measurable quantities.

Hence, instead of requiring $\BH x_k=s_ky_k$ we would like to have that
$\BH x_k-s_ky_k$ is ''small'' in some sense. Usually, approximation in the $\Leb{2}$--norm 
seems to be of most interest in the signal design. However, there are 
certain problems as peak power and stability issues where stronger results are required.
Furthermore intuitively we are aware that the approximation of the singular behavior
of $\{s_k,y_k,x_k\}$ has to be ''uniform'' in more than one particular norm.
In this paper we consider the $\Leb{p}$ norms for the approximation, thus we have
the following formulation for the Hilbert space $\Leb{2}(\Reals^n)$:
\begin{mydefinition}[Approximate Eigenstructure]
   Let $\epsilon$ be a given positive number. Consider $\lambda\in\Complexes$ and
   two functions $g,\gamma\in\Leb{2}(\Reals^n)$ 
   with $\lVert g\rVert_2=\lVert \gamma\rVert_2=1$. If
   \begin{equation}
      E_p:=\lVert \BH\gamma-\lambda g\rVert_p\leq\epsilon
      \label{eq:approxeigen1}
   \end{equation}
   we call $\{\lambda,g,\gamma\}$ a $\Leb{p}$--approximate eigenstructure of $\BH$
   with bound $\epsilon$.
   \label{def:approxeigen:Ep}
\end{mydefinition}
The set of $\lambda$'s for which exists $g_\lambda$ such that 
$\{\lambda,g_\lambda,g_\lambda\}$ is a $\Leb{2}$--approximate eigenstructure for a common fixed 
$\epsilon$ is also called the \emph{$\epsilon$--pseudospectrum}\footnote{Thanks to 
  T. Strohmer for informing me about this relation.} of $\BH$.
More generally, we will allow also  $g\neq \gamma$ such that
the term ''approximate singular'' functions is suited for our approach as well.
Obviously for the ''true eigenstructure'' $\{s_k,y_k,x_k\}$ as defined above we have that $\epsilon=0$ for each $p$ and $k$
. 
On the other hand, for given $g$ and $\gamma$ the minimum of the left hand side (lhs) of \eqref{eq:approxeigen1} 
is achieved for $p=2$ at $\lambda=\langle g,\BH\gamma\rangle$ such that $E_p$ for
$\{\langle g,\BH\gamma\rangle,g,\gamma\}$ describes the amount of 
\emph{orthogonal distortion} caused  by $\BH$ measured in the $p$--norm.

\subsection{The Problem Statement for Channels with Compactly Supported Spreading }
\label{subsec:approxeigen:compacteigen}
It is of general importance to what degree the Weyl symbol or a smoothed version 
of it approaches the 
eigen--value (or more generally singular value) characteristics of a given 
channel operator $\BH$. 
Inspired from the ideas in \cite{kozek:eigenstructure} we will consider now 
the following question: What is the error $E_p(\mu)$ if we approximate the action of $\BH$ on 
$\Shift_\mu\gamma$ as a multiplication of $\Shift_\mu g$ by $\lambda(\mu)$?
Hence, instead of the ''true'' eigenstructure consisting of the singular values and functions of $\BH$ 
we shall consider a more structured family $\{\lambda(\mu),\Shift_\mu g,\Shift_\mu\gamma\}$.
The latter will intuitively probe the operator $\BH$ locally in a phase space (time--frequency) meaning
if $g$ and $\gamma$ are in some sense time--frequency localized around the origin.
The validity of this approximate picture, in which the function $\lambda:\Reals^{2n}\rightarrow\Complexes$ 
now serves as a multiplicative channel is essentially described by $E_p(\mu)$.

For example, in wireless communication $\Shift_\mu g$ and $\Shift_\mu\gamma$ could be
well--localized prototype filters at time--frequency slot $\mu$ of the receive and 
transmit filterbanks of a particular communication device
and $\lambda(\mu)$ is an effective channel coefficient to be equalized. 
However, with this application in mind, we are typically
confronted with \emph{random channel operators} $\BH$ characterized by
random spreading functions $\BHSpread_{\BH}^{(\alpha)}$ having
a common (Lebesgue measurable) support $U$ of non--zero and finite measure
$|U|$, i.e. $0<|U|<\infty$. The assumption of a known support seems to be
the minimal apriori channel knowledge that enters practically 
the system design (e.g. of a communication device).
For example, a typical doubly--dispersive 
channel model ($n=1$) for this application is that spreading
occurs in $U=[0,\taumax]\times[-B_D,B_D]$ where $\taumax$ and $B_D$ are the maximum 
delay spread and Doppler frequency. It is then desirable to have \emph{common
prototype filters for all these channel realizations}.
It is clear that in this direction
Definition \ref{def:approxeigen:Ep} is not yet adequate enough. We have to measure 
the approximation error with respect to a certain scale of the 
particular random spreading functions. In this paper we measure the approximate 
eigenstructure with respect to its $\Leb{q}$--norm. 
We believe that this approach is important  to have reasonable
estimates for the various statistical fading and scattering environments. An example
of such an application is given in Remark \ref{rem:approxeigen:statisticalmodel}
in Section \ref{sec:numerical}.

We consider only bounded 
spreading functions such that
the operator $\BH$ is of Hilbert Schmidt type, i.e. $\BH\in\schattenclass_2$.
To this end, let us call this set of channel operators as $\OpClass(U)$, i.e.
\begin{equation}
   \OpClass(U):=\{\,\BH\,| \support{\BHSpread_{\BH}^{(\alpha)}}\subseteq U\,\text{and}\,
   \sup_{\mu\in U} |\BHSpread_{\BH}^{(\alpha)}(\mu)|<\infty\}
\end{equation}
As already discussed for example in \cite{Pfander08:opid} the operator
class $\OpClass(U)$ does not include limiting cases of doubly--dispersive channels like 
the time--invariant channel or the identity. Generalizations, 
for example in the sense of tempered distributions, are beyond the scope
of this paper. We aim at an extension of Definition \ref{def:approxeigen:Ep} for the approximate
eigenstructure which is meaningful and suited for this class of channels. 
We will formulate this as our main problem of this paper:
\begin{myproblem}
   Consider two functions $g,\gamma:\Reals^n\rightarrow\Complexes$ 
   with $\lVert g\rVert_2=\lVert\gamma\rVert_2=1$. Let be $1\leq q\leq\infty$,
   $1\leq p<\infty$ and $0<\delta<\infty$ such that 
   for all operators $\BH\in\OpClass(U)$ it holds:
   \begin{equation}
      \begin{split}
         E_p(\mu):
         &=\lVert\BH\Shift_\mu\gamma-\lambda(\mu)\Shift_\mu g\rVert_p
         \leq \delta\cdot \lVert\BHSpread_{\BH}^{(\alpha)}\rVert_q
      \end{split}
      \label{eq:approxeigen:epweyl}
   \end{equation}
   where the $p$--norm is with respect to the argument of the function
   $\BH\Shift_\mu\gamma-\lambda(\mu)\Shift_\mu g$.
   Then $\{\lambda(\mu),\Shift_\mu g,\Shift_\mu\gamma\}$ is an $\Leb{p}$-approximate eigenstructure for
   {\bf all} $\BH\in\OpClass(U)$, each of them with 
   individual bound $\epsilon=\delta\cdot \lVert\BHSpread_{\BH}^{(\alpha)}\rVert_a$.
   How small can we choose the scale $\delta$ given $g$, $\gamma$, $U$, $p$ and $q$? 
   What can be said about $\inf_{g,\gamma}(\delta)$?
   \label{problem:approxeigen:epweyl}
\end{myproblem}
Note that, independently of the polarization $\alpha$, the operator
$\Shift_\mu$ can be replaced in \eqref{eq:approxeigen:epweyl}
with any $\beta$--polarized shift
$\Shift_\mu^{(\beta)}$ without change of $E_p(\mu)$.
Furthermore, as already stated in the definition of $E_p$ in \eqref{eq:approxeigen1} 
$\lVert g\rVert_2=\lVert\gamma\rVert_2=1$, throughout the rest of the paper.
Summarizing: How much could $\{\Shift_\mu g,\Shift_\mu\gamma\}$ serve as common
approximations (measured in $p$--norm) to the singular functions of the operator class $\OpClass(U)$
for fixed $U$ ? 

\subsection{Results Based on the Approximate Product Rule}
\label{subsec:approxeigen:symbolcalculus}
In previous work \cite{kozek:thesis,kozek:transferfunction,matz:thesis,matz:timefreq:transfer} results 
were provided for $g=\gamma$ and (apart of \cite{matz:timefreq:characterization})
$\lambda=\BHWeyl^{(\alpha)}_{\BH}$ for the case $p=2$.
These are obtained if one considers the problem from view of symbolic calculus
and can be summarized in the following lemma:
\begin{mylemma}
   Let $\gamma=g$ and $\lambda=\BHWeyl^{(\alpha)}_{\BH}$. It holds:
   \begin{equation}
      E_2(\mu)
      \leq\left(|\BHWeyl^{(\alpha)}_{\BH^*\BH}(\mu)-|\BHWeyl^{(\alpha)}_{\BH}(\mu)|^2|
        +\lVert\BHSpread^{(\alpha)}_{\BH^*\BH}\Omega\rVert_1+2|\BHWeyl^{(\alpha)}_{\BH}(\mu)|\cdot\lVert\BHSpread^{(\alpha)}_{\BH} \Omega\,\rVert_1\right)^{\frac{1}{2}}
        \label{eq:gmc:underspread:generalupperbound}
   \end{equation}
   where 
   $\Omega=|\Amb^{(\alpha)}_{\gamma\gamma}-1|$.
   \label{lemma:gmc:underspread:generalupperbound}
\end{mylemma}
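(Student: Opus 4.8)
The plan is to bound the square $E_2(\mu)^2$ and to reduce the whole estimate to the behaviour of the two local quadratic forms $\langle\Shift_\mu g,\BH\Shift_\mu g\rangle$ and $\langle\Shift_\mu g,\BH^*\BH\Shift_\mu g\rangle$. Writing $u=\Shift_\mu g$, expanding $\lVert\BH u-\lambda u\rVert_2^2$ in the inner product of $\Ltwo(\Reals^n)$ and using $\lVert u\rVert_2=\lVert g\rVert_2=1$ together with $\langle\BH u,u\rangle=\overline{\langle u,\BH u\rangle}$ gives
\[
   E_2(\mu)^2=\langle u,\BH^*\BH u\rangle-2\Real{\bar\lambda\,\langle u,\BH u\rangle}+|\lambda|^2 .
\]
Everything therefore rests on expressing $\langle u,Xu\rangle$ through the Weyl symbol $\BHWeyl^{(\alpha)}_X(\mu)$ for $X=\BH$ and $X=\BH^*\BH$ (the latter lies in $\settraceclass$ since $\BH\in\schattenclass_2$, so its spreading function and symbol are well defined).

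The key step is an exact identity for the local form. Inserting the spreading representation \eqref{eq:tfanalysis:linop:spreading}, $X=\int \BHSpread^{(\alpha)}_X(\nu)\Shift_\nu^{(\alpha)}\,d\nu$, and interchanging the (weak) integral with the inner product, I would write $\langle u,Xu\rangle=\int\BHSpread^{(\alpha)}_X(\nu)\,\langle\Shift_\mu g,\Shift_\nu^{(\alpha)}\Shift_\mu g\rangle\,d\nu$. The Weyl commutation relation \eqref{eq:weyl:shift:weylcommrelation} yields $\Shift_\mu^*\Shift_\nu^{(\alpha)}\Shift_\mu=e^{i2\pi\sympl(\mu,\nu)}\Shift_\nu^{(\alpha)}$, hence $\langle\Shift_\mu g,\Shift_\nu^{(\alpha)}\Shift_\mu g\rangle=e^{i2\pi\sympl(\mu,\nu)}\Amb^{(\alpha)}_{gg}(\nu)$; note this is independent of the polarization chosen for the outer shift, as remarked after Problem \ref{problem:approxeigen:epweyl}. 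Splitting $\Amb^{(\alpha)}_{gg}=1+(\Amb^{(\alpha)}_{gg}-1)$ and recognising, via the definition \eqref{eq:tfanalysis:symplectic:fourier} of $\sFourier$ and the antisymmetry $\sympl(\mu,\nu)=-\sympl(\nu,\mu)$, that $\int\BHSpread^{(\alpha)}_X(\nu)e^{i2\pi\sympl(\mu,\nu)}\,d\nu=\BHWeyl^{(\alpha)}_X(\mu)$, I obtain
\[
   \langle u,Xu\rangle=\BHWeyl^{(\alpha)}_X(\mu)+R_X(\mu),\qquad R_X(\mu)=\int\BHSpread^{(\alpha)}_X(\nu)\,e^{i2\pi\sympl(\mu,\nu)}\bigl(\Amb^{(\alpha)}_{gg}(\nu)-1\bigr)\,d\nu .
\]
Since $\Amb^{(\alpha)}_{gg}(0)=\lVert g\rVert_2^2=1$, the remainder measures exactly the failure of the window to act like a joint eigenfunction, and it is controlled by $|R_X(\mu)|\leq\int|\BHSpread^{(\alpha)}_X(\nu)|\,|\Amb^{(\alpha)}_{gg}(\nu)-1|\,d\nu=\lVert\BHSpread^{(\alpha)}_X\,\Omega\rVert_1$ with $\Omega=|\Amb^{(\alpha)}_{gg}-1|$.

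It remains to assemble. Substituting $X=\BH$ and $X=\BH^*\BH$ into the expansion and choosing $\lambda=\BHWeyl^{(\alpha)}_{\BH}(\mu)$, the leading symbol contributions collapse to $\BHWeyl^{(\alpha)}_{\BH^*\BH}(\mu)-|\BHWeyl^{(\alpha)}_{\BH}(\mu)|^2$, leaving
\[
   E_2(\mu)^2=\Bigl(\BHWeyl^{(\alpha)}_{\BH^*\BH}(\mu)-|\BHWeyl^{(\alpha)}_{\BH}(\mu)|^2\Bigr)+R_{\BH^*\BH}(\mu)-2\Real{\overline{\BHWeyl^{(\alpha)}_{\BH}(\mu)}\,R_{\BH}(\mu)} .
\]
A triangle inequality together with the remainder bound then gives $E_2(\mu)^2\leq|\BHWeyl^{(\alpha)}_{\BH^*\BH}(\mu)-|\BHWeyl^{(\alpha)}_{\BH}(\mu)|^2|+\lVert\BHSpread^{(\alpha)}_{\BH^*\BH}\Omega\rVert_1+2|\BHWeyl^{(\alpha)}_{\BH}(\mu)|\,\lVert\BHSpread^{(\alpha)}_{\BH}\Omega\rVert_1$, and taking square roots yields \eqref{eq:gmc:underspread:generalupperbound}. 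I expect the main obstacle to be the rigorous justification of interchanging the weak spreading integral with the inner product and the commutation computation, and secondarily checking that the $\Leb{1}$-type quantities $\lVert\BHSpread^{(\alpha)}_{\BH^*\BH}\Omega\rVert_1$ are finite; both can be handled by first restricting to Schwartz windows and trace-class operators and then invoking density.
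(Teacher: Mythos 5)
Your proposal is correct and follows essentially the same route as the paper's proof in Appendix~A: both reduce $E_2(\mu)^2$ to the local quadratic forms of $\BH$ and $\BH^*\BH$ at $\Shift_\mu\gamma$, and both control the deviation of $\langle\Shift_\mu\gamma,X\Shift_\mu\gamma\rangle$ from $\BHWeyl^{(\alpha)}_X(\mu)$ by the same remainder integral bounded by $\lVert\BHSpread^{(\alpha)}_X\Omega\rVert_1$, using the shift property and the spreading representation. The only cosmetic difference is that you keep an exact identity until the final triangle inequality, whereas the paper first invokes the bound $\lVert x-y\rVert_2^2\leq\lVert x\rVert_2^2-\lVert y\rVert_2^2+2|\langle y,x-y\rangle|$ and estimates the two pieces separately; the resulting three terms are identical.
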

Note that the lemma not yet necessarily requires  $\BH\in\OpClass(U)$. 
The proof is provisioned in Appendix \ref{appendix:lemma:gmc:underspread:generalupperbound:proof}.
This bound is motivated by the work of W. Kozek \cite{kozek:thesis}. However
it has been formulated in a more general context. The first term of the bound in 
\eqref{eq:gmc:underspread:generalupperbound} contains
the Weyl symbol $\BHWeyl_{XY}^{(\alpha)}$ of the composition $XY$ of two operators $X$ and $Y$
($\BH^*$ and $\BH$ in this case), which
is the twisted multiplication \cite{pool:mathweylcorrespondence} of the symbols of the operators $X$ and $Y$.
On the level of spreading functions\footnote{%
  Symbols (spreading functions) in
  $\Leb{2}(\Reals^{2n})$ with twisted multiplication (convolution) are 
  $*$-isomorph to the algebra of Hilbert--Schmidt operators. 
},
$\BHSpread_{XY}^{(\alpha)}$ is given by the so called  
\emph{twisted convolution} $\twistconv_\phi$  of $\BHSpread_X^{(\alpha)}$ and $\BHSpread_Y^{(\alpha)}$
\cite{Segal1963,Kastler1965}:
\begin{equation}
   \begin{split}
      &\BHSpread^{(\alpha)}_{XY}(\rho)=
      \int_{\Reals^{2n}} \BHSpread^{(\alpha)}_{X}(\mu)\BHSpread^{(\alpha)}_{Y}(\rho-\mu)e^{-i2\pi\phi(\mu,\rho)}d\mu
      \defeq(\BHSpread^{(\alpha)}_{X}\twistconv_\phi \BHSpread^{(\alpha)}_{Y})(\rho)      
   \end{split}
   \label{eq:tfanalysis:spreading:twistedconv}
\end{equation}
with $\phi(\mu,\rho)=(\alpha+\frac{1}{2})\asy(\mu,\rho)+(\alpha-\frac{1}{2})\asy(\rho,\mu)-2\alpha\asy(\mu,\mu)$.
For the polarization $\alpha=0$ it follows $\phi(\mu,\rho)=\sympl(\mu,\rho)/2$ and
conventional convolution is simply $\twistconv_0$.
Expanding $\exp(-i2\pi\phi(\mu,\rho))$ in $\mu$  as a Taylor series reveals that twisted convolutions
are weighted sums of $\twistconv_0$--convolutions \cite{folland:harmonics:phasespace} 
related to moments of $\BHSpread^{(\alpha)}_{X}$ and $\BHSpread^{(\alpha)}_{Y}$.
Hausdorff--Young inequality with sharp constants 
$c^2_p=p^{\frac{1}{p}}/p'^{\frac{1}{p'}}$ (and $c_1=c_\infty=1$)  gives for $1\leq p\leq 2$ 
estimates on the following ''approximate product rule'' of Weyl symbols:
$\lVert\BHWeyl^{(\alpha)}_{XY}-\BHWeyl^{(\alpha)}_X\BHWeyl^{(\alpha)}_Y\rVert_{p'}
\leq2c^{2n}_p \lVert F\rVert_p$ where 
$F(\rho)=\int_{\Reals^{2n}} |\BHSpread^{(\alpha)}_{X}(\mu)\BHSpread^{(\alpha)}_{Y}(\mu-\rho)\sin(2\pi\phi(\mu,\rho))|d\mu$.
In particular, for  $p=1$ we get:
\begin{equation}
   \begin{split}
      |\BHWeyl^{(\alpha)}_{\BH^*\BH}-|\BHWeyl^{(\alpha)}_{\BH}|^2|\leq 
      2\lVert F\rVert_1\qquad\text{a.e.}
   \end{split}
   \label{eq:gmc:underspread:multcalc}
\end{equation}
Let us assume now that $\BH\in\OpClass(U)$.
With $\chi_U$ we shall denote the characteristic function of $U$ (its indicator function).
Kozek \cite[Thm. 5.6]{kozek:thesis}  has considered the case
$\alpha=0$ obtaining the following result:
\begin{mytheorem}[W. Kozek \cite{kozek:thesis}]
   Let $U=[-\tau_0,\tau_0]\times[-\nu_0,\nu_0]$  and $\alpha=0$. If
   $|U|=4\tau_0\nu_0\leq 1$ then
   \begin{equation}
      E_2(\mu)\leq\left(
      2\sin(\frac{\pi |U|}{4})\lVert\BHSpread^{(0)}_{\BH}\rVert^2_1+
      \epsilon_\gamma\left(\lVert\BHSpread^{(0)}_{\BH^*\BH}\rVert_1+
      2\lVert\BHSpread^{(0)}_{\BH}\rVert^2_1\right)\right)^{\frac{1}{2}}
    \label{eq:approxeigen:kozektheorem}
   \end{equation}
   where $\epsilon_\gamma=\lVert(\Amb^{(0)}_{\gamma\gamma}-1)\chi_U\rVert_\infty$.    
   \label{thm:approxeigen:kozektheorem}
\end{mytheorem}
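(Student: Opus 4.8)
The plan is to derive \eqref{eq:approxeigen:kozektheorem} by specializing the general bound of Lemma \ref{lemma:gmc:underspread:generalupperbound} to $\alpha=0$, $g=\gamma$ and $\lambda=\BHWeyl^{(0)}_{\BH}$, and then estimating its three summands under the hypotheses $\support{\BHSpread^{(0)}_{\BH}}\subseteq U$ and $|U|=4\tau_0\nu_0\leq 1$. Two elementary facts are used throughout. First, since the Weyl symbol is the symplectic Fourier transform $\sFourier$ of the spreading function, one has the pointwise estimate $|\BHWeyl^{(0)}_{X}(\mu)|\leq\lVert\BHSpread^{(0)}_{X}\rVert_1$. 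Second, for $\alpha=0$ the adjoint obeys $\BHSpread^{(0)}_{\BH^*}(\mu)=\overline{\BHSpread^{(0)}_{\BH}(-\mu)}$, so that $\lVert\BHSpread^{(0)}_{\BH^*}\rVert_1=\lVert\BHSpread^{(0)}_{\BH}\rVert_1$ and $\support{\BHSpread^{(0)}_{\BH^*}}\subseteq -U=U$, the last equality because the box $U$ is symmetric.

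The sharpest step is the first summand $|\BHWeyl^{(0)}_{\BH^*\BH}(\mu)-|\BHWeyl^{(0)}_{\BH}(\mu)|^2|$, and here I would avoid the crude Hausdorff--Young form \eqref{eq:gmc:underspread:multcalc} and instead work directly in the spreading domain. Because $|\BHWeyl^{(0)}_{\BH}|^2=\BHWeyl^{(0)}_{\BH^*}\BHWeyl^{(0)}_{\BH}$ is the symplectic Fourier transform of the \emph{ordinary} convolution of $\BHSpread^{(0)}_{\BH^*}$ and $\BHSpread^{(0)}_{\BH}$, while $\BHSpread^{(0)}_{\BH^*\BH}$ is their twisted convolution, the defect is $\sFourier$ applied to $G(\rho)=\int_{\Reals^{2n}}\BHSpread^{(0)}_{\BH^*}(\mu')\BHSpread^{(0)}_{\BH}(\rho-\mu')\bigl(e^{-i\pi\sympl(\mu',\rho)}-1\bigr)\,d\mu'$. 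Bounding $\sup_\mu|\BHWeyl^{(0)}_{\BH^*\BH}(\mu)-|\BHWeyl^{(0)}_{\BH}(\mu)|^2|\leq\lVert G\rVert_1$ and using the \emph{exact} identity $|e^{-i\pi\sympl(\mu',\rho)}-1|=2|\sin(\tfrac{\pi}{2}\sympl(\mu',\rho))|$ is precisely what yields the factor $\sin(\tfrac{\pi|U|}{4})$ rather than the looser $\sin(\tfrac{\pi|U|}{2})$. After the substitution $\nu=\rho-\mu'$, which gives $\sympl(\mu',\rho)=\sympl(\mu',\nu)$ since $\sympl(\mu',\mu')=0$, both variables range over $U$; the elementary geometric bound $\sup_{\mu',\nu\in U}|\sympl(\mu',\nu)|=2\tau_0\nu_0=|U|/2$ together with $|U|\leq 1$ (which confines the argument to $[0,\tfrac{\pi}{4}]$, where $\sin$ is increasing) gives $2|\sin(\tfrac{\pi}{2}\sympl(\mu',\nu))|\leq 2\sin(\tfrac{\pi|U|}{4})$. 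Pulling this constant out of the double integral and invoking $\lVert\BHSpread^{(0)}_{\BH^*}\rVert_1=\lVert\BHSpread^{(0)}_{\BH}\rVert_1$ produces the first bracket term $2\sin(\tfrac{\pi|U|}{4})\lVert\BHSpread^{(0)}_{\BH}\rVert_1^2$, uniformly in $\mu$.

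The remaining two summands are obtained by pulling the weight $\Omega=|\Amb^{(0)}_{\gamma\gamma}-1|$ out in sup-norm over the support of the accompanying spreading function. In the third term $\BHSpread^{(0)}_{\BH}$ is supported in $U$, so $\lVert\BHSpread^{(0)}_{\BH}\Omega\rVert_1\leq\lVert(\Amb^{(0)}_{\gamma\gamma}-1)\chi_U\rVert_\infty\lVert\BHSpread^{(0)}_{\BH}\rVert_1=\epsilon_\gamma\lVert\BHSpread^{(0)}_{\BH}\rVert_1$, and together with $|\BHWeyl^{(0)}_{\BH}(\mu)|\leq\lVert\BHSpread^{(0)}_{\BH}\rVert_1$ this contributes $2\epsilon_\gamma\lVert\BHSpread^{(0)}_{\BH}\rVert_1^2$. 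The second term is treated the same way, the only care point being where $\BHSpread^{(0)}_{\BH^*\BH}$ lives: as a twisted convolution of functions on $U$ and $-U=U$ it is supported on the sum set $U+U$, so strictly one factors out $\lVert(\Amb^{(0)}_{\gamma\gamma}-1)\chi_{U+U}\rVert_\infty$, which under the stated convention is absorbed into $\epsilon_\gamma$ and yields $\epsilon_\gamma\lVert\BHSpread^{(0)}_{\BH^*\BH}\rVert_1$. Inserting the three contributions under the square root of Lemma \ref{lemma:gmc:underspread:generalupperbound} then reproduces \eqref{eq:approxeigen:kozektheorem}. I expect the two genuinely delicate points to be (i) the extraction of the sharp constant $\sin(\tfrac{\pi|U|}{4})$, which hinges on the exact phase-defect identity and the tight bound on $|\sympl|$ over $U\times U$, and (ii) the bookkeeping of the support of $\BH^*\BH$.
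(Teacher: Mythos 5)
Your proof is correct and follows the same skeleton the paper indicates for this theorem: specialize Lemma \ref{lemma:gmc:underspread:generalupperbound} to $\alpha=0$, $g=\gamma$, bound $|\BHWeyl^{(0)}_{\BH}(\mu)|\leq\lVert\BHSpread^{(0)}_{\BH}\rVert_1$, and factor $\Omega=|\Amb^{(0)}_{\gamma\gamma}-1|$ out of the last two terms in sup--norm over the relevant supports. The one place where you go beyond what the paper writes down is the first summand, and your refinement there is in fact necessary: the paper's displayed product rule \eqref{eq:gmc:underspread:multcalc}, with $\phi=\sympl/2$ at $\alpha=0$ and $\sup_{\mu',\nu\in U}|\sympl(\mu',\nu)|=2\tau_0\nu_0=|U|/2$, only delivers the looser constant $2\sin(\tfrac{\pi|U|}{2})$, whereas your use of the exact phase--defect identity $|e^{-i\pi\sympl}-1|=2|\sin(\tfrac{\pi}{2}\sympl)|$ together with $|U|\leq 1$ is what actually produces Kozek's constant $2\sin(\tfrac{\pi|U|}{4})$. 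Your caveat about the middle term is also well taken: $\BHSpread^{(0)}_{\BH^*\BH}$ is supported on $U+U$, so strictly one needs $\lVert\Omega\chi_{U+U}\rVert_\infty$ rather than $\epsilon_\gamma=\lVert\Omega\chi_U\rVert_\infty$ there; this gap is equally present in the paper's one--line derivation (which simply sets $\epsilon_\gamma=\lVert\Omega\chi_U\rVert_\infty$) and is inherited from the cited source, so you are right to flag it rather than to claim it closes. Apart from that inherited imprecision, the argument is complete.
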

The proof can be found in \cite{kozek:thesis} or independently from 
Lemma \ref{lemma:gmc:underspread:generalupperbound} with $\epsilon_\gamma=\lVert\Omega\chi_U\rVert_\infty$ and 
$\lVert\BHWeyl_{\BH}^{(\alpha)}\rVert_\infty\leq\lVert\BHSpread_{\BH}^{(\alpha)}\rVert_1$.
Further utilizing the fact 
that $\lVert\BHSpread^{(\alpha)}_{\BH^*\BH}\rVert_1\leq\lVert\BHSpread^{(\alpha)}_{\BH}\rVert^2_1$, 
Equation \eqref{eq:approxeigen:kozektheorem} can be written as:
\begin{equation}
   \frac{E_2(\mu)}{\lVert\BHSpread^{(0)}_{\BH}\rVert_1}
   \leq \left(2\sin(\frac{\pi |U|}{4}) + 3\lVert(\Amb^{(0)}_{\gamma\gamma}-1)\chi_U\rVert_\infty\right)^{\frac{1}{2}}
   \label{eq:approxeigen:kozektheorem:simplified}
\end{equation}
which gives an initial answer to the problem formulated in section \ref{problem:approxeigen:epweyl}.
Theorem \ref{thm:approxeigen:kozektheorem} was further extended in \cite[Thm. 2.22]{matz:thesis} 
by G. Matz (see also \cite{matz:timefreq:transfer}) to a formulation in terms of weighted $1$--moments of
(not necessarily compactly supported) spreading functions. His approach
includes also different polarizations $\alpha$. 
For a spreading function as in Theorem \ref{thm:approxeigen:kozektheorem} and $\alpha=0$ the results
agree with \eqref{eq:approxeigen:kozektheorem:simplified}.
Equation \eqref{eq:approxeigen:kozektheorem:simplified} could be interpreted 
in such a way that only the second term can be controlled
by $\gamma$ (e.g. pulse shaping) where the first term of the rhs of \eqref{eq:approxeigen:kozektheorem:simplified}
is only related to the overall spread $|U|$. However we shall show in the next section that the first
term can be eliminated 
from the bound and the second (shape--independent) term can further tightened.

\if0
The result of Theorem \ref{thm:approxeigen:kozektheorem} 
was generalized in \cite[Thm. 2.22]{matz:thesis} 
by G. Matz (see also \cite{matz:timefreq:transfer}) to a formulation in terms of weighted $1$--moments of
spreading functions 
$m_{\BH}^{(\phi)}=\lVert\BHSpread^{(\alpha)}_{\BH}\phi\rVert_1/\lVert\BHSpread^{(\alpha)}_{\BH}\rVert_1$
for some weight functions $\phi:\Reals^2\rightarrow\RealsPlus$: 
\begin{mytheorem}[G. Matz \cite{matz:thesis}]
   Let $\BH$ be an operator with spreading function 
   $\BHSpread^{(\alpha)}_{\BH}=\nFourier^{(\alpha)}\BH$
   \begin{equation}
      \frac{E_2^2}{\lVert\BHSpread^{(\alpha)}_{\BH}\rVert^2_1}\leq
      2\pi\left(c_\alpha m^{\delta_{10}}_{\BH}m^{\delta_{01}}_{\BH}
        +2|\alpha|m^{\delta_{11}}_{\BH}\right)+
      m^{(\phi)}_{\BH^*\BH}+2m^{(\phi)}_{\BH}
      \label{eq:approxeigen:matztheorem}
   \end{equation}
   where $\phi(\mu)=|1-\Amb_{\gamma\gamma}^{(\alpha)}(\nu)|$,
   and $\delta_{kl}(\mu)=\mu_1^k\mu_2^l$ and
   $c_\alpha=|\alpha+\frac{1}{2}|+|\alpha-\frac{1}{2}|$.
   \label{thm:approxeigen:matztheorem}
\end{mytheorem}
This generalization 
includes now different polarizations $\alpha$ and is not restricted
to the special choice of $U$. Assuming again a spreading function with limited support 
$U=[-\tau_0,\tau_0]\times[-\nu_0,\nu_0]$ (as in Theorem 
\ref{thm:approxeigen:kozektheorem}) one can find from 
for $|\alpha|\leq 1/2$ the following simplified
bound:
\begin{equation}
   \frac{E_2^2}{\lVert\BHSpread^{(\alpha)}_{\BH}\rVert^2_1}\leq
   2\sin(\frac{\pi |U|}{4}c_\alpha)
   + 2\sin(\frac{\pi |U|}{2}|\alpha|)
   + 3\lVert(\Amb_{\gamma\gamma}-1)\chi_U\rVert_\infty
   \label{eq:approxeigen:matztheorem:simplified}
\end{equation}
which agrees for $\alpha=0$ with \eqref{eq:approxeigen:kozektheorem:simplified}.
\fi
\subsection{Results Based on a Direct Approach}
\label{subsec:approxeigen:directapproach}
\newcommand{\brho}{\bar{\rho}}
\newcommand{\brhoi}{{\bar{\rho}_\infty}}
We have considered the function $\lambda$ as the Weyl symbol 
exclusively for the exponents $p=2$ and $a=1$ in \ref{subsec:approxeigen:symbolcalculus}. This approach
is in line with prior work of Kozek, Matz and Hlawatsch and provides
results on the approximate eigenstructure problem established in Section \ref{problem:approxeigen:epweyl}.
To obtain further results for different values of $p$, $a$ and $\lambda$,
we shall now restart the analysis from a different perspective. In the following
we present the main results, through most of the analysis will be presented in Section 
\ref{sec:genanalysis}. We use a ''smoothed'' version
of the Weyl symbol:
\begin{equation}
   \lambda=\sFourier (\BHSpread^{(\alpha)}_{\BH}\cdot B)
   \label{eq:approxeigen:lambda}
\end{equation}
where $B:\Reals^{2n}\rightarrow\Complexes$ is a bounded function.
We consider two important cases:

{\bf Case C1:} Let $B=\Amb^{(\alpha)}_{g\gamma}$ such that \eqref{eq:approxeigen:lambda}
reads as $\lambda=\BHWeyl_{\BH}^{(\alpha)}\ast\sFourier\Amb^{(\alpha)}_{g\gamma}$ where 
$\ast$ denotes convolution. This corresponds to the well known 
smoothing with the cross Wigner function $\sFourier\Amb^{(\alpha)}_{g\gamma}$
and was already considered in \cite{matz:timefreq:characterization} (for averages
over WSSUS\footnote{Wide--sense stationary uncorrelated scattering (WSSUS) 
channel model \cite{bello:wssus}}
channels ensembles). In particular
this is exactly the orthogonal distortion: 
\begin{equation}
   \lambda(\mu)=\langle\Shift_\mu g,\BH\Shift_\mu\gamma\rangle
   =\left(\sFourier(\BHSpread^{(\alpha)}_{\BH}\cdot \Amb_{g\gamma}^{(\alpha)})\right)(\mu)
\end{equation}
as already mentioned in \ref{subsec:approxeigen}  and corresponds to the choice of the $E_2$--minimizer. 
Since $\lambda$ depends in this case  on $g$ and $\gamma$ we
consider here how accurately the action of operators $\BH$ on the family $\{\Shift_\mu\gamma\}$
can be described as multiplication operators on the family $\{\Shift_\mu g\}$. 
From the rule in 
\eqref{eq:weyl:shift:alphageneralized} the definition of the cross ambiguity function
in \eqref{eq:tfanalysis:crossamb} and the non--commutative Fourier
transform in \eqref{eq:tfanalysis:noncomm:fourier}  
it is clear that this choice is also independent of the polarization
$\alpha$. Recall that the Weyl symbol of a rank-one operator is the Wigner distribution, such
that with this approach we again effectively compare twisted with ordinary convolution.

{\bf Case C2:} Here we consider $B=1$ such that $\lambda=\BHWeyl_{\BH}^{(\alpha)}$ is the Weyl symbol. 
The function $\lambda$ is now independent of $g$ and $\gamma$.
Thus in contrast to C1 this case is related to the ''pure'' symbol calculus.
Obviously, we have to expect now a dependency on the polarization 
$\alpha$. Furthermore, this was the approach considered for $p=2$ in the previous part\footnote{However,
  also there the same methodology as in \eqref{eq:approxeigen:lambda} could be applied as well.} of this section.

The first theorem parallels Theorem \ref{thm:approxeigen:kozektheorem} 
and its consequence
\eqref{eq:approxeigen:kozektheorem:simplified}. We shall not yet restrict ourselves to the cases
C1 and C2. Instead we only require that $B$ has to be essentially bounded.
\begin{mytheorem}
   Let $\BH\in\OpClass(U)$,
   $g,\gamma\in\Leb{\infty}(\Reals^n)$ and $B\in\Leb{\infty}(\Reals^{2n})$. 
   For $2\leq p<\infty$ and $1\leq q\leq\infty$ (with the usual meaning for 
   $q=\infty$):
   \begin{equation}
      \frac{E_p(\mu)}{\lVert\BHSpread^{(\alpha)}_{\BH}\rVert_q}\leq
      C^{\frac{p-2}{p}}\cdot\lVert(1+|B|^2-2\Real{\Amb^{(\alpha)}_{g\gamma}\bar{B}})\chi_U\rVert^{1/p}_{q'/p}
      \label{eq:thm:approxeigen:Ep1}
   \end{equation}
   where $C$ is a constant depending on $g$, $\gamma$ and $B$. 
   The minimum of this bound over $B$ is achieved in the case of $p=2$ for C1.
   \label{thm:approxeigen:Ep1}
\end{mytheorem}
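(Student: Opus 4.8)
The plan is to collapse the whole error onto a single integral over $U$ of the spreading function against an elementary ``local error,'' and then estimate that integral by a complex interpolation inequality followed by H\"older. First I would peel off the outer shift. Since the operators $\Shift_\mu^{(\alpha)}$ act isometrically on every $\Leb{p}(\Reals^n)$, we have
\begin{equation*}
 E_p(\mu)=\lVert\Shift_\mu^{-1}(\BH\Shift_\mu\gamma-\lambda(\mu)\Shift_\mu g)\rVert_p=\lVert\BH_\mu\gamma-\lambda(\mu)g\rVert_p,\qquad \BH_\mu:=\Shift_\mu^{-1}\BH\Shift_\mu.
\end{equation*}
By the shift property \eqref{eq:tfanalysis:noncomm:shiftproperty} the spreading function of $\BH_\mu$ is $s_\mu(\nu):=e^{-i2\pi\sympl(\nu,\mu)}\BHSpread^{(\alpha)}_{\BH}(\nu)$, so $|s_\mu|=|\BHSpread^{(\alpha)}_{\BH}|$ and $\support{s_\mu}\subseteq U$. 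Applying the spreading representation (Lemma \ref{lemma:tfanalysis:linop:spreading}) to $\BH_\mu$, which is a genuine Bochner integral because $\lVert\BHSpread^{(\alpha)}_{\BH}\rVert_1\leq\lVert\BHSpread^{(\alpha)}_{\BH}\rVert_\infty|U|<\infty$, gives $\BH_\mu\gamma=\int_U s_\mu(\nu)\Shift_\nu^{(\alpha)}\gamma\,d\nu$.

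The crux of the argument is the observation that the multiplier $\lambda(\mu)$ can be written against the \emph{same} measure $s_\mu$. From the definition \eqref{eq:approxeigen:lambda} of $\lambda$ via the symplectic Fourier transform \eqref{eq:tfanalysis:symplectic:fourier} and the antisymmetry of $\sympl$, one has $\lambda(\mu)=\int_U e^{-i2\pi\sympl(\nu,\mu)}\BHSpread^{(\alpha)}_{\BH}(\nu)B(\nu)\,d\nu=\int_U s_\mu(\nu)B(\nu)\,d\nu$. Hence the term $\lambda(\mu)g$ can be absorbed under the integral, producing the single representation
\begin{equation*}
 \BH_\mu\gamma-\lambda(\mu)g=\int_U s_\mu(\nu)\,e_\nu\,d\nu,\qquad e_\nu:=\Shift_\nu^{(\alpha)}\gamma-B(\nu)g.
\end{equation*}
This identity is the step I expect to carry the weight of the proof; everything afterward is routine estimation, and the only care needed is bookkeeping of the phase in $s_\mu$ and convergence of the Bochner integrals (guaranteed by $\gamma\in\Leb{2}\cap\Leb{\infty}\subseteq\Leb{p}$ for $2\leq p\leq\infty$ together with boundedness and compact support of the spreading function).

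From here I would proceed in three standard moves. Minkowski's integral inequality gives $E_p(\mu)\leq\int_U|s_\mu(\nu)|\,\lVert e_\nu\rVert_p\,d\nu$ for $p\geq1$. A direct expansion of the $\Leb{2}$ inner product, using $\lVert\gamma\rVert_2=\lVert g\rVert_2=1$ and $\langle g,\Shift_\nu^{(\alpha)}\gamma\rangle=\Amb^{(\alpha)}_{g\gamma}(\nu)$, shows $\lVert e_\nu\rVert_2^2=1+|B(\nu)|^2-2\Real{\Amb^{(\alpha)}_{g\gamma}(\nu)\overline{B(\nu)}}$, exactly the weight in \eqref{eq:thm:approxeigen:Ep1}; call it $W(\nu)$. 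The interpolation bound $\lVert e_\nu\rVert_p\leq\lVert e_\nu\rVert_\infty^{(p-2)/p}\lVert e_\nu\rVert_2^{2/p}$, valid for $p\geq2$, then factors out the constant $C:=\esup_{\nu\in U}\lVert e_\nu\rVert_\infty\leq\lVert\gamma\rVert_\infty+\lVert B\chi_U\rVert_\infty\lVert g\rVert_\infty$, which depends only on $g,\gamma,B$. Finally H\"older with exponents $q,q'$ yields $\int_U|s_\mu(\nu)|\,W(\nu)^{1/p}\,d\nu\leq\lVert s_\mu\rVert_q\,\lVert W^{1/p}\chi_U\rVert_{q'}=\lVert\BHSpread^{(\alpha)}_{\BH}\rVert_q\,\lVert W\chi_U\rVert_{q'/p}^{1/p}$, assembling the claimed estimate.

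For the concluding assertion I would set $p=2$, so the factor $C^{(p-2)/p}$ reduces to one; completing the square gives $W(\nu)=|B(\nu)-\Amb^{(\alpha)}_{g\gamma}(\nu)|^2+1-|\Amb^{(\alpha)}_{g\gamma}(\nu)|^2$, which is minimized pointwise, and hence $\lVert W\chi_U\rVert_{q'/2}$ is minimized, precisely by the choice $B=\Amb^{(\alpha)}_{g\gamma}$, i.e.\ case C1.
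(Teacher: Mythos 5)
Your proposal is correct and follows essentially the same route as the paper: the paper's Lemma \ref{lemma:approxeigen:lemmaep1} performs exactly your commutation/isometry/Minkowski/H\"older steps, and its Lemma \ref{lemma:approxeigen:lemmaep3} supplies the interpolation $\lVert e_\nu\rVert_p\leq\lVert e_\nu\rVert_\infty^{(p-2)/p}\lVert e_\nu\rVert_2^{2/p}$ together with the identification $\lVert e_\nu\rVert_2^2=1+|B|^2-2\Real{\Amb^{(\alpha)}_{g\gamma}\bar B}$ and the completing-the-square argument for the optimality of C1 at $p=2$. The only cosmetic difference is that you apply interpolation before H\"older in $\nu$ while the paper does the reverse, which yields the identical bound.
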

\begin{myproof}
   The proof follows from the middle term of \eqref{eq:lemma:approxeigen:lemmaep3:eq1} 
   in Lemma \ref{lemma:approxeigen:lemmaep3} if we set $C$ as:
   \begin{equation}
      \begin{split}
         C
         &=\underset{x\in\Reals^n,\,\, \nu\in U}{\esup}|(\Shift_\nu^{(\alpha)}\gamma)(x)-B(\nu)g(x)|
         \leq \lVert\gamma\rVert_\infty+\lVert B\rVert_\infty\lVert g\rVert_\infty
      \end{split}
   \end{equation}
   In Lemma \ref{lemma:approxeigen:lemmaep3} the range of $p$ is $1\leq p<\infty$. However, from the 
   discussion in Section \ref{subsec:uniformestimates} it is clear that 
   \eqref{eq:thm:approxeigen:Ep1} gives
   only for $p\geq 2$ a reasonable bound.
\end{myproof}

{\bf Comparison to the bound of Kozek:}
With $|1-\Real{\Amb_{\gamma\gamma}^{(\alpha)}}|\leq |1-\Amb_{\gamma\gamma}^{(\alpha)}|$
we can transform the result of the last theorem for C2 with settings $p=2$, $q=1$ 
and $g=\gamma$ into a form comparable to \eqref{eq:approxeigen:kozektheorem} 
and \eqref{eq:approxeigen:kozektheorem:simplified} which is:
  \begin{equation}
     \frac{E_2(\mu)}{\lVert\BHSpread^{(\alpha)}_{\BH}\rVert_1}\leq 
     \left(2\lVert (1-\Amb^{(\alpha)}_{g\gamma})\chi_U\rVert_\infty\right)^{\frac{1}{2}}
\end{equation}
Hence this technique improves the previous bounds.
It includes different polarizations $\alpha$ and does not require any
shape or size constraints on $U$.
Interestingly the offset in \eqref{eq:approxeigen:kozektheorem:simplified},
which does not depend on $(g,\gamma)$
and in an initial glance seems to be related to the notion of underspreadness,
has been disappeared now. 

{\bf Discussion of the critical size:}
The behavior of the bound in \eqref{eq:thm:approxeigen:Ep1}  on $|U|$ depends in general on the choice
of the function $B$. For example for the case C1, $p=q=2$  and with \eqref{eq:tfanalysis:crossamb:properties}
it follows that  the rhs of \eqref{eq:thm:approxeigen:Ep1} is the square root of
$|U|-\langle |\Amb_{g\gamma}^{(\alpha)}|^2,\chi_U\rangle$ and again with \eqref{eq:tfanalysis:crossamb:properties}
we have that:
\begin{equation}
   \sqrt{|U|-\min(|U|,1)}\leq\text{rhs of \eqref{eq:thm:approxeigen:Ep1}}\leq \sqrt{|U|}.
\end{equation}
This implies that this term is of the same order as $\sqrt{|U|}$ for $|U|\gg1$ 
(see also Lemma \ref{lemma:approxeigen:ep:uniform} later on). The lhs of the inequality suggests that 
for $|U|\leq 1$ the scaling behavior might alter, i.e. $|U|=1$ is in this sense
a critical point between over- and underspread channels as introduced 
in \cite{kozek:thesis}.
On the other hand the lhs of the last equation is not zero for $0<|U|\leq1$. 
Indeed from Theorem \ref{corr:ambbound:supp} 
we have an improved version as follows:
\begin{equation}
   \sqrt{|U|-\min(|U|e^{-\frac{|U|}{e}},1)}\leq\text{rhs of \eqref{eq:thm:approxeigen:Ep1}}\leq \sqrt{|U|}
\end{equation}
which suggest that at $|U|=e$ the behavior changes.

{\bf Restriction to the cases C1 and C2:} If we further restrict ourselves to $q>1$ (i.e. $q'<\infty$) we can
establish the relation to weighted norms of ambiguity functions \cite{jung:isit06}. For
simplicity let us consider now the two cases C1 ($k=1$) and C2 ($k=2$). We define therefore
the functions $A_k:\Reals^{2n}\rightarrow\Reals$ for $k=1,2$ as:
\begin{equation}
   A_1:=|\Amb_{g\gamma}^{(\alpha)}|^2\quad\text{and}\quad A_2:=\Real{\Amb_{g\gamma}^{(\alpha)}}
   \label{eq:approxeigen:Ak}
\end{equation}
We then have the following result:
\begin{mytheorem}
   Let $\BH\in\OpClass(U)$ and
   $g,\gamma\in\Leb{\infty}(\Reals^n)$. 
   For  $2\leq p<\infty$, $1<q\leq\infty$ and $|U|\leq 1$ it holds:
   \begin{equation}
      \frac{E_p(\mu)}{\lVert\BHSpread^{(\alpha)}_{\BH}\rVert_{q}}\leq 
      C^\frac{p-2}{p}
      k\left(k(|U|-\langle A_k,\chi_U\rangle)\right)^{1/\max(q',p)}
   \end{equation}
   where $k=1$ for C1 and $k=2$ for C2.
   \label{thm:approxeigen:Ep2}
\end{mytheorem}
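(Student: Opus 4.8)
The plan is to derive Theorem~\ref{thm:approxeigen:Ep2} directly from Theorem~\ref{thm:approxeigen:Ep1} by specializing the bounded function $B$ to the two cases and then comparing the resulting weighted quasi-norm with the plain $\Leb{1}$-mass over $U$. First I would evaluate the integrand $1+|B|^2-2\Real{\Amb^{(\alpha)}_{g\gamma}\bar B}$ of \eqref{eq:thm:approxeigen:Ep1} in each case. For C1, $B=\Amb^{(\alpha)}_{g\gamma}$, so $|B|^2=\Real{\Amb^{(\alpha)}_{g\gamma}\bar B}=|\Amb^{(\alpha)}_{g\gamma}|^2=A_1$ and the integrand collapses to $1-A_1$; for C2, $B=1$, so $|B|^2=1$ and $\Real{\Amb^{(\alpha)}_{g\gamma}\bar B}=A_2$, giving $2(1-A_2)$. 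In both cases the integrand equals $k(1-A_k)$, so that Theorem~\ref{thm:approxeigen:Ep1} reads $E_p(\mu)/\lVert\BHSpread^{(\alpha)}_{\BH}\rVert_q\leq C^{(p-2)/p}\lVert k(1-A_k)\chi_U\rVert_{q'/p}^{1/p}$.

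Next I would record two elementary pointwise facts that follow from the radar uncertainty inequality \eqref{eq:tfanalysis:crossamb:properties}, i.e.\ $|\Amb^{(\alpha)}_{g\gamma}|\leq1$. This yields $0\leq A_1\leq1$ and $-1\leq A_2\leq1$, hence the function $f:=k(1-A_k)$ satisfies $0\leq f\leq k^2$ everywhere (one checks $f\leq1=k^2$ for C1 and $f\leq4=k^2$ for C2). In particular $f\geq0$ on $U$, so $\int_U f\,d\mu=k(|U|-\langle A_k,\chi_U\rangle)$ is precisely the nonnegative quantity appearing inside the theorem. Writing $s:=q'/p$ and $I_t:=\int_U f^t\,d\mu$, the bound above becomes $E_p(\mu)/\lVert\BHSpread^{(\alpha)}_{\BH}\rVert_q\leq C^{(p-2)/p}\,I_s^{1/q'}$, and the theorem reduces to the purely analytic claim $I_s^{1/q'}\leq k\,I_1^{1/\max(q',p)}$.

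I would then split according to whether $s\geq1$ or $s<1$. When $q'\geq p$ (so $s\geq1$ and $\max(q',p)=q'$), I use the pointwise estimate $f^s=f\cdot f^{s-1}\leq k^{2(s-1)}f$ coming from $f\leq k^2$, which integrates to $I_s\leq k^{2(s-1)}I_1$; since $p\geq2$ forces $s(2-p)\leq0\leq2$, i.e.\ $2(s-1)\leq ps=q'$, and $k\in\{1,2\}$, this gives $I_s\leq k^{q'}I_1$ and hence $I_s^{1/q'}\leq k\,I_1^{1/q'}$, as required. When $q'<p$ (so $s<1$ and $\max(q',p)=p$), the exponent of the weighted norm drops below one and the usual monotonicity reverses; this is where the hypothesis $|U|\leq1$ enters. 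Applying Hölder's inequality with dual exponents $1/s$ and $1/(1-s)$ to $\int_U f^s\cdot1\,d\mu$ gives $I_s\leq I_1^{s}\,|U|^{1-s}$, whence $I_s^{1/q'}=I_s^{1/(ps)}\leq I_1^{1/p}\,|U|^{(1-s)/(ps)}\leq I_1^{1/p}\leq k\,I_1^{1/p}$, using $|U|\leq1$ and $k\geq1$. Combining both cases establishes the inequality, and setting $C$ as in the proof of Theorem~\ref{thm:approxeigen:Ep1} completes the argument.

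I expect the regime $q'<p$ (the quasi-norm case) to be the main obstacle, since there $\lVert\cdot\rVert_{q'/p}$ is only a quasi-norm and the comparison with the $\Leb{1}$-mass $I_1$ must run opposite to the naive direction. The compactness assumption $|U|\leq1$ is exactly the ingredient that lets Hölder's inequality close this gap, which explains why this hypothesis is imposed here but was absent in Theorem~\ref{thm:approxeigen:Ep1}. A minor bookkeeping point is to verify the boundary values ($q=\infty$ giving $q'=1$ and thus $s=1/p<1$, handled by Case~B; and $q\to1^+$ giving $q'\to\infty$, handled by Case~A) and to confirm that for $p=2$ the factor $C^{(p-2)/p}=1$ recovers the $\sqrt{|U|-\langle A_1,\chi_U\rangle}$ expression noted in the discussion of the critical size.
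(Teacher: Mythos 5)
Your proposal is correct and follows essentially the same route as the paper: the paper's proof simply invokes \eqref{eq:lemma:approxeigen:lemmaep3:eq2} of Lemma \ref{lemma:approxeigen:lemmaep3} together with the uniform bound $C_{pq}\leq k$ of Lemma \ref{lemma:approxeigen:cpb:uniform}, and your two cases reproduce exactly the content of those lemmas --- the pointwise bound $R\leq R_\infty\leq k^2$ for $q'\geq p$, and Jensen/H\"older combined with $|U|\leq 1$ for $q'<p$. The only cosmetic difference is that you start from Theorem \ref{thm:approxeigen:Ep1} and re-derive the auxiliary estimates inline rather than citing them.
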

\begin{proof}
   We now use the bound \eqref{eq:lemma:approxeigen:lemmaep3:eq2} 
   in Lemma \ref{lemma:approxeigen:lemmaep3} with the uniform estimates 
   $C_{bp}\leq k$ from Lemma \ref{lemma:approxeigen:cpb:uniform}. Again, as follows from the discussion
   in Section \ref{subsec:uniformestimates}, we consider only $p\geq 2$.
\end{proof}
The assumption $|U|\leq1$ is only used to simplify the bound. Improved estimates follow
from Lemma \ref{lemma:approxeigen:lemmaep3} directly.
From the positivity of $A_1$ we observe that the orthogonal distortion
(the case C1) is always related to weighted $2$--norms of the cross ambiguity
function (the weight is in this case only $\chi_U$). For the
case C2 this can be turned into weighted $1$--norm if 
$A_2$ is positive on $U$ or fulfill certain cancellation properties. Furthermore,
the case C2 depends obviously on the polarization $\alpha$.
For particular symmetries of $U$ explicit values can be found 
as shown with the following theorem (for simplicity we consider $n=1$):
\begin{mytheorem}
   If $\BH\in\OpClass(U)$ with $\chi_U(\mu)=\chi_U(-\mu)$ and 
   $g,\gamma\in\Leb{\infty}(\Reals)$ and $B\in\Leb{\infty}(\Reals^{2})$. 
   For  $2\leq p<\infty$, $1<q\leq\infty$ and $|U|\leq 1$ it holds:
   \begin{equation}
      \frac{E_p(\mu)}{\lVert\BHSpread^{(\alpha)}_{\BH}\rVert_{q}}\leq 
      32^\frac{p-2}{4p}
      k\left(k|U|(1-L)\right)^{1/\max(q',p)}
      \label{eq:thm:approxeigen:Ep3}
   \end{equation}
   In general, $L\geq\lambda_{\max}(Q^*Q)^{1/k}$ and $Q$ 
   is an operator with spreading function $\chi_U/|U|$ in
   polarization $0$. Furthermore, 
   $L\geq l(|U|2/k)$ with $l(x)=2(1-e^{-x/2})/x$ for $U$ being a disc and
   $l(x)=2\cdot\text{erf}(\sqrt{\pi x/8})^2/x$ for $U$ being a square. 
   \label{thm:approxeigen:Ep3}
\end{mytheorem}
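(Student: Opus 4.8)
The plan is to deduce the statement from Theorem~\ref{thm:approxeigen:Ep2} by committing to a single explicit window pair and reading off the bracketed quantity there as $|U|-\langle A_k,\chi_U\rangle=|U|(1-L)$ with $L\defeq\langle A_k,\chi_U\rangle/|U|$. The exponent $1/\max(q',p)$, the factors $k$, and the hypothesis $|U|\le1$ are inherited unchanged, so only two points require work: replacing the generic prefactor $C^{(p-2)/p}$ of Theorem~\ref{thm:approxeigen:Ep2} by the absolute constant $32^{(p-2)/(4p)}$, and producing the lower bounds on the localization quantity $L$.

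For the constant I would fix $g=\gamma=\psi$, where $\psi(x)=2^{1/4}e^{-\pi x^2}$ is the $\Leb2$-normalised Gaussian, so that $\lVert\psi\rVert_\infty=\psi(0)=2^{1/4}$. In case C1 the weight is $B=\Amb_{g\gamma}^{(\alpha)}$, which by the radar uncertainty bound \eqref{eq:tfanalysis:crossamb:properties} satisfies $\lVert B\rVert_\infty\le\lVert g\rVert_2\lVert\gamma\rVert_2=1$; in case C2 one has $B=1$. In either case the constant of Theorem~\ref{thm:approxeigen:Ep1} obeys $C\le\lVert\gamma\rVert_\infty+\lVert B\rVert_\infty\lVert g\rVert_\infty\le2\lVert\psi\rVert_\infty=2^{5/4}=32^{1/4}$, which gives exactly $C^{(p-2)/p}=32^{(p-2)/(4p)}$.

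The localization bound is the core of the argument and the only place where the symmetry $\chi_U(\mu)=\chi_U(-\mu)$ enters. Let $Q=\frac{1}{|U|}\int_U\Shift_\mu^{(0)}\,d\mu$ be the operator with spreading function $\chi_U/|U|$ in polarization $0$. From $\Shift_\mu^{(0)*}=\Shift_{-\mu}^{(0)}$ and the spreading representation (Lemma~\ref{lemma:tfanalysis:linop:spreading}) the spreading function of $Q^*$ is $\nu\mapsto\overline{\chi_U(-\nu)}/|U|$, which by symmetry equals $\chi_U(\nu)/|U|$; hence $Q=Q^*$ and $\lambda_{\max}(Q^*Q)=\lVert Q\rVert^2$. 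For C2 (with $k=2$ and the optimal polarization $\alpha=0$) the definition \eqref{eq:tfanalysis:crossamb} gives $\langle A_2,\chi_U\rangle/|U|=\Real{\langle g,Q\gamma\rangle}$, whose supremum over $\lVert g\rVert_2=\lVert\gamma\rVert_2=1$ is $\lVert Q\rVert=\lambda_{\max}(Q^*Q)^{1/2}$. For C1 (with $k=1$) I would take $g=\gamma=\psi_0$ the top eigenvector of $Q$ and apply Cauchy--Schwarz in $\mu$,
\begin{equation*}
   |\langle\psi_0,Q\psi_0\rangle|^2
   =\left|\frac{1}{|U|}\int_U\langle\psi_0,\Shift_\mu^{(0)}\psi_0\rangle\,d\mu\right|^2
   \le\frac{1}{|U|}\int_U|\langle\psi_0,\Shift_\mu^{(0)}\psi_0\rangle|^2\,d\mu
   =\frac{\langle A_1,\chi_U\rangle}{|U|},
\end{equation*}
so that $L\ge\lVert Q\rVert^2=\lambda_{\max}(Q^*Q)$. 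Both cases combine to $L\ge\lambda_{\max}(Q^*Q)^{1/k}$.

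The explicit disc and square values follow by inserting the Gaussian directly rather than the abstract eigenvector. A short Gaussian integral from \eqref{eq:tfanalysis:crossamb} gives $\Amb_{\psi\psi}^{(0)}(\mu)=e^{-\frac{\pi}{2}|\mu|^2}$, whence by \eqref{eq:approxeigen:Ak} $A_1=|\Amb_{\psi\psi}^{(\alpha)}|^2=e^{-\pi|\mu|^2}$ and $A_2=\Real{\Amb_{\psi\psi}^{(0)}}=e^{-\frac{\pi}{2}|\mu|^2}$, i.e. $A_k=e^{-\frac{\pi}{k}|\mu|^2}$. Radial integration over a disc of area $|U|$ yields $\langle A_k,\chi_U\rangle/|U|=k(1-e^{-|U|/k})/|U|=l(2|U|/k)$ with $l(x)=2(1-e^{-x/2})/x$, and the separable integral over a square of area $|U|$ yields $k\,\erf(\sqrt{\pi|U|/(4k)})^2/|U|=l(2|U|/k)$ with $l(x)=2\,\erf(\sqrt{\pi x/8})^2/x$. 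The main obstacle is the interaction between the two halves: the constant $32^{1/4}$ is pinned to the Gaussian, for which $\lVert\psi\rVert_\infty$ is exactly $2^{1/4}$, whereas the abstract estimate $L\ge\lambda_{\max}(Q^*Q)^{1/k}$ uses the top eigenvector $\psi_0$ of $Q$, whose $\Leb\infty$-norm is not controlled for arbitrary symmetric $U$. Reconciling these---either by a uniform bound on $\lVert\psi_0\rVert_\infty$, or by accepting that $32^{1/4}$ is attained on the explicit shapes while $\lambda_{\max}(Q^*Q)^{1/k}$ merely quantifies the best attainable localization---is the delicate point; for the disc the difficulty disappears, since there the Gaussian is itself the extremiser of $Q$ and the prefactor stays $32^{1/4}$.
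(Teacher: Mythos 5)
Your proposal is correct and follows essentially the same route as the paper's proof: Lemma~\ref{lemma:approxeigen:lemmaep3} with the uniform bound $C_{pq}\le k$ of Lemma~\ref{lemma:approxeigen:cpb:uniform}, Gaussian windows to pin the prefactor at $32^{(p-2)/(4p)}$, Hermiticity of $Q$ at $\alpha=0$ from the symmetry of $U$ (Lemma~\ref{lemma:approxeigen:infR1:symmetric}), the convexity/Cauchy--Schwarz bound for C1 and the $\Real{\langle g,Q\gamma\rangle}=\lVert Q\rVert$ identity for C2 (Lemma~\ref{lemma:approxeigen:localization}), and direct Gaussian integration for the disc and square values $l(2|U|/k)$. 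The ``delicate point'' you flag---that the constant $32^{1/4}$ is tied to Gaussian windows while $L\ge\lambda_{\max}(Q^*Q)^{1/k}$ is attained by the eigenfunctions of $Q$---is a fair observation, but it is equally present in the paper's own proof, which combines the two without comment and resolves the tension only for the explicit shapes by evaluating on Gaussians directly.
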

\begin{proof}
   We combine Lemma \ref{lemma:approxeigen:lemmaep3} and 
   Lemma \ref{lemma:approxeigen:localization} with the uniform estimates 
   $C_{bp}\leq k$ from Lemma \ref{lemma:approxeigen:cpb:uniform} such that
   \begin{equation}
      \frac{E_p(\mu)}{\lVert\BHSpread^{(\alpha)}_{\BH}\rVert_{q}}\leq 
      32^\frac{p-2}{4p}
      k\left(k|U|(1-\lambda_{\max}(Q^*Q)^{1/k})\right)^{1/\max(q',p)}
   \end{equation}
   where $Q$ is a compact operator with spreading function $\chi_U/|U|$ in polarization $\alpha$.
   From Lemma \ref{lemma:approxeigen:infR1:symmetric} we know 
   that our assumptions imply that $Q$ is Hermitian for $\alpha=0$. In general, therefore it holds 
   that $L=\lambda_{\max}(Q^*Q)^{1/k}=\lambda_{\max}(Q)^{2/k}$ where $\lambda_{\max}(Q)$
   is at least as the value of the integral \eqref{eq:corr:approxeigen:lambdan} over the first Laguerre function.
   We abbreviate $L=l(|U|)^{2/k}$ such that
   for a disc of radius $\sqrt{|U|/\pi}$ this integral is
   $l(x)=2(1-e^{-x/2})/x$ and for
   a square of length $\sqrt{U}$ we get $l(x)=2\cdot\text{erf}(\sqrt{\pi x/8})^2/x$.   
   However, Lemma \ref{lemma:approxeigen:infR1:symmetric} asserts this as an upper
   bound achieved in this case 
   with Gaussians $g(x)=2^{1/4}e^{-\pi\langle x,x\rangle}$ which is tight for C2  but not for C1.
   This means, for C1 this can be further improved by direct evaluation on Gaussians.
   Indeed, from the proof of Corollary \ref{corr:approxeigen:gaussianbounds} we 
   know that $l(|U|\cdot2/k)\geq l(|U|)^{2/k}$ is achievable.
   For $\lVert V\rVert_\infty$ we get $\lVert V\rVert_\infty=2\lVert g\rVert_\infty=32^{1/4}$.
\end{proof}
This result can be extended in part to regions $U$ which are canonical equivalent
to discs and squares centered at the origin (see the discussion at the beginning
of Section \ref{subsec:approxeigen:laguerre}).
However, this holds in principle only for $p=2$ because such canonical transformations 
will change the constants in \eqref{eq:thm:approxeigen:Ep3}.

\section{General Analysis and Proofs}
\label{sec:genanalysis}
With the following lemma we separate
the support of the spreading function $\BHSpread^{(\alpha)}_{\BH}$ from the quantity $E_p(\mu)$. 
We shall make use of the 
non--negative function $V:\Reals^{n}\times\Reals^{2n}\rightarrow\RealsPlus$, defined as:
\begin{equation}
   V(x,\nu):=|(\Shift_\nu^{(\alpha)}\gamma)(x)-B(\nu)g(x)|\cdot \chi_U(\nu)\geq 0
   \label{eq:approxeigen:defV}
\end{equation}
and of the functionals $V_p(\nu):=\lVert V(\cdot,\nu)\rVert_p$, i.e.
the usual $p$--norms in the first argument. 
For simplifying our analysis we shall restrict ourselves 
to indicator weights $\chi_U$ (the characteristic function of $U$). 
However, the same can be repeated with slight abuse of notation using
more general weights.
\begin{mylemma}
   Let $\BH\in\OpClass(U)$,
   $1\leq p<\infty$, $1\leq q\leq\infty$. 
   If $V(\cdot,\nu)\in\Leb{p}(\Reals^n)$ for all $\nu\in U$ then it
   holds: 
   \begin{equation}
      E_p(\mu)/\lVert\BHSpread^{(\alpha)}_{\BH}\rVert_q\leq\lVert V_p\rVert_{q'}
      \label{eq:approxeigen:lemmaep1:1}
   \end{equation}   
   whenever $\BHSpread^{(\alpha)}_{\BH}\in\Leb{q}(\Reals^{2n})$ 
   and $V_p\in\Leb{q'}(\Reals^{2n})$.
   \label{lemma:approxeigen:lemmaep1}
\end{mylemma}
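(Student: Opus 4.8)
The plan is to expand $\BH\Shift_\mu\gamma$ through the spreading representation of Lemma~\ref{lemma:tfanalysis:linop:spreading} and to recognise that $\lambda(\mu)\Shift_\mu g$ is the \emph{same} integral carrying the \emph{same} phase factor, so that the difference collapses into one integral whose integrand is $\BHSpread^{(\alpha)}_{\BH}(\nu)$ times exactly the quantity whose modulus defines $V$. Writing $\BH=\int\BHSpread^{(\alpha)}_{\BH}(\nu)\Shift^{(\alpha)}_\nu\,d\nu$ and pulling $\Shift_\mu$ to the left of $\Shift^{(\alpha)}_\nu$ by the Weyl commutation relation \eqref{eq:weyl:shift:weylcommrelation}, I would use $\Shift^{(\alpha)}_\nu\Shift_\mu=e^{-i2\pi\sympl(\nu,\mu)}\Shift_\mu\Shift^{(\alpha)}_\nu$, which produces a factor $e^{-i2\pi\sympl(\nu,\mu)}$ on each $\nu$-contribution. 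Since $\lambda(\mu)=(\sFourier(\BHSpread^{(\alpha)}_{\BH}\cdot B))(\mu)=\int e^{-i2\pi\sympl(\nu,\mu)}\BHSpread^{(\alpha)}_{\BH}(\nu)B(\nu)\,d\nu$ carries precisely that same phase, the two terms merge into
\begin{equation*}
   \BH\Shift_\mu\gamma-\lambda(\mu)\Shift_\mu g
   =\int_{\Reals^{2n}} e^{-i2\pi\sympl(\nu,\mu)}\BHSpread^{(\alpha)}_{\BH}(\nu)\,
   \Shift_\mu\bigl[\Shift^{(\alpha)}_\nu\gamma-B(\nu)g\bigr]\,d\nu,
\end{equation*}
and, because $\support{\BHSpread^{(\alpha)}_{\BH}}\subseteq U$, I may freely insert $\chi_U(\nu)$ into the integrand.

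Next I would estimate the $\Leb{p}$-norm in the output variable by Minkowski's integral inequality, moving the norm inside the $\nu$-integral. The unimodular phase drops out, and since $\Shift_\mu$ acts isometrically on every $\Leb{p}(\Reals^n)$ (as stated after \eqref{eq:weyl:shift:alphageneralized}), the inner norm reduces to $\lVert\Shift^{(\alpha)}_\nu\gamma-B(\nu)g\rVert_p$, which for $\nu\in U$ is exactly $V_p(\nu)$ by the definition \eqref{eq:approxeigen:defV} of $V$ together with $V_p(\nu)=\lVert V(\cdot,\nu)\rVert_p$. This gives
\begin{equation*}
   E_p(\mu)=\lVert\BH\Shift_\mu\gamma-\lambda(\mu)\Shift_\mu g\rVert_p
   \leq\int_{\Reals^{2n}}|\BHSpread^{(\alpha)}_{\BH}(\nu)|\,V_p(\nu)\,d\nu.
\end{equation*}
A single application of H\"older's inequality with the conjugate pair $(q,q')$ then bounds the right-hand side by $\lVert\BHSpread^{(\alpha)}_{\BH}\rVert_q\,\lVert V_p\rVert_{q'}$, and dividing by $\lVert\BHSpread^{(\alpha)}_{\BH}\rVert_q$ yields \eqref{eq:approxeigen:lemmaep1:1}.

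Each step is elementary, so the only genuine care lies in the functional-analytic bookkeeping of the integral. The main obstacle I anticipate is justifying that the spreading integral converges in $\Leb{p}(\Reals^n)$ strongly enough for Minkowski's integral inequality to apply. Here $\BH\in\OpClass(U)$ means $\BHSpread^{(\alpha)}_{\BH}$ is bounded and supported on $U$ with $0<|U|<\infty$, hence $\BHSpread^{(\alpha)}_{\BH}\in\Leb{1}(\Reals^{2n})$, so \eqref{eq:tfanalysis:linop:spreading} is a genuine Bochner integral and the interchange of norm and integral is legitimate; the hypotheses $\BHSpread^{(\alpha)}_{\BH}\in\Leb{q}$ and $V_p\in\Leb{q'}$ are then exactly what makes the final H\"older step finite. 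A secondary point worth a remark is that inserting $\chi_U$ is harmless since $V_p(\nu)$ already carries this cutoff, so the $\nu$-integral may equally be read over $U$ or over all of $\Reals^{2n}$.
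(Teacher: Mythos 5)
Your proposal is correct and follows essentially the same route as the paper's own proof: expand $\BH$ via the spreading representation, commute $\Shift_\mu$ past $\Shift_\nu^{(\alpha)}$ to expose the phase $e^{-i2\pi\sympl(\nu,\mu)}$ that matches the symplectic Fourier transform defining $\lambda(\mu)$, merge the two terms, then apply the $\Leb{p}$-isometry of $\Shift_\mu$, Minkowski's integral inequality, and H\"older with the pair $(q,q')$. Your added remark justifying the interchange of norm and integral via $\BHSpread^{(\alpha)}_{\BH}\in\Leb{1}$ is a welcome refinement of a point the paper treats more briefly.
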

\begin{myproof}
   Firstly, using Weyl's commutation rule \eqref{eq:weyl:shift:weylcommrelation} and 
   the definition of $\lambda$ in 
   \eqref{eq:approxeigen:lambda} gives:
   \begin{equation}
      \begin{split}
         E_p(\mu)
         &\overset{\eqref{eq:approxeigen:epweyl}}{=}\lVert\int_{\Reals^{2n}} d\nu\,
         \BHSpread_{\BH}^{(\alpha)}(\nu)\Shift_\nu^{(\alpha)}\Shift_\mu\gamma-
         \lambda(\mu)\Shift_\mu g
         \rVert_p\\
         &\overset{\eqref{eq:weyl:shift:weylcommrelation}}{=}
         \lVert\Shift_\mu\left( 
         \int_{\Reals^{2n}} d\nu\,\BHSpread_{\BH}^{(\alpha)}(\nu)
         e^{-i2\pi\sympl(\nu,\mu)}
         \Shift_\nu^{(\alpha)}\gamma-
         \lambda(\mu) g\right)         
         \rVert_p \\         
         &\overset{\eqref{eq:approxeigen:lambda}}{=}
         \lVert\int_{\Reals^{2n}} d\nu\, \BHSpread_{\BH}^{(\alpha)}(\nu)
         e^{-i2\pi\sympl(\nu,\mu)}
         (\Shift_\nu^{(\alpha)}\gamma-B(\nu)g)
         \rVert_p\,.\\
      \end{split}
      \label{eq:approxeigen:lemmaep1:proof1}
   \end{equation}
   Note that the $p$--norm is with respect to the argument of the functions $g$ 
   and $\Shift_\nu^{(\alpha)}\gamma$.
   The last step follows because $\Shift_\mu^{(\alpha)}$ acts 
   isometrically on all $\Leb{p}(\Reals^n)$.
   Let $f:\Reals^n\times\Reals^{2n}\rightarrow\Complexes$ be the function defined as:
   \begin{equation}
      f(x,\nu):=e^{-i2\pi\sympl(\nu,\mu)}
      \BHSpread^{(\alpha)}_{\BH}(\nu)[(\Shift_\nu^{(\alpha)}\gamma)(x)-B(\nu)g(x)]
   \end{equation}   
   From $\BH\in\OpClass(U)$ (bounded spreading functions) and $V(\cdot,\nu)\in\Leb{p}(\Reals^n)$ for all $\nu\in U$ 
   it follows  
   that $f(\cdot,\nu)\in\Leb{p}(\Reals^n)$.
   Then \eqref{eq:approxeigen:lemmaep1:proof1} reads for $1\leq p<\infty$ by
   Minkowski (triangle) inequality 
   \begin{equation}
      \begin{split}
         E_p(\mu)
         &=\lVert\int_{\Reals^{2n}} d\nu f(\cdot,\nu)\rVert_p
         \leq\int_{\Reals^{2n}} d\nu\lVert f(\cdot,\nu)\rVert_p
         =\lVert \BHSpread^{(\alpha)}_{\BH}\cdot V_p\rVert_1
         \leq \lVert \BHSpread^{(\alpha)}_{\BH}\rVert_q\lVert V_p\rVert_{q'}
      \end{split}
   \end{equation}
   In the last step we used H\"older's inequality, such that 
   the claim of this lemma follows.
\end{myproof} 

\subsection{The Relation to Ambiguity Functions}
In the next lemma we shall show that 
$\lVert V_p\rVert_{q'}$ can be related to ambiguity functions, which occur 
for $p=2$. We introduce $R:\Reals^{2n}\rightarrow\RealsPlus$ as
the non--negative function:
\begin{equation}
   R:=V_2^2=(1+|\Amb^{(\alpha)}_{g\gamma}-B|^2-|\Amb^{(\alpha)}_{g\gamma}|^2)\cdot\chi_U\geq 0
   \label{eq:approxeigen:defR}
\end{equation} 
and abbreviate $R_s:=\lVert R\rVert_s$. Using \eqref{eq:approxeigen:Ak} we 
can write \eqref{eq:approxeigen:defR} for the cases C1 and C2 as $R=k(1-A_k)\cdot\chi_U$ for $k=1,2$. 
From the non--negativity of $R$ follows
that: 
\begin{equation}
   R_1=|U|+\lVert(\Amb^{(\alpha)}_{g\gamma}-B)\chi_U\rVert_2^2-\lVert\Amb^{(\alpha)}_{g\gamma}\chi_U\rVert_2^2
   \label{eq:approxeigen:R1}
\end{equation}
Hence, $R_1$ reflects an interplay between two localization criteria
in the phase space. 
In particular, we get for C1 and for C2:
\begin{equation}
   R_1=k(|U|-\langle A_k,\chi_U\rangle)
   \label{eq:approxeigen:R1:special}
\end{equation}
With the following lemma we shall explicitly provide the
relation between the bound $\lVert V_p\rVert_{q'}$ in Lemma \ref{lemma:approxeigen:lemmaep1} to the quantity $R_1$.
\begin{mylemma}
   For $1\leq p<\infty$ and $1\leq q'\leq\infty$  it
   holds (with the usual meaning for $q'=\infty$): 
   \begin{equation}
      \lVert V_p\rVert_{q'}\leq \lVert V\rVert_\infty^\frac{p-2}{p}
      \cdot R_{q'/p}^{1/p}
      \label{eq:lemma:approxeigen:lemmaep3:eq1}
   \end{equation}
   Equality is achieved for $p=2$ and then
   the minimum over $B$ of the rhs is achieved for C1.    
   For $q'<\infty$ let 
   $C_{pq}=R_\infty^\frac{q'-p}{q'p}$ for $p\leq q'$ and
   $C_{pq}=|U|^\frac{p-q'}{q'p}$ else. 
   Then it holds further that:
   \begin{equation}
      \lVert V_p\rVert_{q'}\leq
      \lVert V\rVert_\infty^\frac{p-2}{p}\cdot C_{pq}\cdot
      R_1^{1/\max(p,q')} 
      \label{eq:lemma:approxeigen:lemmaep3:eq2}
   \end{equation}
   with equality for $q'=p=2$. 
   \label{lemma:approxeigen:lemmaep3}
\end{mylemma}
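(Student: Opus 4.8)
The plan is to reduce both estimates to pointwise-in-$\nu$ manipulations of $V(\cdot,\nu)$ followed by a single application of Hölder's inequality in the $\nu$ variable, so that the whole lemma becomes an interpolation of the norm $V_p(\nu)$ between the endpoints $\Leb{2}$ and $\Leb{\infty}$. The starting observation is the one already recorded in \eqref{eq:approxeigen:defR}: because $\Shift_\nu^{(\alpha)}$ is unitary and $\lVert g\rVert_2=\lVert\gamma\rVert_2=1$, expanding the square in the definition of $V$ gives $V_2(\nu)^2=(1+|\Amb_{g\gamma}^{(\alpha)}(\nu)-B(\nu)|^2-|\Amb_{g\gamma}^{(\alpha)}(\nu)|^2)\chi_U(\nu)=R(\nu)$. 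Thus $R$ is \emph{already} the squared $\Leb{2}$-norm of $V(\cdot,\nu)$, and everything will be phrased as a comparison of $V_p$ against $V_2$ with $\lVert V\rVert_\infty$ as the other endpoint.

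\textbf{First inequality.} For $p\geq 2$ I would use the elementary pointwise bound $V(x,\nu)^p=V(x,\nu)^{p-2}V(x,\nu)^2\leq\lVert V\rVert_\infty^{p-2}V(x,\nu)^2$, valid since $p-2\geq 0$ and $0\leq V\leq\lVert V\rVert_\infty$. Integrating in $x$ yields $V_p(\nu)^p\leq\lVert V\rVert_\infty^{p-2}V_2(\nu)^2=\lVert V\rVert_\infty^{p-2}R(\nu)$, i.e. the pointwise estimate $V_p(\nu)\leq\lVert V\rVert_\infty^{(p-2)/p}R(\nu)^{1/p}$. Taking the $\Leb{q'}$-norm in $\nu$ and noting $\lVert R^{1/p}\rVert_{q'}=R_{q'/p}^{1/p}$ (read as $R_\infty^{1/p}$ when $q'=\infty$) gives \eqref{eq:lemma:approxeigen:lemmaep3:eq1}. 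At $p=2$ the interpolation step is an identity and the prefactor is $1$, so $\lVert V_2\rVert_{q'}=R_{q'/2}^{1/2}$ with equality; minimising this over $B$ amounts to minimising $|\Amb_{g\gamma}^{(\alpha)}-B|^2$ pointwise on $U$, which vanishes exactly at $B=\Amb_{g\gamma}^{(\alpha)}$, i.e. Case C1.

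\textbf{Second inequality.} Here I would simply re-express the $R_{q'/p}$ above in terms of $R_1$, exploiting that $R$ is supported in $U$, hence on a set of measure $|U|<\infty$. Writing $s:=q'/p$, the two regimes are standard: if $p\leq q'$ (so $s\geq 1$) I interpolate $\lVert R\rVert_s\leq\lVert R\rVert_1^{1/s}\lVert R\rVert_\infty^{1-1/s}$ via $\int R^s\leq\lVert R\rVert_\infty^{s-1}\lVert R\rVert_1$; if $p>q'$ (so $s<1$) I apply Hölder on $U$ with exponents $1/s$ and $1/(1-s)$ to get $\int R^s\leq\lVert R\rVert_1^{s}\,|U|^{1-s}$. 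Raising to the power $1/p$ and simplifying the exponents produces exactly $C_{pq}\cdot R_1^{1/\max(p,q')}$ in the two cases (with $\max(p,q')=q'$ resp. $p$), which is \eqref{eq:lemma:approxeigen:lemmaep3:eq2}. At $q'=p=2$ both $C_{pq}$ and the $\lVert V\rVert_\infty$ prefactor equal $1$ and $R_1^{1/2}=\lVert V_2\rVert_2$, giving equality.

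\textbf{Main obstacle.} The only genuinely delicate point is the range of $p$: the pointwise interpolation $V^p\leq\lVert V\rVert_\infty^{p-2}V^2$ reverses for $p<2$, so the first inequality is really an upper bound only for $p\geq 2$. This matches the remark in the proofs of Theorems \ref{thm:approxeigen:Ep1} and \ref{thm:approxeigen:Ep2} that only $p\geq 2$ yields a reasonable estimate, and it is the step I would flag most carefully. The rest is bookkeeping of the Hölder exponents in the $s\gtrless 1$ split, which is routine once $\support{R}\subseteq U$ is used.
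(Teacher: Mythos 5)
Your proposal is correct and follows essentially the same route as the paper's proof: the pointwise bound $V_p(\nu)^p\leq\lVert V\rVert_\infty^{p-2}V_2(\nu)^2$ is exactly the paper's H\"older $(1,\infty)$ step, and your case split $s=q'/p\gtrless 1$ (using $R/R_\infty\leq 1$ resp.\ H\"older/Jensen on $U$) reproduces the paper's two estimates verbatim, including the equality cases and the minimisation over $B$ at C1. Your flag that the first interpolation reverses for $p<2$ is also consistent with the paper's own caveat that the bound is only meaningful for $p\geq 2$.
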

The proof can be found in the Appendix \ref{appendix:lemma:approxeigen:lemmaep3:proof}.
The main reason for this lemma, in particular for the second part, is that it opens up for
case C1 the relation to weighted norms of ambiguity function (i.e. localization of $A_k$ on $U$).
However, for C2 we are also concerned with the question of positivity (and cancellation properties) in $U$.
We shall study these relations in more detail in Section \ref{subsec:localization}.

\subsection{Uniform Estimates}
\label{subsec:uniformestimates}
As already mentioned before, for ''true'' eigenstructure we have $E_p=0$ for all $p$,
such that the notion of approximate eigenstructure should be in some sense uniform in $q$ and $p$.
In the first step it is therefore necessary to validate  
uniform bounds for $C_{pq}$.
We observe that $\lVert V\rVert_\infty^\frac{p-2}{p}$ will then restrict 
the application of Lemma \ref{lemma:approxeigen:lemmaep3} only to $p\geq 2$ because 
$\lVert V\rVert_\infty$ will be in general small. For example for C2 and
$g=\gamma$ let $|U|\rightarrow 0$ 
in \eqref{eq:approxeigen:defV}. This behavior has to be expected because
the ambiguity function is a $\Leb{2}$--related construction and from 
$\Leb{2}$ boundedness one can only with
further decay conditions infer $\Leb{p}$--boundedness for $p<2$. Consequentially we
shall restrict the following analysis to
$2\leq p<\infty$ such that $\sup\lVert V\rVert_\infty^\frac{p-2}{p}=\max(\lVert V\rVert_\infty,1)$.
For $\lVert V\rVert_\infty$ we can use for example a worst case estimate of the form
$\lVert V\rVert_\infty\leq\lVert\gamma\rVert_\infty+\lVert B\rVert_\infty\cdot\lVert g\rVert_\infty
\leq\lVert\gamma\rVert_\infty+\lVert g\rVert_\infty$
which is valid for C1 
($\lVert B\rVert_\infty=\lVert\Amb_{g\gamma}\rVert_\infty\leq1$
by \eqref{eq:tfanalysis:crossamb:properties}) 
and C2.
\begin{mylemma}[Uniform Bounds for $C_{pq}$]
   For $2\leq p<\infty$ and
   $1<q\leq\infty$ it holds the uniform estimate 
   $C_{pq}\leq k$ if $q'\geq p$ 
   where $k=1$ for C1 and $k=2$ for C2. If $q'<p$ then it holds
   $C_{pq}\leq\max(|U|,1)$.
   \label{lemma:approxeigen:cpb:uniform}
\end{mylemma}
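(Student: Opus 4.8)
The plan is to start from the explicit form of $C_{pq}$ supplied in Lemma~\ref{lemma:approxeigen:lemmaep3} and treat the two regimes $p\le q'$ and $q'<p$ separately, since $C_{pq}$ is given by a different expression in each. In both regimes the argument reduces to elementary monotonicity of power functions together with two a priori bounds: a uniform bound on $R_\infty$ coming from the radar uncertainty principle, and the trivial monotonicity in $|U|$. The first thing I would record is the bound $R_\infty\le k^2$, valid for both C1 ($k=1$) and C2 ($k=2$). Recalling from \eqref{eq:approxeigen:defR} and the line following it that $R=k(1-A_k)\chi_U$, with $A_1=|\Amb^{(\alpha)}_{g\gamma}|^2$ and $A_2=\Real{\Amb^{(\alpha)}_{g\gamma}}$ as in \eqref{eq:approxeigen:Ak}, the radar uncertainty estimate \eqref{eq:tfanalysis:crossamb:properties} gives $|\Amb^{(\alpha)}_{g\gamma}|\le\lVert g\rVert_2\lVert\gamma\rVert_2=1$. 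Hence for C1 we have $A_1\in[0,1]$, so $0\le R=(1-A_1)\chi_U\le 1=1^2$, and for C2 we have $A_2\in[-1,1]$, so $0\le R=2(1-A_2)\chi_U\le 4=2^2$. In both cases $R_\infty\le k^2$, equivalently $R_\infty^{1/2}\le k$.

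For the regime $q'\ge p$, Lemma~\ref{lemma:approxeigen:lemmaep3} gives $C_{pq}=R_\infty^{(q'-p)/(q'p)}$. I would rewrite the exponent as $\tfrac1p-\tfrac1{q'}$ and observe, using $p\ge 2$ and $q'\ge p$, that it lies in $[0,\tfrac12]$. Splitting according to the size of the base: if $R_\infty\le 1$ then the nonnegative exponent forces $C_{pq}\le 1$; if $R_\infty>1$ then $t\mapsto R_\infty^{t}$ is increasing, so that $C_{pq}\le R_\infty^{1/2}\le k$. Combining the two subcases yields $C_{pq}\le\max(1,k)=k$, since $k\ge 1$.

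For the regime $q'<p$, the lemma gives $C_{pq}=|U|^{(p-q')/(q'p)}$, whose exponent $\tfrac1{q'}-\tfrac1p$ lies in $(0,1)$; here $q'\ge 1$ because $q\le\infty$, and the exponent is positive precisely because $q'<p$. Splitting again on the base: if $|U|\ge 1$ then, since the exponent does not exceed $1$, we get $|U|^{(p-q')/(q'p)}\le|U|$; if $|U|<1$ then the positive exponent gives $|U|^{(p-q')/(q'p)}\le 1$. In either case $C_{pq}\le\max(|U|,1)$, which is the asserted bound.

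The computation is routine, and the single point I would flag as requiring care is that the direction in which $t\mapsto x^t$ is monotone depends on whether the base $x$ exceeds $1$. This is exactly why the exponents must first be controlled (by $\tfrac12$ in the first regime and by $1$ in the second) and why the case distinction on the base being above or below $1$ is unavoidable if one wants the clean uniform constants $k$ and $\max(|U|,1)$. It is also worth noting that the constant for C2 is tight in the sense that $\Real{\Amb^{(\alpha)}_{g\gamma}}$ can range down to $-1$, which is what produces $R_\infty\le 4=k^2$ rather than a smaller value.
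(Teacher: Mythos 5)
Your proposal is correct and follows essentially the same route as the paper: the paper likewise reduces the claim to the identities $\sup R_\infty^{(q'-p)/(q'p)}=\max(\sqrt{R_\infty},1)$ over $q'\geq p$, $p\geq 2$ and $\sup |U|^{(p-q')/(q'p)}=\max(|U|,1)$ over $q'<p$, and then bounds $R_\infty\leq 1$ for C1 and $R_\infty\leq 4$ for C2 via \eqref{eq:tfanalysis:crossamb:properties}. You have simply written out the elementary case analysis on the base and the exponent that the paper leaves as ``easily verified.''
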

\begin{proof}
   It is easily verified that
   $\sup |U|^\frac{p-q'}{q'p}=\max(|U|,1)$ where the supremum is over all
   $1\leq q'<p$ and $2\leq p<\infty$.
   The same
   can be found also for $1\leq p<\infty$.    
   Similarly we get for the quantity $R_\infty^\frac{q'-p}{q'p}$ the
   uniform estimate $\sup R_\infty^\frac{q'-p}{q'p}=\max(\sqrt{R_\infty},1)$ where $p\leq q'\leq\infty$ $2\leq p<\infty$.
   For $1\leq p<\infty$ we would get instead $\max(R_\infty,1)$. 
   From the non--negativity of $R$ it follows that:
   \begin{equation}
      R_\infty
      =k(1-\underset{\nu\in U}{\einf} A_k(\nu))
   \end{equation} 
   From \eqref{eq:tfanalysis:crossamb:properties} it follows 
   that the inequality $R_\infty\leq 1$ is always fulfilled for C1.
   For the case C2 this gives instead that $R_\infty\leq 4$, in general.
\end{proof}
The following lemma provides a simple upper bound on $E_p/\lVert\BHSpread^{(\alpha)}_{\BH}\rVert_q$ 
which is for $p=2$ uniformly in $g$ and $\gamma$. Thus, it will serve as a benchmark.
\begin{mylemma}[Uniform Bound for $E_p/\lVert\BHSpread_{\BH}^{(\alpha)}\rVert_q$]
   For $1\leq p<\infty$ and
   $1<q\leq\infty$ it holds: 
   \begin{equation}   
      \frac{E_p(\mu)}{\lVert\BHSpread^{(\alpha)}_{\BH}\rVert_q}\leq
      \lVert V\rVert_\infty^\frac{p-2}{p}\cdot k^{2/p}\cdot |U|^{1/q'}
      \label{eq:approxeigen:ep:uniform}
   \end{equation}
   with $k=1$ for C1 and $k=2$ for C2. 
   \label{lemma:approxeigen:ep:uniform}
\end{mylemma}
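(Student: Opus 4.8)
The plan is to chain the two preparatory lemmas already in place and then collapse the right-hand side to an elementary $\Leb{s}$-versus-$\Leb{\infty}$ estimate on a set of finite measure. First I would invoke Lemma \ref{lemma:approxeigen:lemmaep1}, which already delivers
\begin{equation*}
   \frac{E_p(\mu)}{\lVert\BHSpread^{(\alpha)}_{\BH}\rVert_q}\leq\lVert V_p\rVert_{q'}\,.
\end{equation*}
To the right-hand side I would apply the first inequality \eqref{eq:lemma:approxeigen:lemmaep3:eq1} of Lemma \ref{lemma:approxeigen:lemmaep3}, namely $\lVert V_p\rVert_{q'}\leq\lVert V\rVert_\infty^{\frac{p-2}{p}}\,R_{q'/p}^{1/p}$, so that the whole task reduces to bounding the single scalar $R_{q'/p}$ by $k^2\,|U|^{p/q'}$.

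The key observation I would use is that $R$ is supported in $U$ and is uniformly bounded by $k^2$. Writing $R=k(1-A_k)\chi_U$ for $k=1,2$ as established after \eqref{eq:approxeigen:defR}, the radar uncertainty principle \eqref{eq:tfanalysis:crossamb:properties} forces $|\Amb^{(\alpha)}_{g\gamma}|\leq 1$ (recall the normalization $\lVert g\rVert_2=\lVert\gamma\rVert_2=1$); hence $A_1=|\Amb^{(\alpha)}_{g\gamma}|^2\in[0,1]$ and $A_2=\Real{\Amb^{(\alpha)}_{g\gamma}}\in[-1,1]$. This yields $R_\infty=k(1-\underset{\nu\in U}{\einf}A_k(\nu))\leq k^2$ in both cases (value $1$ for C1 with $k=1$, value $4$ for C2 with $k=2$) -- precisely the essential-supremum estimate already recorded inside the proof of Lemma \ref{lemma:approxeigen:cpb:uniform}.

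With $R\leq k^2$ almost everywhere on its support and $R\equiv 0$ off $U$, the remaining step is entirely elementary: for the exponent $s=q'/p>0$, monotonicity of $x\mapsto x^s$ gives
\begin{equation*}
   R_s=\left(\int_U R(\nu)^s\,d\nu\right)^{1/s}\leq\left((k^2)^s\,|U|\right)^{1/s}=k^2\,|U|^{1/s}\,.
\end{equation*}
Substituting $s=q'/p$ and raising to the power $1/p$ produces $R_{q'/p}^{1/p}\leq k^{2/p}\,|U|^{1/q'}$, which inserted into the chain above is exactly the asserted bound. I do not expect any genuine obstacle: the substance is carried entirely by the two earlier lemmas together with the normalization-driven estimate $R_\infty\leq k^2$, and the hypothesis $1<q$ merely guarantees $q'<\infty$ so that $|U|^{1/q'}$ is the honest power appearing. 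The only point meriting a word of care is that the statement is claimed over the full range $1\leq p<\infty$, so I would remark that \eqref{eq:lemma:approxeigen:lemmaep3:eq1} holds throughout even though the factor $\lVert V\rVert_\infty^{\frac{p-2}{p}}$ carries a negative exponent for $p<2$; this is harmless and consistent with the subsequent restriction to $p\geq 2$ motivated in Section \ref{subsec:uniformestimates}.
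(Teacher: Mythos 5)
Your proposal is correct and follows essentially the same route as the paper's own (very terse) proof: Lemma \ref{lemma:approxeigen:lemmaep1}, then \eqref{eq:lemma:approxeigen:lemmaep3:eq1} of Lemma \ref{lemma:approxeigen:lemmaep3}, then the flat estimate $R_{q'/p}\leq R_\infty|U|^{p/q'}$ combined with the uniform bound on $R_\infty$ from the proof of Lemma \ref{lemma:approxeigen:cpb:uniform}. You are in fact slightly more careful than the paper, which writes ``$R_1\leq R_\infty|U|$'' where $R_{q'/p}$ is meant and ``$R_\infty\leq k$'' where $R_\infty\leq k^2$ is needed to produce the factor $k^{2/p}$; your explicit derivation of $R_\infty\leq k^2$ from $A_1\in[0,1]$ and $A_2\in[-1,1]$ fixes both slips.
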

\begin{proof}
   We use $R_1\leq R_\infty\cdot |U|$ in \eqref{eq:lemma:approxeigen:lemmaep3:eq1} of Lemma
   \ref{lemma:approxeigen:lemmaep3} and the uniform estimates $R_\infty\leq k$.
   from Lemma \ref{lemma:approxeigen:cpb:uniform}.
\end{proof}
This bound can not be related to ambiguity functions, i.e. will give no insight
on possible improvements due to localization.

\subsection{Weighted Norms of Ambiguity Functions and Localization}
\label{subsec:localization}
In the previous section we have shown that $R_1$ is a relevant term, which controls
the approximate eigenstructure. 
In the following analysis we shall further investigate $R_1$. We are interested in 
$\inf_{g,\gamma}(R_1)$ which is:
\begin{equation}
   \inf_{g,\gamma} R_1=k|U|\left(1-\sup_{g,\gamma}\langle A_k,\BHScat\rangle\right)
   \label{eq:approxeigen:localization:R1}
\end{equation}
where $\BHScat:=\chi_U/|U|$.
Thus, \eqref{eq:approxeigen:localization:R1} is a particular case
of a more general problem, where $\BHScat$ is some arbitrary weight (non--negative) function
$\BHScat$.
Thus, let us consider  $\sup_{g,\gamma}\langle A_k,\BHScat\rangle$ and
let us focus first only on $A_1=|\Amb_{g\gamma}^{(\alpha)}|^2$ which is also positive.
Since $A_1$ is quadratic in $\gamma$ we can rewrite 
$\langle A_1,\BHScat\rangle=\langle \gamma,L_{\BHScat,g}\gamma\rangle$ where this quadratic
form defines (weakly) an operator $L_{\BHScat,g}$. 
Such operators are also called \emph{localization operators} \cite{daubechiez:tflocalization:geophase}
and it follows  that $\sup_{\gamma}\langle A_1,\BHScat\rangle=\lambda_{\max}(L_{\BHScat,g})$.
The eigen--values and eigen--functions of Gaussian ($g$ is set to be a Gaussian)
localization operators on the disc ($U$ is a disc) 
are known to be Hermite functions (more generally this holds if $\BHScat$ has elliptical symmetry).
Kozek \cite{kozek:thesis,kozek:eigenstructure} found that
for elliptical symmetry also the joint optimization results in Hermite 
functions\footnote{Kozek considered $g=\gamma$. However one can show that
  for elliptical symmetry around the origin the optimum has also this property.}.
For $\BHScat$ being Gaussian 
the joint optimum ($g$ \emph{and} $\gamma$) is known explicitly \cite{jung:isit06}. 
The last result is based on
a theorem, formulated in \cite{jung:isit06}, which we will need
also in this paper. Let us consider for simplicity once again the 
one--dimensional case (the generalizations for $n>1$ are similar),
i.e. for  $n=1$ we have:
\begin{mytheorem}
   Let $\lVert g\rVert_2=\lVert\gamma\rVert_2=1$ and 
   $s,r\in\Reals$. Furthermore
   let $\BHScat\in\Leb{s'}(\Reals^2)$. Then the inequality:
   \begin{equation}
      \langle|\Amb^{(\alpha)}_{g\gamma}|^r, \BHScat \rangle \leq 
      \left(\frac{2}{rs}\right)^{\frac{1}{s}}\lVert\BHScat\rVert_{s'}
      \label{eq:jung:fidelitybound}
   \end{equation}
   holds for each $s\geq\max\{1,\frac{2}{r}\}$.
   \label{thm:jung:fidelitybound}
\end{mytheorem}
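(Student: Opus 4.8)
The plan is to decouple the weight $\BHScat$ from the ambiguity function by Hölder's inequality and thereby reduce the whole statement to a single sharp scalar estimate, namely Lieb's integral bound on ambiguity functions. Set $p:=rs$; the hypothesis $s\geq\max\{1,2/r\}$ is exactly $s\geq 1$ together with $p\geq 2$. Applying Hölder with the conjugate pair $(s,s')$ gives
\begin{equation}
   \langle|\Amb^{(\alpha)}_{g\gamma}|^r,\BHScat\rangle\leq
   \lVert\,|\Amb^{(\alpha)}_{g\gamma}|^r\,\rVert_s\,\lVert\BHScat\rVert_{s'}
   =\lVert\Amb^{(\alpha)}_{g\gamma}\rVert_p^{\,r}\,\lVert\BHScat\rVert_{s'},
\end{equation}
since $\lVert\,|\Amb|^r\,\rVert_s=(\int|\Amb|^{rs})^{1/s}=\lVert\Amb\rVert_p^{\,r}$. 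Because $r/p=1/s$, the claim now follows from the single inequality $\lVert\Amb^{(\alpha)}_{g\gamma}\rVert_p^{\,p}\leq 2/p$ for all $p\geq 2$ (recall $\lVert g\rVert_2=\lVert\gamma\rVert_2=1$): raising it to the power $r/p=1/s$ produces precisely the factor $(2/(rs))^{1/s}$.

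It remains to prove $\lVert\Amb^{(\alpha)}_{g\gamma}\rVert_p^{\,p}\leq 2/p$ for $p\geq 2$. First note that by \eqref{eq:weyl:shift:alphageneralized} the polarizations differ only by a unimodular factor, so $|\Amb^{(\alpha)}_{g\gamma}|=|\Amb^{(0)}_{g\gamma}|$ and I may work in the symmetric polarization $\alpha=0$. For a fixed lag $\mu_1$, the definition \eqref{eq:tfanalysis:crossamb} shows that $\mu_2\mapsto\Amb^{(0)}_{g\gamma}(\mu_1,\mu_2)$ is (up to a sign in the exponent) the one–dimensional Fourier transform of $G_{\mu_1}(x):=\bar g(x+\mu_1/2)\gamma(x-\mu_1/2)$. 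Since $p\geq 2$ has dual $p'\leq 2$, the sharp Hausdorff–Young inequality in $\mu_2$, with the sharp constant $c_{p'}$ (as in Section \ref{subsec:approxeigen:symbolcalculus}), gives
\begin{equation}
   \int|\Amb^{(0)}_{g\gamma}(\mu_1,\mu_2)|^p\,d\mu_2\leq
   c_{p'}^{\,p}\Big(\int|G_{\mu_1}(x)|^{p'}\,dx\Big)^{p/p'}.
\end{equation}
The inner integral equals $\int|g(x+\mu_1/2)|^{p'}|\gamma(x-\mu_1/2)|^{p'}\,dx$, which after a substitution is a convolution of $|g|^{p'}$ with the reflected $|\gamma|^{p'}$, evaluated at $\mu_1$. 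Integrating over $\mu_1$ and applying the sharp Young convolution inequality with the balanced exponents $a=b=2/p'$ and target exponent $p/p'$ (the constraint $1/a+1/b=p'=1+p'/p$ holds, and $\lVert\,|g|^{p'}\,\rVert_{2/p'}=\lVert g\rVert_2^{p'}=1$, likewise for $\gamma$) yields $\lVert\Amb^{(0)}_{g\gamma}\rVert_p^{\,p}\leq c_{p'}^{\,p}\,C_Y^{\,p/p'}$, where $C_Y$ is the sharp Young constant.

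The last and most delicate step is to show that the product of the two sharp constants collapses to exactly $2/p$; this is the main obstacle and the reason the result is nontrivial. Rather than evaluating $c_{p'}$ and $C_Y$ and multiplying (both are Beckner-type expressions), I would argue by extremizers: Gaussians are simultaneously the equality cases of the sharp Hausdorff–Young and of the sharp Young inequality, so the entire chain is an equality when $g=\gamma$ is a normalized Gaussian. For $g(x)=2^{1/4}e^{-\pi x^2}$ one has $|\Amb^{(0)}_{gg}(\mu)|=e^{-\frac{\pi}{2}\langle\mu,\mu\rangle}$, hence $\int_{\Reals^2}|\Amb^{(0)}_{gg}|^p\,d\mu=\int e^{-\frac{\pi p}{2}\langle\mu,\mu\rangle}\,d\mu=2/p$. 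Therefore the constant in $\lVert\Amb^{(0)}_{g\gamma}\rVert_p^{\,p}\leq(\mathrm{const})\,\lVert g\rVert_2^p\lVert\gamma\rVert_2^p$ is forced to equal $2/p$, which closes the argument. I stress that the elementary estimates $\lVert\Amb\rVert_2=1$ and $\lVert\Amb\rVert_\infty\leq 1$ from \eqref{eq:tfanalysis:crossamb:properties} give only $\lVert\Amb\rVert_p\leq 1$ by interpolation; the genuine improvement to $2/p$ requires the sharp forms of Hausdorff–Young and Young and, crucially, the exponent pairing $a=b=2/p'$, $r=p/p'$ that makes the $\Leb{2}$–normalizations of $g$ and $\gamma$ reappear.
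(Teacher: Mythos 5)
Your argument is correct and is essentially the paper's own route (the paper defers the proof to \cite{jung:isit06}): H\"older with the conjugate pair $(s,s')$ reduces the claim to Lieb's bound $\lVert\Amb^{(\alpha)}_{g\gamma}\rVert_{p}^{p}\leq 2/p$ with $p=rs$, and your reading of the hypothesis $s\geq\max\{1,2/r\}$ as $s\geq1$ together with $p\geq2$ is exactly what makes that bound applicable. Where you go beyond the paper is in re-deriving Lieb's inequality itself; your sketch is Lieb's original argument (sharp Hausdorff--Young in $\mu_2$, then sharp Young in $\mu_1$ with $a=b=2/p'$, which is what resurrects the $\Leb{2}$--normalizations), and the extremizer shortcut for identifying the constant is legitimate: for $g=\gamma=2^{1/4}e^{-\pi x^2}$ the section $x\mapsto\bar g(x+\mu_1/2)\gamma(x-\mu_1/2)$ is a Gaussian for every fixed $\mu_1$ and $|g|^{p'}$ is a Gaussian, so equality holds simultaneously in both sharp inequalities with your exponents, and the direct evaluation $\int_{\Reals^2}e^{-\pi p\langle\mu,\mu\rangle/2}\,d\mu=2/p$ then forces the product of the two Beckner constants to equal $2/p$ without computing either one.
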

From \eqref{eq:weyl:shift:alphageneralized} follows that \eqref{eq:jung:fidelitybound} does not depend 
on the polarization $\alpha$.
The proof can be found in \cite{jung:isit06} and is based on a result of 
E. Lieb \cite{lieb:ambbound}. Note that apart from the 
normalization constraint the bound in Theorem \ref{thm:jung:fidelitybound} 
does not depend anymore on $g$ and $\gamma$. Hence 
for any given  $\BHScat$ the optimal bound $N_r(\BHScat)$
can be found by
\begin{equation}
   N_r(\BHScat):=\min_{\Reals\ni s\geq\max\{1,\frac{2}{r}\}}\left(
     \left(\frac{2}{rs}\right)^{\frac{1}{s}}\lVert\BHScat\rVert_{s'}
   \right)
   \label{eq:jung:fidelitybound:min}
\end{equation}
The equality case in Theorem \ref{thm:jung:fidelitybound} 
is given for $g$,$\gamma$ and $\BHScat$ being Gaussians (see \cite{jung:isit06}
for more details).
The following lemma states lower and upper bounds on the optimal achievable values of
the quantities $\langle A_k,\BHScat\rangle$.
\begin{mylemma}
   Let be $\BHScat:\Reals^{2n}\rightarrow\RealsPlus$ a non--negative weight function with $\lVert\BHScat\rVert_1=1$. 
   Then it holds:
   \begin{equation}
      \lambda_{\max}(Q^*Q)\leq \sup_{g,\gamma}\langle A_1,\BHScat\rangle\leq
      N_2(\BHScat)
   \end{equation}
   for case C1 and equivalently for case C2:
   \begin{equation}
      \lambda_{\max}(Q^*Q)^{1/2}=\max_{g,\gamma}\langle A_2,\BHScat\rangle\leq
      N_1(\BHScat)
   \end{equation}
   where $Q$ is the operator with spreading function $\BHScat$ in polarization $\alpha$.
   \label{lemma:approxeigen:localization}
\end{mylemma}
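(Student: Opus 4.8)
The plan is to reduce everything to one computation of the sesquilinear form $\langle g,Q\gamma\rangle$ and then separate the two cases by elementary inequalities. First I would establish the identity connecting the weighted ambiguity integral to the operator $Q$. Since $Q$ has spreading function $\BHScat$ in polarization $\alpha$, i.e. $\BHSpread_Q^{(\alpha)}=\BHScat$, and a rank--one operator $X=\langle\gamma,\cdot\rangle g$ satisfies $\BHSpread_X^{(\alpha)}=\bar{\Amb}_{g\gamma}^{(\alpha)}$, the Parseval identity \eqref{eq:tfanalysis:weyl:parceval} in its rank--one form (recorded immediately after \eqref{eq:tfanalysis:weyl:parceval}) gives $\langle g,Q\gamma\rangle=\langle X,Q\rangle_{\schattenclass_2}=\langle\bar{\Amb}_{g\gamma}^{(\alpha)},\BHScat\rangle=\int_{\Reals^{2n}}\Amb_{g\gamma}^{(\alpha)}\,\BHScat\,d\nu$. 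Because $\BHScat$ is real and non--negative, taking real parts yields $\langle A_2,\BHScat\rangle=\Real{\langle g,Q\gamma\rangle}$, which is the cornerstone of the C2 analysis.

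For case C2 I would then invoke the variational characterization of the operator norm. Maximizing $\Real{\langle g,Q\gamma\rangle}$ over unit $g$ for fixed $\gamma$ gives $\lVert Q\gamma\rVert_2$ (attained at $g=Q\gamma/\lVert Q\gamma\rVert_2$), and maximizing over unit $\gamma$ gives $\lVert Q\rVert=\lambda_{\max}(Q^*Q)^{1/2}$. Since $\BHScat\in\Leb{2}$ makes $Q$ Hilbert--Schmidt, hence compact, the supremum is attained and one legitimately writes $\max$; this is the equality $\lambda_{\max}(Q^*Q)^{1/2}=\max_{g,\gamma}\langle A_2,\BHScat\rangle$. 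The matching C2 upper bound is then immediate: $\langle A_2,\BHScat\rangle\leq\langle|\Amb_{g\gamma}^{(\alpha)}|,\BHScat\rangle\leq N_1(\BHScat)$ by Theorem \ref{thm:jung:fidelitybound} with $r=1$ together with \eqref{eq:jung:fidelitybound:min}.

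For case C1 the upper bound $\sup_{g,\gamma}\langle A_1,\BHScat\rangle\leq N_2(\BHScat)$ is a direct application of Theorem \ref{thm:jung:fidelitybound} with $r=2$ and the definition of $N_2$ in \eqref{eq:jung:fidelitybound:min}, since $A_1=|\Amb_{g\gamma}^{(\alpha)}|^2$. For the lower bound I would exploit that $\lVert\BHScat\rVert_1=1$ makes $\BHScat\,d\nu$ a probability measure, so Jensen's (equivalently Cauchy--Schwarz) inequality gives $\langle|\Amb_{g\gamma}^{(\alpha)}|,\BHScat\rangle^2\leq\langle|\Amb_{g\gamma}^{(\alpha)}|^2,\BHScat\rangle=\langle A_1,\BHScat\rangle$. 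Chaining this with $\langle A_2,\BHScat\rangle\leq\langle|\Amb_{g\gamma}^{(\alpha)}|,\BHScat\rangle$ yields $\langle A_2,\BHScat\rangle^2\leq\langle A_1,\BHScat\rangle$ for every $(g,\gamma)$; evaluating at the maximizer furnished by C2, where $\langle A_2,\BHScat\rangle=\lVert Q\rVert\geq0$, gives $\lambda_{\max}(Q^*Q)=\langle A_2,\BHScat\rangle^2\leq\langle A_1,\BHScat\rangle\leq\sup_{g,\gamma}\langle A_1,\BHScat\rangle$, which is the claim.

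I expect the main obstacle to be the first step: pinning down $\langle A_2,\BHScat\rangle=\Real{\langle g,Q\gamma\rangle}$ with the correct polarization and conjugation bookkeeping, and ensuring that $Q$ is genuinely well defined (Hilbert--Schmidt, hence compact, so that the operator norm is attained) under the hypotheses in force; $\lVert\BHScat\rVert_1=1$ alone does not give this, but $\BHScat\in\Leb{2}$ does, and this holds in the intended application $\BHScat=\chi_U/|U|$. Everything downstream is elementary: the operator--norm variational principle, one Jensen step, and two appeals to Theorem \ref{thm:jung:fidelitybound}. A secondary point worth flagging is why C1 carries only $\sup$ with inequality while C2 enjoys equality: the Jensen step is generically strict unless $|\Amb_{g\gamma}^{(\alpha)}|$ is constant on the support of $\BHScat$, so the $\lambda_{\max}(Q^*Q)$ lower bound for C1 need not be saturated.
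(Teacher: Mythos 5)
Your proposal is correct and follows essentially the same route as the paper: identify $\langle g,Q\gamma\rangle$ with the $\BHScat$--weighted ambiguity integral via the rank--one Parseval identity, obtain the C2 equality from the variational characterization of $\lVert Q\rVert$, get the C1 lower bound from Jensen/Cauchy--Schwarz against the probability density $\BHScat$, and invoke Theorem \ref{thm:jung:fidelitybound} for both upper bounds. The only (immaterial) difference is that you derive the C1 lower bound by passing through the C2 maximizer rather than bounding $|\langle g,Q\gamma\rangle|^2\leq\langle A_1,\BHScat\rangle$ directly and optimizing, and you are slightly more explicit than the paper about why $Q$ is compact so that the maxima are attained.
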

\begin{proof}
   Considering first the case C1 (that is $k=1$), which
   is independent of the polarization $\alpha$. The corresponding term
   $\langle A_1,\BHScat\rangle$ is relevant in the theory of WSSUS pulse shaping \cite{jung:wssuspulseshaping}
   where $\BHScat$ is called the scattering function.
   In \cite{jung:isit05} we have already pointed out that a lower bound can be obtained from
   convexity. We have:
   \begin{equation}
       |\langle g,Q\gamma\rangle|^2\leq\langle A_1,\BHScat\rangle\leq N_2(\BHScat)
   \end{equation}
   where $Q$ is a compact (follows from normalization) operator with spreading function $\BHScat$.
   The uniform upper bound is according to \eqref{eq:jung:fidelitybound:min}.
   The optimum of the lower bound is achieved for $g$ and $\gamma$ being the eigen--functions
   of $Q^*Q$ and $QQ^*$ corresponding to the maximal eigen--value $\lambda_{\max}(Q^*Q)$, such that
   for the supremum over $g$ and $\gamma$ it follows that:
   \begin{equation}
       \lambda_{\max}(Q^*Q)\leq\sup_{g,\gamma}\langle A_1,\BHScat\rangle\leq N_2(\BHScat)
   \end{equation}   
   For the case C2 ($k=2$) we proceed as follows.
   For a given $\gamma$ we have:
   \begin{equation}
      \langle A_2,\BHScat\rangle=\frac{1}{2}\left(\langle Q\gamma,g\rangle+\langle g,Q\gamma\rangle\right)
      \leq\lVert Q\gamma\rVert_2
   \end{equation}
   with equality in the last step for $g=Q\gamma/\lVert Q\gamma\rVert_2$. Choosing $\gamma$ 
   from the eigen--space of $Q^*Q$ related to the maximal eigen--value, we get:
   \begin{equation}
      \lambda_{\max}(Q^*Q)^{1/2}
      =\max_{g,\gamma}\langle A_2,\BHScat\rangle\leq N_1(\BHScat)
   \end{equation}
   because
   $\langle A_2,\BHScat\rangle
   \leq\langle|A_2|,\BHScat\rangle\leq\langle\sqrt{A_1},\BHScat\rangle
   \leq N_1(\BHScat)$ where again $N_1$ is from
   \eqref{eq:jung:fidelitybound:min}.
\end{proof}
For the particular weight function of interest in this paper, i.e. for $\BHScat=\chi_U/|U|$ the upper
bounds can be calculated explicitely. For $n=1$ we get the following result:
\begin{mycorollary}[Norm Bounds for Flat Scattering]
   Let be $\BHScat:=\chi_U/|U|$.
   Then it holds that:
   \begin{equation}
      \langle |\Amb^{(\alpha)}_{g\gamma}|^r, \BHScat \rangle <N_r(\BHScat)=
      \begin{cases}
         e^{-\frac{r|U|}{2e}} & |U|\leq 2e/r^*\\
         \left(\frac{2}{r^*|U|}\right)^{r/r^*} & \text{else}
      \end{cases}
      \label{eq:ambbound:supp:best}
   \end{equation}
   where $r^*=\max\{r,2\}$. It is not possible to achieve equality. 
   \label{corr:ambbound:supp}
\end{mycorollary}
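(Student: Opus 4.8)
The plan is to evaluate the one--parameter minimisation \eqref{eq:jung:fidelitybound:min} defining $N_r(\BHScat)$ explicitly for the flat weight $\BHScat=\chi_U/|U|$. First I would compute the relevant $\Leb{s'}$--norm: since $\BHScat$ is constant on $U$ and vanishes off it, $\lVert\BHScat\rVert_{s'}=(|U|\cdot|U|^{-s'})^{1/s'}=|U|^{1/s'-1}=|U|^{-1/s}$, using $1/s+1/s'=1$. Substituting this into the bound of Theorem \ref{thm:jung:fidelitybound} collapses its right hand side to the single scalar function
\begin{equation}
   \phi(s):=\left(\frac{2}{rs}\right)^{1/s}|U|^{-1/s}=\left(\frac{2}{rs|U|}\right)^{1/s},
\end{equation}
so that $N_r(\BHScat)=\min_{s\geq s_0}\phi(s)$ with admissible range $s_0:=\max\{1,\tfrac{2}{r}\}$. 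The useful bookkeeping observation is that $s_0=r^*/r$ with $r^*=\max\{r,2\}$ (checking the two subcases $r\geq 2$ and $r<2$ separately); this is what will make the two branches of \eqref{eq:ambbound:supp:best} align with the case distinction on $|U|$.

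Next I would carry out the unconstrained minimisation of $\phi$. Writing $c:=2/(r|U|)$ and taking logarithms gives $\ln\phi(s)=(\ln c-\ln s)/s$, whose derivative equals $(\ln s-1-\ln c)/s^2$, i.e. has the sign of $\ln\bigl(s/(ec)\bigr)$. Hence $\phi$ strictly decreases on $(0,ec)$ and strictly increases on $(ec,\infty)$, with a unique global minimiser at $\hat s:=ec=2e/(r|U|)$ and minimal value $\phi(\hat s)=e^{-1/(ec)}=e^{-r|U|/(2e)}$. The only remaining question is admissibility, namely whether $\hat s\geq s_0$. Using $s_0=r^*/r$ this is precisely $2e/(r|U|)\geq r^*/r$, equivalently $|U|\leq 2e/r^*$, which is exactly the threshold appearing in \eqref{eq:ambbound:supp:best}.

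In the regime $|U|\leq 2e/r^*$ the interior minimiser $\hat s$ is feasible, so $N_r(\BHScat)=\phi(\hat s)=e^{-r|U|/(2e)}$, the first branch. In the complementary regime $|U|>2e/r^*$ one has $\hat s<s_0$, so by the monotonicity just established $\phi$ is increasing throughout $[s_0,\infty)$ and the constrained minimum is attained at the boundary $s=s_0=r^*/r$; evaluating there gives $\phi(s_0)=(2/(r^*|U|))^{r/r^*}$, the second branch. This yields the explicit formula for $N_r(\BHScat)$.

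Finally, for the strict inequality $\langle|\Amb^{(\alpha)}_{g\gamma}|^r,\BHScat\rangle<N_r(\BHScat)$ I would invoke the equality case of Theorem \ref{thm:jung:fidelitybound}: equality in \eqref{eq:jung:fidelitybound} at any admissible exponent (in particular at the minimiser $s^*$) forces $g$, $\gamma$ and $\BHScat$ to be Gaussians, whereas $\BHScat=\chi_U/|U|$ is the indicator of a set of finite positive measure and hence not Gaussian. The bound is therefore strict at the optimal exponent, giving $\langle|\Amb^{(\alpha)}_{g\gamma}|^r,\BHScat\rangle<\phi(s^*)=N_r(\BHScat)$, and equality can never be achieved. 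I expect the only genuinely delicate point to be the feasibility/monotonicity bookkeeping --- verifying that the identity $s_0=r^*/r$ holds in both regimes $r\gtrless 2$ and that the boundary is indeed the constrained minimiser when $\hat s<s_0$; the norm computation, the differentiation, and the equality--case argument are then routine once $\phi$ is in hand.
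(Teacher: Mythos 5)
Your proposal is correct and follows essentially the same route as the paper's own argument (which defers to \cite{jung:isit06} and carries out exactly this computation): evaluate $\lVert\BHScat\rVert_{s'}$ for the flat weight, minimise the resulting one--parameter function $\bigl(2/(rs|U|)\bigr)^{1/s}$ by logarithmic differentiation to find the interior optimum $\hat s=2e/(r|U|)$, check feasibility against $s_0=r^*/r$ to obtain the two branches, and deduce strictness from the Gaussian equality case of Theorem \ref{thm:jung:fidelitybound}. All the computations (the norm, the critical point, the boundary value, and the threshold $|U|\leq 2e/r^*$) check out.
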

The proof is obviously independent of $\alpha$ and available in \cite{jung:isit06}. 
\begin{myremark}
   When using the WSSUS model \cite{bello:wssus} for doubly--dispersive 
   mobile communication channels one typically assumes 
   time--frequency scattering within a shape
   $U=[0,\tau_d]\times[-B_d,B_d]$
   such that $|U|=2B_d\tau_d\ll 1< e$, where $B_d$ denotes maximum Doppler bandwidth $B_d$
   and $\tau_d$ is maximum delay spread.
   Then \eqref{eq:ambbound:supp:best} predicts for a $\Leb{1}$--normalized scattering function $\BHScat:=|U|^{-1}\chi_U$, that 
   the best (mean) correlation response ($r=2$) in using filter $g$ at the 
   receiver and $\gamma$ at the transmitter is bounded above by 
   $e^{-2B_d\tau_d/e}$. 
\end{myremark}
\if0
Again --- Corollary \ref{corr:ambbound:supp} can be transformed into 
a necessary condition on the support $|U|$ for achieving a certain level 
$\delta$ of this response, i.e.
\begin{equation}
   \delta\leq N_r(\BHScat)\,
   \Longrightarrow\,
   |U|\leq \begin{cases}
      -\frac{2e}{r}\ln\delta& \delta\leq e^{-r/r^*}\\
      \frac{2}{r^*}\delta^{-r^*/r}& \text{else}
   \end{cases}
\end{equation}
For example -- a necessary condition to achieve a mean correlation ($r=2$) response 
$\delta\geq 1/e$ is that the size of support $|U|$ of a flat scattering distribution 
has to be smaller than (or equal to) $e$. 
\fi

From the definition of $R_1$ in \eqref{eq:approxeigen:R1} and from  
\eqref{eq:ambbound:supp:best} of Corollary \ref{corr:ambbound:supp} 
we know
that for $|U|\leq k e$ we have the estimate:
\begin{equation}
   k|U|(1-e^{-\frac{|U|}{ke}})<\inf_{g,\gamma}(R_1)\leq k|U|(1-\lambda_{\max}(Q^*Q)^{1/k})
\end{equation}
which are implicit inequalities for $|U|$. 
The restriction $|U|\leq e$ for the lower bound can be removed if the second
alternative in \eqref{eq:ambbound:supp:best} of Corollary \ref{corr:ambbound:supp} is further
studied. However, for simplicity
we have considered only the first region which is suited to our application (small $|U|$).
In particular, 
with $R_1\leq R_\infty |U|$ we have
also $R_\infty\geq k(1-e^{-\frac{|U|}{k e}})$.
This proves also the assertion in \cite{jung:isit07}, i.e.
a necessary condition for $R_\infty\leq1$ is that $|U|\leq 2e\ln2$ . Furthermore
for $R_\infty\rightarrow k$ the size constraint on $U$ vanishes.

\if0
It might be relevant how to fulfill this implicit requirement
for a given $R_1$. This can be found
by considering the following bound for the exponential part of the implicit inequality,
i.e. let $c_k$ a constant such that 
$e^{-\frac{|U|}{ek}}\geq (1+c_k|U|)^{-1}$ holds for all $|U|\in[0,e]$.
Both sides are convex and monotone decreasing in $|U|$, agree at $|U|=0$ and with
$c_k\geq(e^{1/k}-1)/e$ this inequality is correct for all $|U|\in[0,e]$. Now --
from
\begin{equation}
   |U|(1-e^{-\frac{|U|}{ek}})\leq\frac{|U|^2}{1/c_k+|U|}\leq\delta
\end{equation}
we get the explicit relation: 
\begin{equation}
   |U|\leq\frac{\delta}{2}+\sqrt{\frac{\delta^2}{4}+\frac{\delta}{c_k}}
\end{equation}
Summarizing, if for a given $\delta$ the size $|U|$ of the support  fulfills the 
last inequality the necessary condition for $R_1\leq\delta$ from Lemma \ref{lemma:approxeigen:supportonu}
is fulfilled as well.
\fi

\subsection{Even Spreading Functions and Laguerre Integrals}
\label{subsec:approxeigen:laguerre}
Simple estimates for $\langle|\Amb_{g\gamma}^{(\alpha)}|^r,\BHScat\rangle$ (and therefore
also for $\lambda_{\max}(Q^*Q)$)
can be found if $\BHScat$ exhibits certain symmetries upon canonical transformations.
Let $T:\Reals^{2n}\rightarrow\Reals^{2n}$ be the transformation 
$T(\nu)=L\cdot\nu+c$ with a $2n\times 2n$ symplectic matrix\footnote{
  This means that $\sympl(L\mu,L\mu)=\sympl(\mu,\mu)$ for all
  $\mu$. In particular this means that $|\det(L)|=1$ such that the 
  measure $|U|$ is invariant under $L$.  
} $L$ and a phase space translation $c\in\Reals^{2n}$. It is well known that
$|\Amb_{g\gamma}|=|\Amb_{\tilde{g}\tilde{\gamma}}\circ T|$, where $\tilde{g}$ and $\tilde{\gamma}$ are 
related to $g$ and $\gamma$ by unitary transforms which depend on $T$. 
See for example \cite[Chapter 4]{folland:harmonics:phasespace}
for a review on metaplectic representation. We have then:
$\langle |\Amb_{g\gamma}|^r,\BHScat\rangle=\langle |\Amb_{\tilde{g}\tilde{\gamma}}|^r,\BHScat\circ T^{-1}\rangle$.
In particular this means, that we can always rotate, translate and 
(jointly) scale $\BHScat$ to simple prototype shapes. For example, elliptical
(rectangular) shapes can always be transformed to discs (squares) 
centered at the origin. Further symmetries can be exploited as shown exemplary in
the following lemma (for simplicity we consider only $n=1$):
\begin{mylemma}
   Let be $Q$ the operator with spreading function $\chi_U$.
   If the shape of $U$ has the symmetry $\chi_U(\mu)=\chi_U(-\mu)$ then for each $m\geq0$ 
   it holds that:
   \begin{equation}
      \lambda_{\max}(Q^*Q)\geq 
      \left(\frac{1}{|U|}\int_U l_m(\pi(|\mu|^2))d\mu\right)^2
   \end{equation}
   where $|\mu|^2=\mu_1^2+\mu_2^2$ and $l_m$ is the $m$th Laguerre function.
   \label{lemma:approxeigen:infR1:symmetric}
\end{mylemma}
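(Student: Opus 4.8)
The plan is to produce the lower bound by exhibiting one explicit admissible test pair and feeding it into the variational characterisation of the largest eigenvalue. Since $0<|U|<\infty$, the (normalised) spreading function $\chi_U/|U|$ lies in $\Leb{2}(\Reals^{2})$, so by the Parseval identity \eqref{eq:tfanalysis:weyl:parceval} the operator $Q$ is Hilbert--Schmidt, hence compact, and $Q^{*}Q$ is a compact non--negative operator whose top eigenvalue obeys $\lambda_{\max}(Q^{*}Q)=\lVert Q\rVert^{2}=\sup_{\lVert g\rVert_2=\lVert\gamma\rVert_2=1}|\langle g,Q\gamma\rangle|^{2}$. Consequently every normalised pair $(g,\gamma)$ already furnishes $\lambda_{\max}(Q^{*}Q)\geq|\langle g,Q\gamma\rangle|^{2}$, and the whole task reduces to choosing $(g,\gamma)$ so that the right--hand side becomes the asserted Laguerre integral. (I take $Q$ with spreading function $\chi_U/|U|$, the normalisation of Lemma \ref{lemma:approxeigen:localization} under which the stated $1/|U|$ prefactor appears.)

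Next I would unfold $\langle g,Q\gamma\rangle$ through the spreading representation of Lemma \ref{lemma:tfanalysis:linop:spreading}. Writing $Q=\tfrac{1}{|U|}\int_U\Shift_\mu^{(\alpha)}\,d\mu$ in the weak sense and using the definition \eqref{eq:tfanalysis:crossamb} of the cross ambiguity function gives $\langle g,Q\gamma\rangle=\tfrac{1}{|U|}\int_U\langle g,\Shift_\mu^{(\alpha)}\gamma\rangle\,d\mu=\tfrac{1}{|U|}\int_U\Amb_{g\gamma}^{(\alpha)}(\mu)\,d\mu$. I would then take $g=\gamma=h_m$, the $\Leb{2}(\Reals)$--normalised $m$-th Hermite function, and insert the closed form of its auto--ambiguity function. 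The classical identity needed is that for polarisation $\alpha=0$ the symmetric auto--ambiguity function of $h_m$ is radial and real, $\Amb^{(0)}_{h_m h_m}(\mu)=l_m(\pi|\mu|^{2})$ with $l_m$ the $m$-th Laguerre function; one checks this by computing the symplectic Fourier transform of the Wigner distribution of $h_m$ (a Laguerre polynomial times a Gaussian), and the normalisation is fixed by $\Amb^{(0)}_{h_m h_m}(0)=1=l_m(0)$. With this, $\langle h_m,Qh_m\rangle=\tfrac{1}{|U|}\int_U l_m(\pi|\mu|^{2})\,d\mu$, and combining with the first step yields $\lambda_{\max}(Q^{*}Q)\geq\left(\tfrac{1}{|U|}\int_U l_m(\pi|\mu|^{2})\,d\mu\right)^{2}$ for every $m\geq 0$, which is the claim. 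As a consistency check, for $m=0$ and $U$ a disc one recovers exactly the value $l(|U|)=2(1-e^{-|U|/2})/|U|$ quoted in Theorem \ref{thm:approxeigen:Ep3}.

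Here the symmetry hypothesis $\chi_U(\mu)=\chi_U(-\mu)$ enters to keep the construction clean and to connect it to the self--adjointness used in Theorem \ref{thm:approxeigen:Ep3}: for $\alpha=0$ one has $(\Shift_\mu^{(0)})^{*}=\Shift_{-\mu}^{(0)}$, so $Q^{*}=\tfrac{1}{|U|}\int_U\Shift_{-\mu}^{(0)}\,d\mu=\tfrac{1}{|U|}\int_{-U}\Shift_\mu^{(0)}\,d\mu=Q$ precisely because $-U=U$. The Hermite vectors are then genuine trial functions in the Rayleigh quotient of the self--adjoint $Q$, and they become exact eigenfunctions only in the rotationally symmetric case (a disc, with $\BHScat$ of elliptical symmetry), which is exactly why for a general symmetric $U$ one obtains only an inequality.

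The step I expect to be the main obstacle is pinning down the Hermite--Laguerre identity with the correct normalisation and phase in the paper's $\alpha$-conventions. For $\alpha\neq 0$ the auto--ambiguity function carries the chirp phase $e^{i2\pi\alpha\mu_1\mu_2}$ whose imaginary part is \emph{not} annihilated by the mere point symmetry $U=-U$, so the clean real Laguerre integral is genuinely the $\alpha=0$ statement (which is the case invoked in Theorem \ref{thm:approxeigen:Ep3}, where $Q$ is taken in polarisation $0$); once the radial dependence and the constant are verified there, the remaining manipulations are routine.
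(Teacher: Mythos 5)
Your proof is correct, and its computational core --- evaluating $\langle h_m,Qh_m\rangle$ as the Laguerre integral via Klauder's identity $\Amb^{(0)}_{h_mh_m}(\mu)=l_m(\pi|\mu|^2)$ with Hermite trial functions --- is exactly the paper's. Where you diverge is the functional--analytic step that converts this quadratic form into a lower bound on $\lambda_{\max}(Q^*Q)$. The paper first shows that the point symmetry $\chi_U(\mu)=\chi_U(-\mu)$ makes $Q$ Hermitian in polarization $0$ (by computing the spreading function of $Q^*$), reduces to $\lambda_{\max}(Q^*Q)=|\lambda_{\max}(Q)|^2$, and then bounds $\lambda_{\max}(Q)$ from below by the Rayleigh quotient $\langle h_m,Qh_m\rangle$. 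You instead use $\lambda_{\max}(Q^*Q)=\lVert Q\rVert^2\geq|\langle g,Q\gamma\rangle|^2$, valid for any compact operator, which gives the claim with no appeal to normality. This is a genuine (if small) gain: as you essentially observe, it shows the symmetry hypothesis is not needed for the inequality as stated --- it matters only for the sharper structural facts exploited in Theorem \ref{thm:approxeigen:Ep3}, namely that $L$ is realized through eigenfunctions of a self--adjoint $Q$. It also quietly repairs a wrinkle in the paper's own chain: for self--adjoint $Q$ the step $\lambda_{\max}(Q)\geq\langle h_m,Qh_m\rangle$ does not immediately square to the claim if the Rayleigh quotient were negative, whereas $\lVert Q\rVert\geq|\langle h_m,Qh_m\rangle|$ always does. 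One presentational point: the lemma as printed says $Q$ has spreading function $\chi_U$, while the $1/|U|$ prefactor in the bound (and its use in Theorem \ref{thm:approxeigen:Ep3}) presupposes the normalization $\chi_U/|U|$; you adopted the intended convention.
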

\begin{proof}
   The calculation of $\lambda_{\max}(Q^*Q)$ simplifies much for normal operators which 
   involves the investigation of $Q$ only, i.e. $\lambda_{m}(Q^*Q)=|\lambda_{m}(Q)|^2$. 
   For an arbitrary operator $Y$ it follows
   that $\BHSpread_{Y^*}^{(\alpha)}(\mu)=\bar{\BHSpread}^{(\alpha)}_{Y}(-\mu)e^{-i4\pi\alpha\asy(\mu,\mu)}$
   is the spreading function of $Y^*$ in polarization $\alpha$.
   Hence, on the level of spreading functions the normality of $Y$ is equivalent to:
   \begin{equation}
      \BHSpread_{Y}^{(\alpha)}(\mu)\bar{\BHSpread}_Y^{(\alpha)}(\nu)=
      \bar{\BHSpread}_{Y}^{(\alpha)}(-\mu)\BHSpread_{Y}^{(\alpha)}(-\nu)\cdot
      e^{i4\pi\alpha(\asy(\mu,\mu)+\asy(\nu,\nu))}
   \end{equation}
   which can be verified using the rules for $\Shift_\mu^{(\alpha)}$ like \eqref{eq:weyl:shift:alphageneralized}
   and \eqref{eq:weyl:shift:weylcommrelation}. 
   The operator $Q$ has by definition the real spreading function $\chi_U$. 
   Hence the desired symmetry is fulfilled  for $\alpha=0$.
   Let be $h_m$ the $m$th Hermite function. 
   It is known that the ambiguity functions of Hermite functions are given by the Laguerre functions 
   \cite{klauder:radardesign} (see for example also \cite{folland:harmonics:phasespace}).
   Obviously, the maximal eigen--value fulfills:
   \begin{equation}
      \lambda_{\max}(Q)\geq\langle h_m,Qh_m\rangle
      =\frac{1}{|U|}\int_U\langle h_m,\Shift_\mu^{(0)}h_m\rangle d\mu
      =\frac{1}{|U|}\int_U l_m(\pi|\mu|^2)d\mu
      \label{eq:corr:approxeigen:lambdan}
   \end{equation}
   where $l_m(t)=e^{-t/2}L_m^{(0)}(t)$ are the Laguerre functions and
   $L_m^{(0)}$ are the $0$th Laguerre polynomials. 
\end{proof}

\if0
\todo{
Thus, for our case we obtain a hermitian $Q$ if
$\chi_U(\mu)=\chi_U(-\mu)$.
According to \eqref{eq:tfanalysis:kernel} the operator $Q$ has a kernel representation with the kernel:
\begin{equation}
   \begin{split}
      k_{U}(x,y)
      =&\frac{1}{|U|}(\Fourier_2 \chi_U)(y-x,p_\alpha(x,y))
      =\frac{1}{|U|}\int e^{-i2\pi p_\alpha(x,y)z}\chi_U(y-x,z) dz\\
      \overset{\alpha=0}{=}&
      \frac{1}{|U|}\int e^{-i\pi(y+x)z}\chi_U(y-x,z) dz
   \end{split}
\end{equation}
If we let $\chi_U(\nu)=\chi_{T}(\nu_1)\cdot\chi_{[-F(\nu_1),F(\nu_1)]}(\nu_2)$, we get:
\begin{equation}
   \begin{split}
      k_{U}(x,y)
      &=\frac{\chi_{T}(y-x)}{|U|}\cdot\int_{-F(y-x)}^{F(y-x)}e^{-i\pi(y+x)z} dz\\
      &=\frac{2\cdot\chi_{T}(y-x)}{|U|}\cdot\frac{\sin(\pi(y+x)F(y-x))}{\pi(y+x)}
   \end{split}
\end{equation}

\subsubsection{Elliptical Support}
If we let $F(\nu_1)=\sqrt{r^2-\nu_1}$ and $T=[-r,r]$ such that $|U|=\pi r^2$, 
we get the result for the radial symmetric problem, which 
was already discussed in \cite{bracken:wignerbounds}. The kernel is then:
\begin{equation}
   \begin{split}
      k_{U}(x,y)
      &=\frac{2\cdot\chi_{[-r,r]}(y-x)}{\pi r^2}\cdot\frac{\sin(\pi(y+x)\sqrt{r^2-(y-x)^2})}{\pi(y+x)}
   \end{split}
\end{equation}
It is well known that $Q$ commutes with the differential operator $\frac{\partial^2}{\partial x^2}-x^2$ 
(the Hermite operator), hence also has the Hermite functions $h_n$ as its eigen--functions.
It is known that the ambiguity function of Hermite functions are given by the Laguerre functions 
\cite{klauder:radardesign} (see for example also \cite{folland:harmonics:phasespace}).
The eigen--values can be therefore calculated in this case exactly as:
\begin{equation}
   \lambda_n=\langle h_n,Qh_n\rangle=\int_U\langle h_n,\Shift_\mu(0)h_n\rangle d\mu
   =(-1)^n\int_0^{r^2} l_n(t)dt
\end{equation}
where $l_n(t)=e^{-t/2}L_j^{(0)}(t)$ are the Laguerre functions and
$L_j^{(0)}$ the $0$th Laguerre polynomials. It follows that $l_0=1-e^{-r^2}$.

\subsubsection{Rectangular Support}
Let $F(\nu_1)=d$ and $T=[-d,d]$ such that $|U|=4d^2$. It was shown in \cite{} that
$Q$ commutes with the differential operator $\dots$ (Helmholtz equation). The kernel is then:
\begin{equation}
   \begin{split}
      k_{U}(x,y)
      &=\frac{2\cdot\chi_{[-d,d]}(y-x)}{4d^2}\cdot\frac{\sin(\pi d(y+x))}{\pi(y+x)}
   \end{split}
\end{equation}
}
\fi
\if0
\subsection{Case C2 and Positivity}
A general characterization of the requirement $A_2(U)\geq0$ is to the 
authors knowledge unknown. However, some necessary condition for positivity
can be found in the following way.
The ambiguity function $\Amb_{g\gamma}^{(\alpha)}(\mu)$ can also be understood as the 
Fourier transform of the function
$\bar{g}(\cdot+(\frac{1}{2}-\alpha)\mu_1)\gamma(\cdot-(\frac{1}{2}+\alpha)\mu_1)$ at point $\mu_2$.
From this it follows that $2A_2(\mu)$ is the Fourier transform of:
\begin{equation}
   \begin{split}
      T(x)=
      &\bar{g}(x+(\frac{1}{2}-\alpha)\mu_1)\gamma(x-(\frac{1}{2}+\alpha)\mu_1)+\\
      &g(-x+(\frac{1}{2}-\alpha)\mu_1)\bar{\gamma}(-x-(\frac{1}{2}+\alpha)\mu_1)
   \end{split}
\end{equation}
at point $\mu_2$. Note that $T(x)=\bar{T}(-x)$.
Hence, positivity of $A_2(\mu)}$ for a given $\mu_1$
is ensured by Bochner's theorem if $T$ is of positive type. In particular, this implies that
\begin{equation}
   T(0)^2\geq T(x)T(-x)=|T(x)|^2
\end{equation}
for any $x$. For $\alpha=0$ this reads as
\begin{equation}
   \left(\bar{g}(t)\gamma(-t)+g(t)\bar{\gamma}(-t)\right)^2\geq |\bar{g}(2t)\gamma(0)+g(0)\bar{\gamma}(-2t)|^2
\end{equation}
where $t=\mu_1/2$. If we assume further the symmetry $g(t)=\gamma(-t)$ and $g(0)\neq0$
we get that $|g(2t)|\leq |g(t)|^2/|g(0)|$.
\fi

\subsection{Gaussian Signaling and the Corresponding Bounds}
\label{subsec:approxeigen:gaussiansignaling}
\newcommand{\cOne}{e}

The previous part of this section indicates that approximate eigen--functions have to be ''Gaussian--like''. 
Hence it makes sense to consider Gaussian signaling explicitely. For simplicity we do this
for the time--frequency symmetric case $g=\gamma=2^{\frac{1}{4}}e^{-\pi t^2}$ and $n=1$. 
We have the relation:
\begin{equation}
   (\Shift_\nu^{(\alpha)} g)(x) 
   =e^{-\pi\left(2i\sympl(\nu,x\cOne)+i(1-2\alpha)\asy(\nu,\nu)+\nu_1^2\right)}
   g(x)
   \label{eq:approxeigen:numerical:gausshift}
\end{equation}
if we let  $\cOne:=(1,i)$.
According to \eqref{eq:approxeigen:lemmaep1:proof1} the error $E_p(\mu)$ can be calculated as:
\begin{equation}
   \begin{split}
      E_p(\mu)
      =\lVert\int_{\Reals^2} d\nu\, \BHSpread_{\BH}^{(\alpha)}(\nu)e^{-i2\pi\sympl(\nu,\mu)}
      \cdot g\cdot f(\nu,\cdot)\rVert_p\\      
   \end{split}
   \label{eq:approxeigen:gaussian:Ep}
\end{equation}
The function $f:\Reals^2\times\Reals\rightarrow\Complexes$ is defined 
for a particular polarization $\alpha$ as:
\begin{equation}
   \begin{split}
      f(\nu,x)
      &=\begin{cases}
         e^{-\pi[2i\sympl(\nu,x\cOne)+i(1-2\alpha)\asy(\nu,\nu)+\nu_1^2]}-
         e^{-\pi[\frac{\langle\nu,\nu\rangle}{2}-2i\alpha\asy(\nu,\nu)]}
         & \text{for C1}\\
         e^{-\pi[2i\sympl(\nu,x\cOne)+i(1-2\alpha)\asy(\nu,\nu)+\nu_1^2]}-1
         & \text{for C2}
      \end{cases}
   \end{split}
   \label{eq:approxeigen:numberical:fnux}
\end{equation}
where we have used that the ambiguity function in polarization $\alpha$ is
$\Amb_{gg}^{(\alpha)}(\nu)=e^{-\frac{\pi}{2}s_\alpha(\nu)}$ and
$s_\alpha(\nu):=\DotReal{\nu}{\nu}+4i\alpha\asy(\nu,\nu)$.
The following Corollary contains the bounds specialized to the Gaussian case:
\begin{mycorollary}[Gaussian Bounds]
   For the case C1 ($k=1$) and for the case C2 ($k=2$) in polarization $\alpha=0$  it holds
   for any $1\leq p<\infty$ and $1\leq q\leq\infty$ that:
   \begin{equation}
      \frac{E_p(\mu)}{\lVert\BHSpread^{(\alpha)}_{\BH}\rVert_{q}}
      \leq
      32^{\frac{p-2}{4p}}\cdot \lVert k (1-e^{-\frac{\pi}{k}s_0})\chi_U\rVert^{1/p}_{q'/p}
      \label{eq:corr:approxeigen:gaussianbounds:eq1}
   \end{equation}
   where $s_0(\nu):=\langle\nu,\nu\rangle$. For $q>1$ it follows
   from \eqref{eq:corr:approxeigen:gaussianbounds:eq1} also:
   \begin{equation}
      \begin{split}
         \frac{E_p(\mu)}{\lVert\BHSpread^{(\alpha)}_{\BH}\rVert_{q}}
         \leq
         32^{\frac{p-2}{4p}}\cdot k\cdot
         \left(k|U|(1-
           \langle\BHScat, e^{-\frac{\pi}{k}s_0}\rangle\right)^{1/\max(q',p)}
      \end{split}
      \label{eq:corr:approxeigen:gaussianbounds:eq2}
   \end{equation}
   where $\BHScat=\chi_U/|U|$.
   \label{corr:approxeigen:gaussianbounds}
\end{mycorollary}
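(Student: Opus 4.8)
\begin{myproof}
The plan is to obtain both inequalities purely by specializing the general machinery --- Lemma \ref{lemma:approxeigen:lemmaep1} together with the two estimates of Lemma \ref{lemma:approxeigen:lemmaep3} --- to the symmetric Gaussian $g=\gamma=2^{1/4}e^{-\pi t^2}$ at polarization $\alpha=0$. The only genuinely Gaussian inputs needed are a closed form for the function $R$ of \eqref{eq:approxeigen:defR} and a value for $\lVert V\rVert_\infty$; everything else is an application of inequalities already established above.

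First I would record the two elementary Gaussian evaluations. From $\Amb_{gg}^{(\alpha)}(\nu)=e^{-\frac{\pi}{2}s_\alpha(\nu)}$ with $s_0(\nu)=\langle\nu,\nu\rangle$ one gets, via \eqref{eq:approxeigen:Ak}, that $A_1=|\Amb_{gg}^{(0)}|^2=e^{-\pi s_0}$ and $A_2=\Real{\Amb_{gg}^{(0)}}=e^{-\frac{\pi}{2}s_0}$, i.e.\ $A_k=e^{-\frac{\pi}{k}s_0}$ for $k=1,2$ (here $\alpha=0$ is exactly what makes $A_2$ real and drops the symplectic cross term). Consequently the representation $R=k(1-A_k)\chi_U$ noted after \eqref{eq:approxeigen:defR} becomes the explicit $R=k(1-e^{-\frac{\pi}{k}s_0})\chi_U$. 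For the prefactor I would use $\lVert g\rVert_\infty=2^{1/4}$ and the worst-case estimate $\lVert V\rVert_\infty\le\lVert\gamma\rVert_\infty+\lVert B\rVert_\infty\lVert g\rVert_\infty\le 2\lVert g\rVert_\infty=2^{5/4}=32^{1/4}$, which is legitimate in both cases since $\lVert B\rVert_\infty\le 1$ (for C1 by \eqref{eq:tfanalysis:crossamb:properties}, for C2 trivially). Raising to the power $(p-2)/p$ yields the constant $32^{(p-2)/(4p)}$. Chaining Lemma \ref{lemma:approxeigen:lemmaep1} with the first estimate \eqref{eq:lemma:approxeigen:lemmaep3:eq1}, namely $E_p(\mu)/\lVert\BHSpread^{(\alpha)}_{\BH}\rVert_q\le\lVert V_p\rVert_{q'}\le\lVert V\rVert_\infty^{(p-2)/p}R_{q'/p}^{1/p}$, and substituting the explicit $R$, gives \eqref{eq:corr:approxeigen:gaussianbounds:eq1} at once.

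For \eqref{eq:corr:approxeigen:gaussianbounds:eq2} I would first compute $R_1$. By \eqref{eq:approxeigen:R1:special}, $R_1=k(|U|-\langle A_k,\chi_U\rangle)=k(|U|-\int_U e^{-\frac{\pi}{k}s_0}\,d\nu)$, and with $\BHScat=\chi_U/|U|$ this is exactly $R_1=k|U|(1-\langle\BHScat,e^{-\frac{\pi}{k}s_0}\rangle)$, the quantity inside the parenthesis of the claim. Then I would invoke the second estimate \eqref{eq:lemma:approxeigen:lemmaep3:eq2}, $\lVert V_p\rVert_{q'}\le\lVert V\rVert_\infty^{(p-2)/p}C_{pq}R_1^{1/\max(p,q')}$, and use the uniform bound $C_{pq}\le k$ from Lemma \ref{lemma:approxeigen:cpb:uniform}; this is available because $0\le R\le k\chi_U$ pointwise gives $R_\infty\le k$. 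Equivalently one can derive \eqref{eq:corr:approxeigen:gaussianbounds:eq2} directly from \eqref{eq:corr:approxeigen:gaussianbounds:eq1}: when $q'\ge p$ interpolate $R_{q'/p}\le R_\infty^{1-p/q'}R_1^{p/q'}$ and insert $R_\infty\le k$, while when $q'<p$ apply H\"older against $\chi_U$ to obtain $R_{q'/p}\le R_1\,|U|^{(p-q')/(pq')}$; the two regimes produce precisely the exponent $1/\max(q',p)$.

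The main obstacle I expect is bookkeeping the constant uniformly across the two regimes rather than any deep analytic step. In the regime $q'<p$ the factor naturally emerging from $C_{pq}$ is $\max(|U|,1)$ (the leftover $|U|^{(p-q')/(pq')}$ above), not the clean $k$ reported in the statement; to present the bound with the single constant $k$ one must either restrict to $|U|\le 1$, so that this surplus factor is $\le 1\le k$, or absorb it explicitly, and I would flag this. The remaining care is purely in checking the exponents: that $k^{1/p-1/q'}\le k$ for $q'\ge p$, and that the H\"older split gives $1/p=1/\max(q',p)$ when $q'<p$. Both are routine once $R_\infty\le k$ and $0\le R\le k\chi_U$ are in hand.
\end{myproof}
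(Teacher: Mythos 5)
Your proof is correct and follows essentially the same route as the paper's own: evaluate the Gaussian ambiguity function to get $A_k=e^{-\frac{\pi}{k}s_0}$ and $\lVert V\rVert_\infty\le 2\lVert g\rVert_\infty=32^{1/4}$, then chain Lemma \ref{lemma:approxeigen:lemmaep1} with \eqref{eq:lemma:approxeigen:lemmaep3:eq1} for the first bound and with \eqref{eq:lemma:approxeigen:lemmaep3:eq2} plus $C_{pq}\le k$ for the second. Your flag about the $q'<p$ regime is a legitimate catch that the paper glosses over: Lemma \ref{lemma:approxeigen:cpb:uniform} there only yields $C_{pq}\le\max(|U|,1)$, so \eqref{eq:corr:approxeigen:gaussianbounds:eq2} with the clean constant $k$ implicitly requires $|U|\le 1$ (or $\le k$) in that regime, consistent with the paper's small-$|U|$ applications but not stated in the corollary.
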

\begin{proof}   
   We use the abbreviation $A_1=e^{-\pi s_0}$ and  $A_2=\Real{e^{-\frac{\pi}{2}s_\alpha}}$
   as introduced in \eqref{eq:approxeigen:Ak}.
   Only for $\alpha=0$ the case C2 provides an Euclidean distance measure
   in phase space. 
   Equation
   \eqref{eq:corr:approxeigen:gaussianbounds:eq1} of the claim follows
   from Lemma \ref{lemma:approxeigen:lemmaep1} and from \eqref{eq:lemma:approxeigen:lemmaep3:eq1} of
   Lemma \ref{lemma:approxeigen:lemmaep3} together with    
   $\lVert V\rVert_\infty\leq 2\lVert g\rVert_\infty=32^{1/4}$.
   If $q>1$ we can relate this further by 
   \eqref{eq:lemma:approxeigen:lemmaep3:eq2} of
   Lemma \ref{lemma:approxeigen:lemmaep3}
   to weighted norms of ambiguity functions. Using the uniform 
   bound $C_{pq}\leq k$ from Lemma 
   \ref{lemma:approxeigen:cpb:uniform} and the relation for $R_1$
   in \eqref{eq:approxeigen:R1:special} we get Gaussian integrals of the form
   \eqref{eq:corr:approxeigen:gaussianbounds:eq2}
   which can now be solved analytically for some cases. 
   For example, if
   $U$ is a centered disc of radius $\sqrt{|U|/\pi}$ we
   get $\langle \BHScat,e^{\frac{\pi}{k}s_0}\rangle=l(2|U|/k)$ where $l(x)=2(1-e^{-x/2})/x$.
   For a centered square of length $\sqrt{|U|}$ we have instead
   $l(x)=2\text{erf}(\sqrt{\pi x/8})^2/x$.
\end{proof}
\section{Numerical Verification}
\label{sec:numerical}
In this part we shall establish a spreading model with a finite number of random parameters.
We shall need this model to verify numerically the bounds derived in this paper. 
Since several (iterated)
integrals are involved which partially can only be computed numerically we have 
evaluate the achieved accuracy. We aim at computing 
$E_p(\mu)/\lVert\BHSpread_{\BH}^{(\alpha)}\rVert_q$ up to a desired accuracy $\Delta$. 
In our derivation we
will assume that single definite integrals can be computed within a given 
predefined error (for example in using Simpson quadrature).

\subsection{Spreading Model with Finite Number of Parameters}
Let us consider a doubly--dispersive channel model with a finite number 
of fading parameters $c_k$, where $k\in\setZ_K^2$ and 
$\setZ_K=\{0\dots K-1\}$. Each fading contribution has its
\emph{own doubly--dispersive} operation on the input signal, hence the model
is different from the usual (distributional) models 
having a finite number of separated paths with fixed Doppler frequencies.
The spreading function $\BHSpread_{\BH}^{(\alpha)}$ should be of the form:
\begin{equation}
   \BHSpread_{\BH}^{(\alpha)}(\nu)=\sum_{k\in\setZ_K^2}c_k\chi_u(\nu-u(k+o))
   =\sum_{k\in\setZ_K^2}c_k\chi_1(\nu/u-k+o)
   \label{eq:approxeigen:finitespreadingmodel}
\end{equation}
where $\chi_u(y)=\chi_{[0,u]}(y_1)\chi_{[0,u]}(y_2)$ is the characteristic function
of the square $[0,u]\times[0,u]=:[0,u]^2$ and $o=(\tfrac{1}{2},\tfrac{1}{2})$. 
Thus the latter is a disjoint partition of the 
square $[0,Ku]^2$ with area $(Ku)^2$. In other words,
if we fix the support of the spreading function to be $|U|$, then it follows for a $K^2$--sampling
of this area that $u=\sqrt{|U|}/K$. For such a model the $q$--norm of the spreading function 
as needed for the calculation of the ratio $E_p/\lVert\BHSpread_{\BH}^{(\alpha)}\rVert_q$
is:
$\lVert\BHSpread_{\BH}^{(\alpha)}\rVert_q=u^{2/q}\lVert c\rVert_q$
where $\lVert c\rVert_q:=(\sum_k |c_k|^q)^{1/q}$ is simply the $q$th vector norm of the vector 
$c=(\dots,c_k,\dots)\in\Complexes^{K^2}$.
Let us abbreviate $l=l(k)=k+o$. 
With \eqref{eq:approxeigen:numberical:fnux} we get for the integrand in
\eqref{eq:approxeigen:gaussian:Ep}:
\begin{equation}
   \begin{split}
      \int_{\Reals^2} \BHSpread_{\BH}^{(\alpha)}(\nu)
      e^{-i2\pi\sympl(\nu,\mu)}g(x)f(\nu,x) d\nu
      &=\sum_{k\in\setZ_K^2} c_k\cdot g(x) 
      \underbrace{\int_{\Reals^2}\chi_1(\frac{\nu}{u}-l)e^{-i2\pi\sympl(\nu,\mu)}f(\nu,x) d\nu}_{F_k(x)}\\
   \end{split}
   \label{eq:approxeigen:numerical:Fk:1}
\end{equation}
\if0
Let us restrict further evaluations to the case C2 for $\alpha=1/2$. The remaining cases are similar.
Then the functions $F_k(x)$ can be found as:
\begin{equation}
   \begin{split}
      F_k(x)
      &=\int\chi_1(\frac{\nu}{u}-l)e^{-i2\pi\sympl(\nu,\mu)}\left(
        e^{-i2\pi[\sympl(\nu,xc)+(\frac{1}{2}-\alpha)\asy(\nu,\nu)]}g(\nu_1)-1\right)d\nu\\
      &=
      u^2\int_{[0,1]^2}
      e^{-i2\pi u\sympl(\nu+l,\mu)}(e^{-i2\pi u[\sympl(\nu+l, xc)+u(\frac{1}{2}-\alpha)\asy(\nu+l,\nu+l)]}
      g(u(\nu_1+l_1))-1) d\nu\\
      &=
      u^2\int_{[0,1]^2}
      (e^{-i2\pi u[\sympl(\nu+l, \mu+xc)+u(\frac{1}{2}-\alpha)\asy(\nu+l,\nu+l)]}
      g(u(\nu_1+l_1))-e^{-i2\pi u\sympl(\nu+l,\mu)}) d\nu\\
      &\overset{\alpha=1/2}{=}   
      u^2\left(q(l_1,\mu_2+ix)
      s(\mu_1+x)-e^{-i2\pi u \sympl(l,\mu)}s(-\mu_2)s(\mu_1)\right)
   \end{split}
   \label{eq:approxeigen:numerical:Fk:1}
\end{equation}
where we used the abbreviations
\begin{equation}
   \begin{split}
      s(y)
      :=&
      \int_0^1 e^{i2\pi uyx} dx=e^{i\pi uy}\Sinc(u y)\\
      q(y,A)
      :=&
      \int_0^1 e^{-i2\pi uA(s+y)}g(u(s+y))ds\\
      =&
      \frac{e^{-\pi A^2}}{2u}\left(\erf(\sqrt{\pi}[u(y+1)+iA])-\erf(\sqrt{\pi}[uy+iA])\right)
   \end{split}
\end{equation}
We will use this setup to perform the remaining calculations numerically.
\fi
The approximate eigenstructure error reads now as $E_p(\mu)=\lVert\sum_{k\in\setZ_K^2} c_k \cdot g \cdot F_k\rVert_p$.
For $\alpha=1/2$ and case C2 the integral in $F_k(x)$ can be calculated explicitely. In general, however, 
$F_k(x)$ has to be computed numerically up to a certain accuracy $\delta$ (it is a well--defined and
definite integral). Thus, let
the computed value $\tilde{F}_k(x)$ be such that pointwise
$|\tilde{F}_k(x)-F_k(x)|\leq\delta$ for all $x$ and $k$.
We would like to use $\tilde{F}_k(x)$ instead of $F_k(x)$ to compute 
the approximation $\tilde{E}_p(\mu)$ on $E_p(\mu)$. However we have 
to restrict the remaining indefinite integral over $x$ to a finite interval
$I:=[-L,L]$. With $J$ we denote its complement in $\Reals$, i.e. $J:=\Reals\without I$.
Observe from \eqref{eq:approxeigen:numberical:fnux} that $|f|\leq 2$, hence $|F_k|\leq 2u^2$
in \eqref{eq:approxeigen:numerical:Fk:1} and that for a Gaussian $\lVert g\rVert_p^p=1/\sqrt{p}$.
If we choose $\pi L\geq \max(\sqrt{\log(2u^2/\delta)},1)$ we have:
\newcommand{\erfc}{\text{erfc}}
\begin{equation}
   \begin{split}
      \lVert F_k g\cdot\chi_J\rVert_p
      &\leq 2u^2 \lVert g\cdot\chi_J\rVert_p
      =2u^2\erfc(\sqrt{\pi p}L)^{1/p}
      \leq \frac{2u^2}{\left(\pi\sqrt{p}L\right)^{1/p}}e^{-\pi L^2}\\
      &=\lVert g\rVert_p \frac{2u^2}{\left(\pi L\right)^{1/p}}e^{-\pi L^2}
      \overset{\pi L\geq 1}{\leq}
      \lVert g\rVert_p \cdot 2u^2 \cdot e^{-\pi L^2}\leq \delta\lVert g\rVert_p
   \end{split}
\end{equation}
For such a chosen $L$ the integration with respect to $x$ over the interval $I=[-L,L]$ can be performed again within an
accuracy of $\delta$. This yields for the overall calculation error:
\begin{equation}
   \begin{split}
      |E_p(\mu)-\tilde{E}_p(\mu)|
      &\leq \delta+
      \sum_{k} |c_k|\left(
        \lVert(F_k-F_k^{(\delta)})g\cdot\chi_I\rVert_p+
        \lVert F_k g\cdot\chi_J\rVert_p
      \right)\\
      &\leq \left(1+
        2\lVert c\rVert_1\cdot\lVert g\rVert_p\right)\delta
      = \left(1+
        2\lVert c\rVert_1\cdot p^{-\frac{1}{2p}}\right)\delta
   \end{split}
\label{eq:approxeigen:num:Ep}
\end{equation}
If we  choose  $\delta=\Delta\cdot\lVert\BHSpread^{(\alpha)}_{\BH}\rVert_q\cdot(1+2\lVert c\rVert_1\cdot p^{-\frac{1}{2p}})^{-1}$ (and $L$ respectively) we
can guarantee that 
the error on $E_p(\mu)/\lVert\BHSpread^{(\alpha)}_{\BH}\rVert_q$ is below $\Delta$.
\begin{myremark}[Interference Estimates for Statistical Models]
   \label{rem:approxeigen:statisticalmodel}
   Consider the following example: The transmitter sends the signal $\Shift_\mu\gamma$ through the unknown
   channel $\BH$. Let us again for simplicity use
   the finite--parameter spreading model \eqref{eq:approxeigen:finitespreadingmodel} 
   for a support $U$ of square shape. The receiver already knows 
   the vector of fading parameters $c$ for the spreading function $\BHSpread^{(\alpha)}_{\BH}$ of the channel,
   the pulse $g$ and $\gamma$ and the time--frequency slot $\mu$. 
   The normalized $q$--norms $c_q=\lVert c\rVert_q\cdot K^{-2/q}$ 
   of the  $K^2$ fading coefficients characterize the statistical model for
   the spreading such that $\lVert\BHSpread^{(\alpha)}_{\BH}\rVert_q=|U|^{1/q}\cdot c_q$.
   If the contribution of this particular slot $\mu$ is removed from the signal it remains
   $e:=\BH\Shift_\mu\gamma-\lambda(\mu)\Shift_\mu g$. Let us assume that the receiver expects 
   another information in the span of the function $f$ (for example $f=\Shift_\nu g$ could be another
   slot $\nu$). The interference will be $\langle f,e\rangle$.
   Let be $A_f(p)=\lVert f\rVert_{p'}\cdot 32^\frac{p-2}{4p}$.
   We have
   \begin{equation}
      |\langle f,e\rangle|\leq E_p(\mu)\cdot \lVert f\rVert_{p'}
      <A_f(p)\cdot(|U|(1-L^2))^{1/\max(q',p)}\cdot |U|^{1/q}\cdot c_q
   \end{equation}
   With the assumption that $|U|\leq1$ we use $|U|^{1/\max(q',p)+1/q}\leq |U|$ such that:
   \begin{equation}
      |\langle f,e\rangle|
      <A_f(p)\cdot(1-L^2)^{1/\max(q',p)}\cdot |U|\cdot c_q
   \end{equation}
   This means, for different statistical models (characterized by $c_q$) and functions $f$ 
   (characterized by $\lVert f\Vert_{p'}$ in the quantity $A_f(p)$) we can 
   characterize the amount of interference. 
\end{myremark}
\subsection{Numerical Experiments}
We will consider now the case where the coefficients $c_k$ of the 
vector $c\in\Complexes^{K^2}$ are identical, independent and normal distributed which
refers to the doubly--dispersive Rayleigh fading channel. The 
square shaped support $U$ has a random size $|U|$ taken from a
distribution uniformly on the interval $[10^{-3}, 10^{-2}]$ corresponding
to values of the time--frequency spread relevant in mobile 
communication.
Each realization of the fading factor $c$ and $u=\sqrt{|U|}$ parameterize via \eqref{eq:approxeigen:finitespreadingmodel}
a random spreading function $\BHSpread_{\BH}^{(\alpha)}$ in a given polarization $\alpha$
which give itself rise to a random channel operator $\BH$ by
Lemma \ref{lemma:tfanalysis:linop:spreading}. 
On this random channel we
have evaluated $E_p(\mu)$ for Gaussian signaling as described previously
in Section \ref{subsec:approxeigen:gaussiansignaling}. 
For each realization we have taken $\mu$ uniformly from $[-5,5]^2$.
We have calculated $N=1000$ Monte Carlo (MC) runs for different values of $p$ and $q$. 
For each run $E_p(\mu)/\lVert\BHSpread_{\BH}^{(\alpha)}\rVert_q$ has been computed (corresponding
to one point in Fig.\ref{fig:approxeigen:ep:C1:2:2} and Fig.\ref{fig:approxeigen:ep:C1:3:15}) up to an accuracy
of $\Delta=10^{-8}$.
The computed
values $E_p(\mu)$ are compared to the uniform bound in
\eqref{eq:approxeigen:ep:uniform} of Lemma \ref{lemma:approxeigen:ep:uniform} 
which depends only on the support and is valid for any normalized $g$ and $\gamma$.
Improved bounds are valid only for particular $g$ and $\gamma$ like
the Laguerre/Gauss (GL) bound from Theorem \eqref{thm:approxeigen:Ep3}.
Fig.\ref{fig:approxeigen:ep:C1:2:2} shows the case C1 for $p=q=2$, where we expect
the most tight results. 
The  GL bound improves the uniform estimates approximately by a factor of 10. However
the computed MC values are still below this estimate by a factor of approximately two. The 
latter estimate degrades to a factor of approximately $10$ for $p=3$ and $q=3/2$ 
as displayed in Fig.\ref{fig:approxeigen:ep:C1:3:15}.

\begin{figure}
   \includegraphics[width=\linewidth]{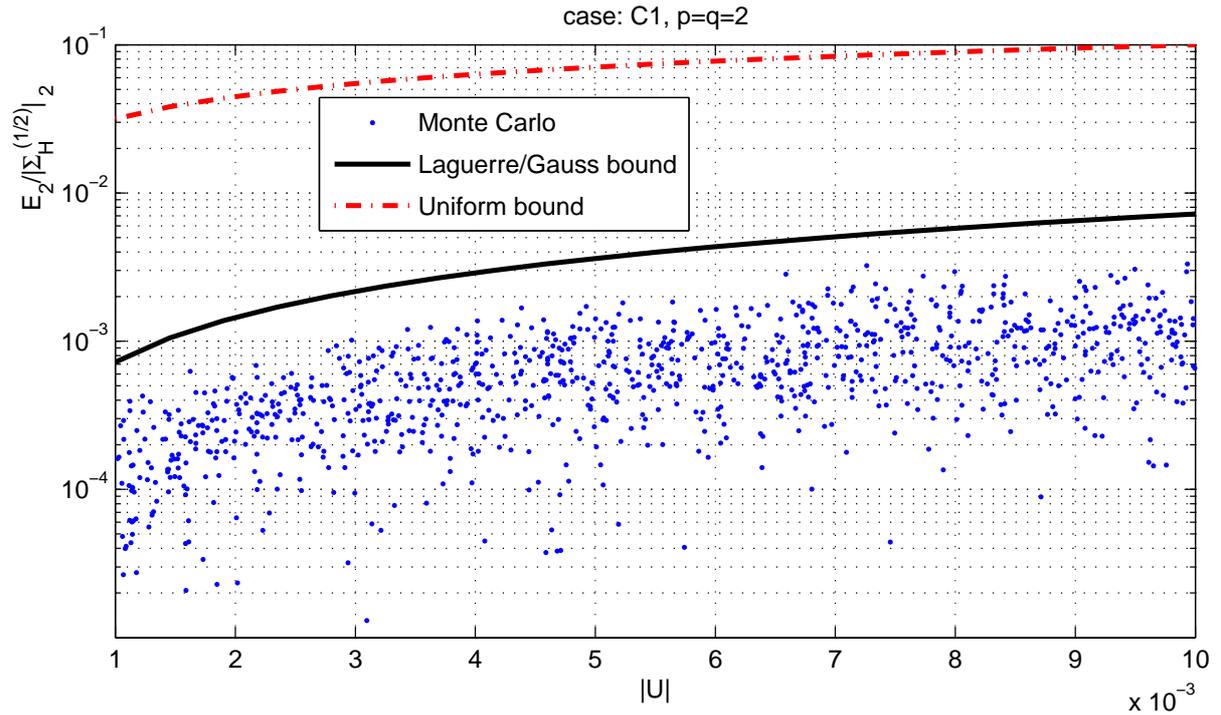}
   \caption{\emph{Approximate Eigenstructure for the case C1, $p=2$, $q=2$:} Verification of $1000$
     Monte Carlo runs with the uniform bound in Lemma \ref{lemma:approxeigen:ep:uniform} and the 
     optimized Laguerre/Gauss bound  of Theorem \ref{thm:approxeigen:Ep3}.     
   }
   \label{fig:approxeigen:ep:C1:2:2}
\end{figure}

\begin{figure}
   \includegraphics[width=\linewidth]{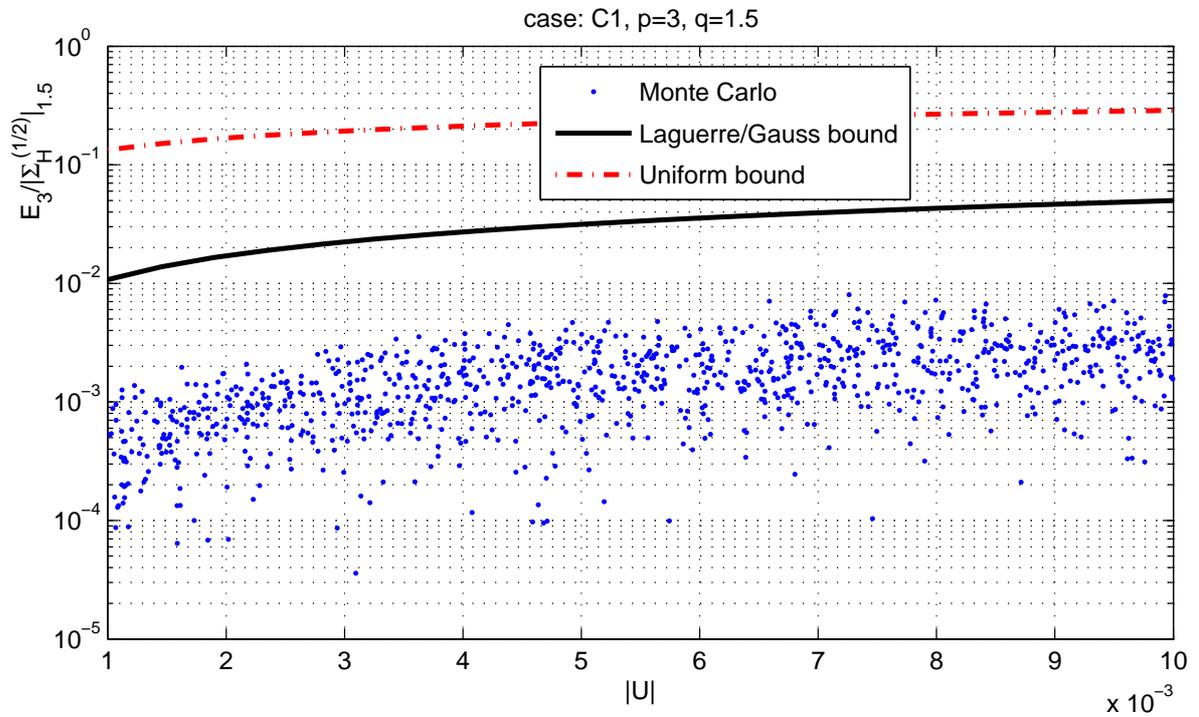}
   \caption{\emph{Approximate Eigenstructure for the case C1, $p=3$, $q=1.5$:} Verification of $1000$
     Monte Carlo runs with the uniform bound in Lemma \ref{lemma:approxeigen:ep:uniform} and the 
     optimized Laguerre/Gauss bound  of Theorem \ref{thm:approxeigen:Ep3}.     
   }
   \label{fig:approxeigen:ep:C1:3:15}
\end{figure}
%

%
\section{Conclusions}
In this paper we have considered doubly--dispersive channels with compactly supported spreading. 
We have shown to what level of approximation error a description
as simple multiplication operators is valid. 
We have focused on two well known choices of such a description, i.e. the
multiplication with the (generalized) Weyl symbol of the operator and the case
of Wigner smoothing. 
We found that in both cases the approximation errors can be 
bounded by the size of the support of the spreading function. Our estimates improve 
and generalize recent results in this direction. Furthermore we have drawn the relation to 
localization operators and fidelity measures known from the theory of pulse shaping.
Finally, we have verified our estimates using Monte Carlo methods with a precise
control of the numerical uncertainties.
\section*{Acknowledgments}
\addcontentsline{toc}{section}{Acknowledgment}
The author would like to thank Gerhard Wunder, 
Thomas Strohmer and Holger Boche for helpful discussions on this topic. 
The author extents a special thanks to the anonymous reviewers for their
constructive comments, which have improved the paper.

%
%
\appendix

\subsection{Proof of Lemma \ref{lemma:gmc:underspread:generalupperbound}}
\label{appendix:lemma:gmc:underspread:generalupperbound:proof}
The following proof is motivated by \cite{kozek:thesis}.\\
\begin{myproof}
   For each complex Hilbert space with $\lVert x\rVert_2^2=\langle x,x\rangle$
   the following inequality 
   \begin{equation}
      \begin{split}
         \lVert x-y\rVert_2^2
         &\leq\underbrace{\lVert x\rVert_2^2-\lVert y\rVert_2^2}_{\text{(a)}}+
         2\underbrace{|\langle y,x-y\rangle|}_{\text{(b)}} \end{split}
   \end{equation}
   holds. Now let $x=\BH\Shift_\mu^{(\alpha)}\gamma$ 
   and $y=\BHWeyl^{(\alpha)}_{\BH}(\mu)\Shift_\mu^{(\alpha)}\gamma$.
   Using \eqref{eq:tfanalysis:noncomm:shiftproperty} the following 
   upper bounds
   \begin{equation}
      \begin{split}
         \text{(a)} 
         &=   \langle\gamma,\left(\Shift_\mu^{(\alpha)*}\BH^*\BH\Shift_\mu^{(\alpha)}-
           \BHWeyl^{(\alpha)}_{\BH^*\BH}+\BHWeyl^{(\alpha)}_{\BH^*\BH}-|\BHWeyl^{(\alpha)}_{\BH}|^2\right)\gamma\rangle\\
         &\leq|\BHWeyl^{(\alpha)}_{\BH^*\BH}-|\BHWeyl^{(\alpha)}_{\BH}|^2|+
         |\langle\gamma,\left(\Shift_\mu^{(\alpha)*}\BH^*\BH\Shift_\mu^{(\alpha)}-
           \BHWeyl^{(\alpha)}_{\BH^*\BH}\right)\gamma\rangle|\\
         &=|\BHWeyl^{(\alpha)}_{\BH^*\BH}-|\BHWeyl^{(\alpha)}_{\BH}|^2|+|\int_{\Reals^{2n}}\BHSpread^{(\alpha)}_{\BH^*\BH}(\nu)e^{-i2\pi\sympl(\nu,\mu)}(\Amb^{(\alpha)}_{\gamma\gamma}(\nu)-1)d\nu\,|\\
         &\leq|\BHWeyl^{(\alpha)}_{\BH^*\BH}-|\BHWeyl^{(\alpha)}_{\BH}|^2|+\lVert\BHSpread^{(\alpha)}_{\BH^*\BH}\Omega\rVert_1\\
         \text{(b)} 
         &=|\BHWeyl^{(\alpha)}_{\BH}(\mu)|\cdot
         |\langle\gamma,\left(\Shift_\mu^{(\alpha)*}\BH\Shift_\mu^{(\alpha)}-\BHWeyl^{(\alpha)}_{\BH}\right)\gamma\rangle|\\
         &=|\BHWeyl^{(\alpha)}_{\BH}(\mu)|\cdot|\int_{\Reals^{2n}}\BHSpread^{(\alpha)}_{\BH}(\nu)e^{-i2\pi\sympl(\nu,\mu)}(\Amb^{(\alpha)}_{\gamma\gamma}(\nu)-1) d\nu\,|
         \leq|\BHWeyl^{(\alpha)}_{\BH}(\mu)|\cdot\lVert\BHSpread^{(\alpha)}_{\BH}\Omega\rVert_1
      \end{split}
   \end{equation}
   will give the proposition.
\end{myproof}

\subsection{Proof of Lemma \ref{lemma:approxeigen:lemmaep3}}
\label{appendix:lemma:approxeigen:lemmaep3:proof}
\begin{myproof}  
   Firstly -- note that H\"older's inequality for the index pair $(1,\infty)$ gives
   $V^p_p\leq V_{\infty}^{p-2}\cdot V_2^{2}$
   with equality for $p=2$; and in turn $\lVert V_p\rVert_{q'}\leq
   \lVert  V_{\infty}^{\frac{p-2}{p}}\cdot V_2^{\frac{2}{p}}\rVert_{q'}$. For $q>1$ we can 
   rewrite this and use again H\"olders inequality. We get:
   \begin{equation}
      \lVert V_p\rVert_{q'}\overset{q>1}{=}
      \lVert  V_{\infty}^{\frac{q'(p-2)}{p}}\cdot V_2^{\frac{2q'}{p}}\rVert_1^{1/q'}
      \leq \lVert V\rVert_\infty^\frac{p-2}{p}\cdot\lVert V_2^2\rVert_{q'/p}^{1/p}
      \overset{\eqref{eq:approxeigen:defR}}{=} \lVert V\rVert_\infty^\frac{p-2}{p}\cdot R^{1/p}_{q'/p}
      \label{eq:approxeigen:lemmaep3:proof03}
   \end{equation}
   For $q=1$ (i.e. $q'=\infty$) we obtain rhs of the last equation directly:
   \begin{equation}
      \lVert V_p\rVert_\infty=
      \lVert  V_{\infty}^{\frac{p-2}{p}}\cdot V_2^{\frac{2}{p}}\rVert_\infty
      \leq \lVert V\rVert_\infty^\frac{p-2}{p}\cdot R^{1/p}_{\infty}
   \end{equation}
   which proves \eqref{eq:lemma:approxeigen:lemmaep3:eq1} of this lemma. 
   From the definition of $R$ in \eqref{eq:approxeigen:defR} it is obvious that the minimum of the bounds 
   is taken at $B(U)=\Amb^{(\alpha)}_{g\gamma}(U)$ which is provided by C1.
   Because there it holds always equality for $p=2$ this is also the optimizer for 
   $\lVert V_2\rVert_{q'}$ for any $q$. 
   From \eqref{eq:approxeigen:lemmaep3:proof03} 
   we get further for $p\leq q' <\infty$:
   \begin{equation}      
      R^{1/p}_{q'/p}=R_\infty^{1/p}\lVert R/R_\infty\cdot \chi_U\rVert^{1/p}_{q'/p}
      \leq R_\infty^\frac{q'-p}{q'p} R^{1/q'}_{1}
      \label{eq:approxeigen:lemmaep3:proof04}
   \end{equation}
   because in this case $(R(\nu)/R_\infty)^{q'/p}\leq R(\nu)/R_\infty$ for all $\nu\in U$. 
   For $q'=p$ equality occurs in the last inequality. 
   This proves \eqref{eq:lemma:approxeigen:lemmaep3:eq1} of this lemma
   for $q'\geq p$.
   For $q'<p$ we use the concavity of $R^{q'/p}$, i.e. we proceed instead as follows:
   \begin{equation}           
      \begin{split}
         R^{1/p}_{q'/p}
         &=
         \left(|U|\cdot\lVert R^{q'/p}\chi_U/|U|\,\rVert_1\right)^{1/q'}\\
         &\leq
         |U|^{1/q'}\cdot\lVert R\chi_U/|U|\,\rVert_1^{1/p}
         \leq |U|^\frac{p-q'}{q'p}\cdot R_1^{1/p}
      \end{split}
      \label{eq:approxeigen:lemmaep3:proof05}
   \end{equation}
   The bounds \eqref{eq:approxeigen:lemmaep3:proof04} and \eqref{eq:approxeigen:lemmaep3:proof05} agree
   for $q'=p$ and are tight for $q=p=2$. 
\end{myproof}

\if0
\subsection{Proof of Corollary \ref{corr:ambbound:supp}}
\label{appendix:lemma:gmc:ambbound:supp}
\begin{myproof}
   The proof is similar to the proof which can be found in \cite{jung:isit06} 
   (we have used here a different normalization which is suited to the applications
   in this paper later on).
   We have from Theorem \ref{thm:jung:fidelitybound}:
   \begin{equation}
      \begin{split}
         N_r(\chi_U)
         &=\min_{p\geq\max(1,2/r)}
         \left(\frac{2}{rp}\right)^{\frac{1}{p}}\lVert\chi_U\rVert_\frac{p}{p-1}\\
         &=\min_{p\geq\max(1,2/r)}
         \left(\frac{2}{rp}\right)^{\frac{1}{p}}|U|^\frac{p-1}{p}
         =:\min_{p\geq\max(1,2/r)} f(p)
      \end{split}
   \end{equation}
   Observing that $\ln f(p)=\ln|U|-\tfrac{1}{p}\ln\tfrac{|U|rp}{2}$ we get for
   the derivative $f'(p)$:
   \begin{equation}
      f'(p)=\frac{f(p)}{p^2}\left(\ln\frac{|U|rp}{2}-1\right)
   \end{equation}
   and from $f'(p)=0$ the unconstrained minimum $f(p_{\min})$ at $p_{\min}=\tfrac{2e}{|U|r}$:
   \begin{equation}
      f(p_{\min})=|U|\cdot e^{-\frac{|U|r}{2e}}
      \label{eq:approxeigen:support:ep:proof1}
   \end{equation}
   This is solution is feasible if $p_{\min}\geq\max(1,2/r)$ which is equivalent
   to $|U|\leq e\cdot\min(1,2/r)$. Otherwise the constrained minimum is at
   the boundary, thus at $r^*/r=\max(1,2/r)$ where $r^*=\max(r,2)$ with value 
   \begin{equation}
      f(r^*/r)=|U|\cdot\left(\frac{2}{r^*|U|}\right)^{r/r^*}
   \end{equation}
   From Theorem \ref{thm:jung:fidelitybound} it follows also that equality can not be achieved.
\end{myproof}
\fi
\if0
\subsection{Gaussian Signaling}
\label{appendix:gaussiansignaling}
Furthermore, we have
\begin{equation}
   \begin{split}
      (\Shift_\nu^{(\alpha)} g)(x) 
      &\overset{\eqref{eq:weyl:shift:alphageneralized}}{=} 
      e^{-i2\pi(\frac{1}{2}-\alpha)\asy(\nu,\nu)}(\Shift_\nu g)(x)\\
      &=e^{-i2\pi\left([i\nu_1-\nu_2]x+(\frac{1}{2}-\alpha)\asy(\nu,\nu)\right)}
      g(\nu_1)g(x)\\
      &=e^{-i2\pi\left(\sympl(\nu,xc)+(\frac{1}{2}-\alpha)\asy(\nu,\nu)\right)}
      g(\nu_1)g(x)
      \label{eq:approxeigen:numerical:gausshift}
   \end{split}
\end{equation}
where $c=(1,i)$ is used  for the complex mapping, such that 
$i\nu_1-\nu_2=\sympl(\nu,c)$.

The ambiguity function $\Amb^{(\alpha)}_{g_\epsilon g_\epsilon}$ of $g_\epsilon$ is given as:
\begin{equation}
   \begin{split}
      \Amb^{(\alpha)}_{g_\epsilon g_\epsilon}(\nu)
      &=\langle g_\epsilon,\Shift_\nu^{(\alpha)}g_\epsilon\rangle
      =e^{-i2\pi(\frac{1}{2}-\alpha)\asy(\nu,\nu)}\langle g_\epsilon,\Shift_\nu g_\epsilon\rangle\\
      &=e^{-i2\pi(\frac{1}{2}-\alpha)\asy(\nu,\nu)}
      e^{-\frac{\pi}{2}(\langle d_\epsilon\nu,d_\epsilon\nu\rangle-2i\asy(\nu,\nu))}\\
      &=e^{-\frac{\pi}{2}(\langle d_\epsilon\nu,d_\epsilon\nu\rangle-4i\alpha\asy(\nu,\nu))}
      =\Amb^{(\alpha)}_{gg}(d_\epsilon\nu)
   \end{split}
\end{equation}
because $\asy(d_\epsilon\nu,d_\epsilon\nu)=\asy(\nu,\nu)$.
For the error $E_p$ we get:
\begin{equation}
   \begin{split}
      E_p
      &\overset{\eqref{eq:approxeigen:lemmaep1:proof1}}{=}
      \lVert\int d\nu\, \BHSpread_{\BH}^{(\alpha)}(\nu)e^{-i2\pi\sympl(\nu,\mu)}
      (\Shift_\nu^{(\alpha)}-B(\nu)) g_\epsilon\rVert_p\\
      &\overset{\eqref{eq:approxeigen:numerical:gausscale}}{=}
      \frac{(2\epsilon)^\frac{1}{4}}{\epsilon^\frac{1}{2p}}
      \lVert\int d\nu\, \BHSpread_{\BH}^{(\alpha)}(\nu)e^{-i2\pi\sympl(\nu,\mu)}
      (\Shift_{d_\epsilon\nu}^{(\alpha)}-B(\nu)) g_1\rVert_p\\
      &=
      2^\frac{1}{4}\epsilon^\frac{p-2}{p}
      \lVert\int d\nu\, \underbrace{\BHSpread_{\BH}^{(\alpha)}(d_{\epsilon}^{-1}\nu)}_{\BHSpread_\epsilon(\nu)}
      e^{-i2\pi\sympl(\nu,d_{\epsilon}^{-1}\mu)}
      (\Shift_{\nu}^{(\alpha)}-B(d_\epsilon^{-1}\nu)) g_1\rVert_p\\      
      &\overset{}{=}
      2^\frac{1}{4}\epsilon^\frac{p-2}{p}
      \lVert\int d\nu\, \BHSpread_\epsilon(\nu)e^{-i2\pi\sympl(\nu,d_{\epsilon}^{-1}\mu)}
      f(\nu,\cdot)\rVert_p\\      
   \end{split}
\end{equation}
where due to \eqref{eq:approxeigen:numerical:gausshift}
\begin{equation}
   \begin{split}
      f(\nu,x)
      &=((\Shift_{\nu}^{(\alpha)}-B(d^{-1}_\epsilon\nu)) g_1)(x)=
      (e^{-i2\pi[\sympl(\nu,xc)+(\frac{1}{2}-\alpha)\asy(\nu,\nu)]}g_1(\nu_1)-
      B(d^{-1}_\epsilon\nu))g_1(x)\\
      &=g_1(x)\begin{cases}
         e^{-i2\pi[\sympl(\nu,xc)+(\frac{1}{2}-\alpha)\asy(\nu,\nu)]}g_1(\nu_1)-
         e^{-\frac{\pi}{2}(\langle\nu,\nu\rangle-4i\alpha\asy(\nu,\nu))}
         & \text{for C1}\\
         e^{-i2\pi[\sympl(\nu,xc)+(\frac{1}{2}-\alpha)\asy(\nu,\nu)]}g_1(\nu_1)-1
         & \text{for C2}
      \end{cases}
   \end{split}
\end{equation}
\fi
\if0
\subsection{Norm Inequalities}
The aim is to draw norm relations $q\leftrightarrow p$ for function $f:\Reals^d\rightarrow\Complexes$ which are
bounded by some constant $C_1$
(for all $f$ it should hold that $\lVert f\rVert_\infty\leq C_1<\infty$) but $|f|$ itself can be arbitrary small. 
Thus, we searching for a relation
\begin{equation}
   \lVert f\rVert_p\leq C_2\lVert f\rVert_{q,w}
\end{equation}
for any $1\leq p,q<\infty$. It is to expect that for $p>q$ this can always achieved 
by convexity but for $p<q$ the weights will become relevant. In fact we have
the following:
\begin{mylemma}
   Let $f\in\Leb{p}(\Reals^d)$ and $1\leq p,q<\infty$. Then
   \begin{equation}
      \lVert f\rVert_p\leq \begin{cases}
         \lVert f\rVert_{q,w}\cdot\lVert w\rVert_{\frac{p}{p-q}}^{-1/q} & p<q\\
         \lVert f\rVert_q^{q/p}\cdot\lVert f\rVert_\infty^\frac{p-q}{p} &p>q
      \end{cases}
   \end{equation}
   for a weight $w\in\Leb{\frac{p}{p-q}}(\Reals^d)$ in the first case.
\end{mylemma}
\begin{myproof}
   Let $1\leq \alpha\leq\infty$ and $w:\Reals^d\rightarrow\RealsPlus$
   a strict--positive ($w>0$) weight function. We have:
   \begin{equation}
      \begin{split}
         \lVert f\rVert_p
         &=      
         \lVert f w^\frac{1}{q} w^{-\frac{1}{q}}\rVert_p=
         \lVert f^p w^\frac{p}{q} w^{-\frac{p}{q}}\rVert_1^{1/p}\leq
         \lVert f^p w^\frac{p}{q}\rVert_{\alpha}^{1/p}\cdot\lVert w^{-\frac{p}{q}}\rVert_{\frac{\alpha}{\alpha-1}}^{1/p}\\
         &=
         \lVert f w^\frac{1}{q}\rVert_{\alpha p}\cdot\lVert w^{-\frac{1}{q}}\rVert_{\frac{\alpha p}{1-\alpha}}\\
      \end{split}
   \end{equation}
   We like to establish a connection to a $q$--norm and we can play with $\alpha$ and with $w$.
   Consider first $p<q$. Then it is always possible to choose $\alpha p=q$ and it follows
   \begin{equation}
      \begin{split}
         \lVert f\rVert_p
         &=            
         \lVert f w^\frac{1}{q}\rVert_{q}\cdot\lVert w^{-\frac{1}{q}}\rVert_{\frac{qp}{p-q}}=
         \lVert f \rVert_{q,w}\cdot\lVert w^{-\frac{1}{q}}\rVert_{\frac{qp}{q-p}}
         =\lVert f \rVert_{q,w}\cdot\lVert w\rVert_{\frac{p}{p-q}}^{-1/q}\\
      \end{split}
   \end{equation}
   where the last step holds because $1\leq p<q<\infty$.
   A reasonable weight has to be chosen from properties of the possible functions $f$ in 
   consideration.
   For $p>q$ in turn we can find a relation independent of a weight $w$ by
   choosing first $\alpha=1$. Then
   \begin{equation}
      \begin{split}
         \lVert f\rVert_p
         &\leq\lVert f w^\frac{1}{q}\rVert_{p}\cdot\lVert w^{-\frac{1}{q}}\rVert_{\infty}
      \end{split}
   \end{equation}
   Obviously we only have to consider the set where $|f|>0$ and there
   we choose $w:=|f|^{(q-p)q/p}$. Because $\lVert f\rVert_\infty<\infty$ there still
   holds $w>0$. We have:
   \begin{equation}
      \begin{split}
         \lVert f\rVert_p
         &\leq\lVert f\rVert_{p}^{q/p}\cdot\lVert f\rVert^\frac{p-q}{p}_{\infty}
      \end{split}
   \end{equation}
   
\end{myproof}
\subsection{Norm Embedding}
Let $F$ and $K$ be non--negative functions. We aim at a lower bound on
$\lVert FK\rVert_1$ with the help of a real function $W$.
With 
$U=\support{K}$ we will denote the (possible infinite size) support of $K$ and
we abbreviate $W_\infty:=\lVert W\chi_U\rVert_\infty$.
\begin{equation}
   \begin{split}
      \lVert FK\rVert_1
      &=
      \langle F,K\rangle\geq
      \frac{1}{W_\infty}\langle F,|W|\cdot K\rangle\geq
      \frac{1}{W_\infty}\langle F,WK\rangle\\
      &=
      \frac{1}{W_\infty}\langle F,WK\cdot\left(\chi_{\{W>0\}}+\chi_{\{W\leq0\}}\right)\rangle\\
      &\geq
      \frac{1}{W_\infty}\langle F,W\cdot\left(K^+\chi_{\{W>0\}}+K^-\chi_{\{W\leq0\}}\right)\rangle\\
   \end{split}
   \label{eq:approxeigen:normemb:1}
\end{equation}
with constants defined as:
\begin{equation}
   K^+=\underset{x\in\{y|W(y)>0\}}{\einf K(x)}
   \quad\text{and}\quad
   K^-=\underset{x\in\{y|W(y)\leq0\}}{\esup K(x)}
\end{equation}
If $K^+\geq K^-$ we can infer from \eqref{eq:approxeigen:normemb:1} the non--negative lower bound:
\begin{equation}
   \begin{split}
      \lVert FK\rVert_1\geq \frac{K^+}{W_\infty}\langle F,W\rangle
   \end{split}
\end{equation}
From the inequality one can draw a simple volume estimate as follows. Let $K=\chi_U$ and
consider only functions $W$ which are positive on all of $U$. Then $1=K^+\geq K^-$ and we have
$\lVert F\chi_U\rVert_1\geq \langle F,W\rangle/W_\infty$.
\fi
%

\bibliographystyle{IEEEtran}
\bibliography{references}
\end{document}
